\NewCommandCopy{\oldbackref}{\backref}
\renewcommand*{\backref}[1]{
  \hspace{0.1em}\oldbackref{#1}
}
\newcommand\firstcolor[1]{\textbf{#1}}
\newcommand\secondcolor[1]{{\color{RoyalBlue}\textit{#1}}}
\numberwithin{equation}{section}
\theoremstyle{plain}
\newtheorem{theorem}{Theorem}
\newtheorem{proposition}[theorem]{Proposition}
\newtheorem{lemma}[theorem]{Lemma}
\newtheorem{corollary}[theorem]{Corollary}
\theoremstyle{definition}
\newtheorem{condition}[theorem]{Condition}
\newtheorem{definition}[theorem]{Definition}
\renewcommand{\epsilon}{\varepsilon}
\newcommand{\N}{\mathbb{N}}
\newcommand{\R}{\mathbb{R}}
\newcommand{\E}{\mathbb{E}}
\newcommand{\Hcal}{\mathcal{H}}
\newcommand{\tTheta}{\tilde{\Theta}}
\DeclareMathOperator{\supp} {supp}
\DeclareMathOperator{\KL} {KL}
\newcommand\dotprod[2]{\left\langle#1,#2\right\rangle}
\newcommand{\textoversim}[1]{\overset{\text{#1}}{\sim}}
\title{A Bayesian approach to functional regression: theory and computation}
\date{\today}
\newbox{\orcid}\sbox{\orcid}{\includegraphics[scale=0.06]{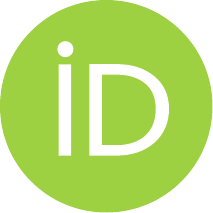}} 
\author[1,2]{%
	\href{https://orcid.org/0000-0003-0728-7748}{\usebox{\orcid}\hspace{1mm}José R.~Berrendero\thanks{\texttt{\href{mailto:joser.berrendero@uam.es}{joser.berrendero@uam.es}}}}%
}
\author[1]{%
	\href{https://orcid.org/0009-0004-7554-9193}{\usebox{\orcid}\hspace{1mm}Antonio Coín\thanks{\texttt{\href{mailto:antonio.coin@uam.es}{antonio.coin@uam.es}} (corresponding author)}}%
}
\author[1,2]{%
	\href{https://orcid.org/0000-0002-7993-0096}{\usebox{\orcid}\hspace{1mm}Antonio Cuevas\thanks{\texttt{\href{mailto:antonio.cuevas@uam.es}{antonio.cuevas@uam.es}}}}%
}
\affil[1]{Departamento de Matemáticas, Universidad Autónoma de Madrid (UAM), Madrid, Spain}
\affil[2]{Instituto de Ciencias Matemáticas ICMAT (CSIC-UAM-UC3M-UCM), Madrid, Spain}
\begin{document}

\maketitle

\begin{abstract}
  We propose a novel Bayesian methodology for inference in functional linear and logistic regression models based on the theory of reproducing kernel Hilbert spaces (RKHS's). We introduce general models that build upon the RKHS generated by the covariance function of the underlying stochastic process, and whose formulation includes as particular cases all finite-dimensional models based on linear combinations of marginals of the process, which can collectively be seen as a dense subspace made of simple approximations. By imposing a suitable prior distribution on this dense functional space we can perform data-driven inference via standard Bayes methodology, estimating the posterior distribution through reversible jump Markov chain Monte Carlo methods. In this context, our contribution is two-fold. First, we derive theoretical results that guarantee strong posterior consistency and contraction at an optimal rate under mild conditions. Second, we show that several prediction strategies stemming from our Bayesian procedure are competitive against other usual alternatives in both simulations and real data sets, including a Bayesian-motivated variable selection method.
  \end{abstract}

\keywords{functional data analysis \and functional regression \and reproducing kernel Hilbert space \and Bayesian inference \and reversible jump MCMC \and posterior consistency}


\section{Introduction}\label{sec:intro}

The problem of predicting a scalar response from a functional covariate is one that has gained traction over the last few decades, as more and more real-world data is being generated with an ever-increasing level of granularity in the measurements. While in principle the functional data could simply be regarded as a discretized vector in a very high dimension, there are often many advantages in taking into account its functional nature, ranging from modeling the correlation among points that are close in the domain, to extracting information that may be hidden in the derivatives of the function in question. As a consequence, numerous proposals have arisen on how to suitably deal with functional data, all of them encompassed under the term Functional Data Analysis (FDA), which essentially explores statistical techniques to process, model and make inference on data varying over a continuum. A partial survey on such methods is \citet{cuevas2014partial} or \citet{goia2016introduction}, while a more detailed exposition of the theory and applications can be found for example in \citet{hsing2015theoretical} and \citet{horvath2012inference}.

FDA is undoubtedly an active area of research, which finds applications in a wide variety of fields, such as biomedicine, finance, meteorology or chemistry \citep{ullah2013applications}. Accordingly, there are many recent contributions on how to tackle functional data problems, both from a theoretical and practical standpoint. Chief among them is the approach of reducing the problem to a finite-dimensional one, for example using a truncated basis expansion or spline interpolation methods \citep{muller2005generalized, aguilera2013comparative}. At the same time, much effort has been put into building a sound theoretical basis for FDA, generalizing different frequentist concepts to the infinite-dimensional framework. Examples of this endeavor include the definition of centrality measures and depth-based notions for functional data \citep{lopez2009concept}, functional ANOVA tests \citep{cuevas2004anova}, a functional Mahalanobis distance \citep{galeano2015mahalanobis, berrendero2020mahalanobis}, or an extension of Fisher's discriminant analysis for random functions \citep{shin2008extension}, among many others. As the name suggests, FDA techniques are heavily inspired by functional analysis tools and methods: Hilbert spaces, linear operators, orthonormal bases, and so on. Incidentally, a notion that also intersects with the classical theory of machine learning and pattern recognition, and that has attained popularity in recent years, is that of reproducing kernel Hilbert spaces (RKHS's) and their applications in functional data problems \citep{kupresanin2010rkhs, yuan2010reproducing, berrendero2018use}. On the other hand, Bayesian inference methods are ubiquitous in the realm of statistics, and though they also make use of random functions, the approach is slightly different from the FDA case. While there are recent works that offer a Bayesian treatment of functional data \citep[e.g.][]{crainiceanu2010bayesian, shi2011gaussian}, there is still no systematic approach to Bayesian methodologies within FDA. It is precisely at this relatively unexplored intersection between FDA and Bayesian methods that our work is aimed.

In particular, our goal is to study functional regression problems, the infinite-dimensional equivalents of the typical statistical regression problems, from a Bayesian perspective. We follow the path started by \citet{ferguson1974prior} of setting a prior distribution on a functional space and using the corresponding posterior for inference. In our case, we use a particular RKHS as the ambient space, resulting in functional regression models that allow for a simple yet efficient Bayesian treatment of functional data. Moreover, we also study the basic theoretical question of posterior consistency in these RKHS models within the proposed Bayesian framework. Consistency and posterior concentration are a type of frequentist validation criteria that have arguably been an active point of research in the last few decades, particularly in infinite-dimensional settings \citep{amewou2003posterior, choi2008remarks}, and also in the functional regression case \citep{lian2016posterior,abraham2020posterior}. To put it simply, posterior consistency ensures that with enough data points, the Bayesian updating mechanism works as intended and the posterior distribution eventually concentrates around the true value of the parameters, supposing the model is well specified. We leverage the properties of RKHS's and certain extensions of classical results by \citet{doob1949application} and \citet{schwartz1965bayes} to show that posterior consistency and contraction at an optimal rate hold in our models with minimal conditions, thus providing a coherent background to our Bayesian approach.

Finally, this theoretical side is complemented by extensive experimentation that showcases the application of the functional RKHS models in various prediction tasks. Following recent trends in Bayesian computation techniques, the posterior distribution is approximated via Markov chain Monte Carlo (MCMC) methods, specifically the \textit{reversible jump} variant (RJMCMC) proposed by \citet{green1995reversible}. This computational work highlights the predictive performance of the proposed functional regression models, especially when compared with other usual frequentist methods.

\subsubsection*{\(L^2\)-models, shortcomings and alternatives}

In this work we are concerned with functional linear and logistic regression models, that is, situations where the goal is to predict a continuous or dichotomous variable from functional observations. Even though these problems can be formally stated with almost no differences from their finite-dimensional counterparts, there are some fundamental challenges as well as some subtle drawbacks that emerge as a result of working in infinite dimensions. To set a common framework, we will consider throughout a scalar response variable \(Y\) (either continuous or binary) which has some dependence on a stochastic \(L^2\)-process \(X=X(t)=X(t, \omega)\) with trajectories in \(L^2[0, 1]\), observed on a dense grid. We will further suppose for simplicity that \(X\) is centered, that is, its mean function \(m(t)=\E[X(t)]\) vanishes for all \(t\in[0,1]\). In addition, we will tacitly assume the existence of a labeled data set \(\mathcal D_n =\{(X_i, Y_i): i=1,\dots, n\}\) of independent observations from \((X, Y)\), and our aim will be to accurately predict the response corresponding to unlabeled samples from \(X\).

The most common scalar-on-function linear regression model is the classical \(L^2\)-model, widely popularized since the first edition (1997) of the pioneering monograph by~\citet{ramsay2005functional}. It can be seen as a generalization of the usual finite-dimensional model, replacing the scalar product in \(\R^d\) for that of the functional space \(L^2[0,1]\) (henceforth denoted by \(\dotprod{\cdot}{\cdot}\)):
\begin{equation}\label{eq:l2-linear-model}
  Y = \alpha + \dotprod{X}{\beta} + \epsilon = \alpha + \int_0^1 X(t)\beta(t)\, dt + \epsilon,
\end{equation}
where \(\alpha\in \R\), \(\epsilon\) is a random error term independent from \(X\) with \(\E [\epsilon]=0\), and the functional slope parameter \(\beta=\beta(\cdot)\) is a member of the infinite-dimensional space \(L^2[0, 1]\). In this case, the inference on \(\beta\) is hampered by the fact that \(L^2[0,1]\) is an extremely broad space that contains many non-smooth or ill-behaved functions, so any estimation procedure involving optimization on it will typically be hard. In spite of this, model~\eqref{eq:l2-linear-model} is not flexible enough to include ``simple'' impact points models based on linear combinations of the marginals of \(X\), such as \(Y=\alpha + \beta_1 X(t_1)+ \cdots + \beta_p X(t_p) + \epsilon\) for some constants \(\beta_j\in\R\) and instants \(t_j\in[0,1]\), which are especially appealing to practitioners confronted with functional data problems; see \citet{berrendero2024functional} for additional details on this. Moreover, the non-invertibility of the covariance operator associated with \(X\), which plays the role of the covariance matrix in the infinite case, invalidates the usual least squares theory \citep{cardot2011functional}. Thus, some regularization or dimensionality reduction technique is needed for parameter estimation; see \citet{reiss2017methods} for a summary of several widespread methods.

A similar \(L^2\)-based functional logistic equation can be derived for the binary classification problem via the logistic function:
\begin{equation}\label{eq:l2-logistic-model}
  \mathbb P(Y=1 \mid X) = \frac{1}{1 + \exp\{-\alpha - \dotprod{X}{\beta}\}},
\end{equation}
where \(\alpha \in \R\) and \(\beta \in L^2[0, 1]\). In the equivalent finite-dimensional problem the natural way of estimating the slope parameter is via its maximum likelihood estimator (MLE). However, apart from the issues outlined above for the linear model, functional settings pose an additional challenge in the logistic case: under fairly general conditions, the MLE does not exist with probability one \citep[see][]{berrendero2023functional}.

It turns out that in both scenarios a natural alternative to the \(L^2\)-model is the so-called reproducing kernel Hilbert space (RKHS) model, which instead assumes the unknown functional parameter \(\beta\) to be a member of the RKHS associated with the covariance function of the process \(X\). As we will show later on, not only is this model simpler and arguably easier to interpret, but it also constrains the parameter space to smoother and more manageable functions. In fact, it does include a model based on finite linear combinations of the marginals of \(X\) as a particular case, while also generalizing the aforementioned \(L^2\)-models under some conditions. These RKHS-based models and their idiosyncrasies have been explored in \citet{berrendero2019rkhs, berrendero2024functional} in the linear case, and in \citet{berrendero2023functional} in the logistic case.

A major aim of this work is to motivate these models within the functional framework, while also providing efficient techniques to apply them in practice. Our main contribution is the proposal of a Bayesian approach for inference in these RKHS models, in which a prior distribution is imposed on \(\beta\) to use the posterior probabilities for prediction. Although placing a prior distribution on a functional space is generally a hard task, the specific parametric formulation we propose greatly facilitates this. Similar Bayesian schemes have recently been explored in \citet{grollemund2019bayesian} and \citet{abraham2024informative}, albeit not in a RKHS setting. Another set of techniques extensively studied in this context are variable selection methods, which aim to select the marginals \(\{X(t_j)\}\) of the process that better summarize it according to some optimality criterion (see \citealp{ferraty2010most} or \citealp{berrendero2016variable} by way of illustration). As it happens, some RKHS-based variable selection methods have already been proposed \citep[e.g.][]{bueno2019variable}, but in general they have their own dedicated algorithms and procedures. As will shortly become apparent, our Bayesian methodology allows us to easily isolate the marginal posterior distribution corresponding to a finite set of points \(\{t_j\}\), providing a Bayesian variable selection process along with the other prediction methods that naturally arise from our approach.

\subsubsection*{Some essentials on RKHS's and notation}\label{sec:rkhs}

The methodology proposed in this work relies heavily on the use of RKHS's, so we will briefly outline the main characteristics of these spaces from a probabilistic point of view \citep[for a more detailed account, see][]{berlinet2004reproducing}. Let us denote by \(K(t, s)= \mathbb E[X(t)X(s)]\) the covariance function (the ``kernel'') of the centered process \(X\), and in what follows suppose that it is continuous. To construct the corresponding RKHS \(\Hcal(K)\), we start by defining the functional space \(\Hcal_0(K)\) of all finite linear combinations of evaluations of \(K\), that is,
\begin{equation}\label{eq:h0}
  \Hcal_0(K) = \left\{ f \in L^2[0,1]: \ f(\cdot) = \sum_{j=1}^p \beta_j K(t_j, \cdot), \ p \in \N, \ \beta_j \in \R, \ t_j \in [0, 1] \right\}.
\end{equation}
This space is endowed with the inner product \(\dotprod{f}{g}_K = \sum_{j, k} \beta_j \gamma_k K(t_j, s_k)\), given that \(f(\cdot)=\sum_j \beta_j K(t_j, \cdot) \) and \(g(\cdot)=\sum_k \gamma_k K(s_k, \cdot)\). Then, \(\Hcal(K)\) is defined to be the completion of \(\Hcal_0(K)\) under the norm induced by the scalar product \(\dotprod{\cdot}{\cdot}_K\). As it turns out, functions in this space satisfy the so-called \textit{reproducing property}: \(\dotprod{K(t, \cdot)}{f}_K = f(t)\) for all \(f \in \Hcal(K)\) and \(t \in [0, 1]\). An important consequence of this identity is that \(\Hcal(K)\) is a space of genuine functions and not of equivalence classes, since the values of the functions at specific points are in fact relevant, unlike in \(L^2\)-spaces.

Now, a particularly useful approach in statistics is to regard \(\Hcal(K)\) as an isometric copy of a well-known space related to \(X\). Specifically, via \textit{Loève's isometry} \citep{loeve1948fonctions} one can establish a congruence \(\Psi_X\) between \(\Hcal(K)\) and the linear span of the process, \(\mathcal L(X)\), in the space of all random variables with finite second moment, \(L^2(\Omega)\) \citep[see Lemma 1.1 in][]{lukic2001stochastic}. This isometry is essentially the completion of the correspondence
\begin{equation*}
  \sum_{j=1}^p \beta_j X(t_j) \longleftrightarrow \sum_{j=1}^p \beta_j K(t_j, \cdot),
\end{equation*}
and can be formally defined, in terms of its inverse, as \(\Psi^{-1}_X(U)(t) = \E[U X(t)]\) for \(U \in \mathcal L(X)\).
Despite the close connection between the process \(X\) and the space \(\Hcal(K)\), special care must be taken when dealing with concrete realizations, since in general the trajectories of \(X\) do not belong to the corresponding RKHS with probability one \citep[][Corollary~7.1]{lukic2001stochastic}. As a consequence, the expression \(\dotprod{x}{f}_K\) is ill-defined and lacks meaning when \(x\) is a realization of \(X\). However, following Parzen's approach in his seminal work \citep[][Theorem~4E]{parzen1961approach}, we can leverage Loève's isometry and identify \(\dotprod{X}{f}_K\) with the random variable in \(\mathcal L(X)\) associated with \(f\in \Hcal(K)\), so that \(\dotprod{x}{f}_K \) with \(x=X(\omega)\) is defined as the image \(\Psi_x(f) := \Psi_X(f)(\omega)\). This notation, viewed as a formal extension of the inner product, often proves to be useful and convenient.

\subsubsection*{Organization of the article}

The rest of the paper is organized as follows. In Section~\ref{sec:methodology} we explain the Bayesian methodology and the functional regression models we propose, including an overview of the reversible jump MCMC scheme. In Section~\ref{sec:consistency} we derive theoretical posterior consistency and posterior concentration results. The empirical findings of the experimentation are contained in Section~\ref{sec:results}, along with a short discussion of computational details. Lastly, the conclusions drawn from this work are presented in Section~\ref{sec:conclusion}. Additional details, proofs and results are included in Appendices~\ref{app:model-choice},~\ref{app:proofs},~\ref{app:experiments} and~\ref{app:source-code}.

\section{A Bayesian methodology for RKHS-based functional regression models}\label{sec:methodology}

In principle, the functional RKHS models contemplated in this work are those obtained by considering a functional parameter \(\beta \in \Hcal(K)\) and replacing the scalar product for \(\dotprod{X}{\beta}_K\) in the \(L^2\)-models~\eqref{eq:l2-linear-model} and~\eqref{eq:l2-logistic-model}, which has tangible benefits both in theory and practice on account of the RKHS properties. However, to further simplify things we will follow a parametric approach and suppose that \(\beta\) is in fact a member of the dense subspace \(\Hcal_0(K)\) defined in~\eqref{eq:h0}, i.e.:
\begin{equation}\label{eq:beta}
  \beta(\cdot) = \sum_{j=1}^p \beta_j K(t_j, \cdot).
\end{equation}
As we said before, with a slight abuse of notation we will understand the expression \(\dotprod{x}{\beta}_K\) as \(\Psi_x(\beta)\), where \(x=X(\omega)\) is a realization of \(X\) and \(\Psi_x\) is Loève's isometry. Hence, taking into account that \(\beta(\cdot)=\sum_j \beta_j K(t_j,\cdot)\) and that \(\Psi_X(K(t, \cdot)) = X(t)\) by definition, we can identify \(\dotprod{x}{\beta}_K\) with \(\sum_j \beta_j x(t_j)\). In addition, we will assume that \(X\) has a version with continuous sample paths, so that point evaluation is well-defined.

Although natural in this context, the crucial assumption that \(\beta \in \Hcal_0(K)\) may seem too restrictive at first glance, since we are essentially truncating the dimensionality of the model. Nevertheless, in this way we get a simpler, finite-dimensional approximation of the functional RKHS model, which we argue reduces the overall complexity while still capturing most of the relevant information. Plus, the simplified model remains ``truly functional'' in the sense that the number of components \(p\) and the time instants \(t_j\) are not fixed beforehand. In short, we are exploiting the RKHS perspective to give a functional nature to the collection of finite-dimensional models based on random linear combinations of the marginals, offering a unified treatment of all the so-called \textit{impact points} models (\citealp[see][]{lindquist2009logistic, kneip2016functional,poss2020superconsistent}).

In view of~\eqref{eq:beta}, to place a prior distribution on the unknown function \(\beta\) (that is, a prior distribution on the functional space \(\Hcal_{0}(K)\)) it suffices to consider a discrete distribution on the number of components \(p\), and then select \(p\)-dimensional continuous priors for the coefficients \(\beta_j\) and the times \(t_j\) given \(p\). Thanks to this parametric approach, the challenging task of setting a prior distribution on a space of functions is considerably simplified, while simultaneously not constraining the model to any specific distribution (in contrast to, say, Gaussian process regression methods). Moreover, note that our simplifying assumption on \(\beta\) is not particularly limiting from a Bayesian point of view, since any distribution \(\mathbb{P}_0\) on \(\Hcal_0(K)\) can be directly extended to a distribution \(\mathbb{P}\) on \(\Hcal(K)\) by defining \(\mathbb{P}(B) = \mathbb{P}_0(B\cap \Hcal_0(K))\) for all Borel sets \(B\) on \(\Hcal(K)\).

Lastly, since the value of \(p\) can vary, we need a way to introduce dimension information in our MCMC posterior approximation scheme. There are several alternatives such as product space formulations \citep{carlin1995bayesian} or Bayesian averaging/model selection methods \citep{hoeting1999bayesian}, but this is precisely the problem that reversible jump samplers were designed to solve. The use of these samplers fits naturally within our framework, as they allow the complexity of the model to be directly chosen by the data, jointly inferring about the dimensionality and the value of the parameters.

\subsection{Functional linear regression}\label{sec:rkhs-linear-model}

In the case of functional linear regression, the simplified RKHS model considered is
\[
  Y = \alpha + \dotprod{X}{\beta}_K + \epsilon = \alpha + \sum_{j=1}^p \beta_j X(t_j) + \epsilon,
\]
where \(\beta(\cdot)=\sum_{j=1}^p\beta_j K(t_j, \cdot) \in \Hcal_0(K)\), \(\alpha\in\R\), and \(\epsilon \sim \mathcal N(0,\sigma^2)\) is an error term independent from \(X\). This model is essentially a finite-dimensional approximation from a functional perspective to the more general RKHS model that assumes \(\beta \in \Hcal(K)\) \citep{berrendero2024functional}. Since the number of components \(p\) is unknown, the full parameter space is \(\Theta = \bigcup_{p\in\N}\Theta_p\), where \(\Theta_p = \R^p \times [0, 1]^p \times \R \times \R^+_0\). In the sequel, a generic element of \(\Theta_p\) will be denoted by \(\theta_p = (\beta_1,\dots, \beta_p, t_1,\dots, t_p, \alpha, \sigma^2) \equiv (b_p, \tau_p, \alpha, \sigma^2)\), though we will occasionally omit the subscript when the value of \(p\) is understood. For \(\theta \in \Theta\), the reinterpreted model in terms of the available sample information is
\begin{equation}\label{eq:rkhs-model-linear-2}
  Y_i \mid X_i, \theta \ \stackrel{\text{ind.}}{\sim} \mathcal N\left(\alpha + \sum_{j=1}^p \beta_j X_i(t_j),\, \sigma^2\right), \quad i =1,\dots, n.
\end{equation}

It is worth mentioning that the model remains linear in the sense that it fundamentally involves a random variable \(\dotprod{X}{\beta}_K = \Psi_X(\beta)\) in the linear span of the process \(X\). Moreover, this RKHS model is particularly suited as a basis for variable selection methods \citep[see][]{berrendero2019rkhs}.

\subsubsection*{Prior distributions}

A simple and intuitive prior distribution \(\Pi\) for the parameter vector \(\theta \in \Theta\), suggested by the structure of the parameter space and usually employed in the literature, is given by
\begin{equation}\label{eq:prior-linear}
  \begin{array}{r@{\ }c@{\ }l@{\quad}l}
    p                    & \sim               & \pi,                   &                \\[0.3em]
    t_j                  & \textoversim{ind.} & \mathcal{U}[0,1],      & j = 1,\dots,p, \\[0.3em]
    \beta_j              & \textoversim{ind.} & \mathcal{N}(0,\eta^2), & j = 1,\dots,p, \\[0.3em]
    \Pi(\alpha,\sigma^2) & \propto            & \dfrac{1}{\sigma^2},   &
  \end{array}
\end{equation}
where  \(\pi\) is any discrete distribution on \(\N\) (e.g.~uniform on a given finite subset) and \(\eta^2 \in \R^+\) is a hyperparameter of the model that depends strongly on the expected scale of the data. On the one hand, note the use of a joint prior distribution on \(\alpha\) and \(\sigma^2\), which is a widely used non-informative prior known as Jeffrey's prior \citep{jeffreys1946invariant}. One should be wary of the Jeffreys-Lindley paradox when assigning improper priors \citep[see][]{robert2014jeffreys}, but since the two parameters involved are common to all possible values of \(p\), this formulation should not present difficulties. On the other hand, the prior on the coefficients \(\beta_j\) (which are interchangeable when coupled with their respective \(t_j\)) is deliberately kept simple and independent of \(p\), as this will greatly speed up and simplify the computations of the RJMCMC sampler later on. Nevertheless, the methods could be adapted to more complicated and dependent priors without too much trouble.

\subsection{Functional logistic regression}\label{sec:rkhs-logistic-model}

In the case of functional logistic regression, we regard the binary response variable \(Y\in\{0, 1\}\) as a Bernoulli random variable given the functional regressor \(X\). Then, following the approach suggested by \citet{berrendero2023functional}, a simplified logistic RKHS model is given by the equation
\begin{equation}\label{eq:rkhs-model-logistic}
  \mathbb P(Y=1 \mid X) = \frac{1}{1 + \exp\{-\alpha - \dotprod{X}{\beta}_K\}}, \quad \alpha \in \R, \ \beta \in \Hcal_{0}(K).
\end{equation}

Indeed, note that this can be seen as a finite-dimensional approximation (with a functional interpretation) to the general RKHS functional logistic model proposed by these authors, which can be obtained by replacing \(\Hcal_{0}(K)\) with \(\Hcal(K)\). After incorporating the sample information, we can rewrite~\eqref{eq:rkhs-model-logistic} in parametric form for \(\theta\in\Theta\) as
\begin{equation}\label{eq:rkhs-model-logistic-2}
  Y_i \mid X_i,\theta \ \stackrel{\text{ind.}}{\sim} \operatorname{Bernoulli}(H_\theta(X_i)), \quad i=1,\dots, n,
\end{equation}
where
\[
  H_\theta(X_i) = \mathbb P(Y_i=1 \mid X_i,\theta) = \frac{1}{\displaystyle 1 + \exp\left\{-\alpha - \sum_{j=1}^p \beta_j X_i(t_j)\right\}}, \quad i=1,\dots, n.
\]

In much the same way as the linear regression model described above, this RKHS-based logistic regression model offers some advantages over the classical \(L^2\)-model. First, it has a more straightforward interpretation and allows for a workable Bayesian approach. Secondly, it can be shown that under mild conditions the general RKHS logistic functional model holds whenever the conditional distributions \(X | Y=i\) (\(i=0,1\)) are homoscedastic Gaussian processes, and in some cases it also entails the \(L^2\)-model~\citep[see][]{berrendero2023functional}; this arguably provides a solid theoretical motivation for the reduced model. Incidentally, these models also shed light on the near-perfect classification phenomenon for functional data, described by \citet{delaigle2012achieving} and further examined for example in the works of \citet{berrendero2018use} or \citet{torrecilla2020optimal}.

Furthermore, a maximum likelihood approach for parameter estimation (although not considered here) is possible as well, since the finite-dimensional approximation  mitigates the problem of non-existence of the MLE in the functional case. However, let us recall that even in finite-dimensional settings there are cases of quasi-complete separation in which the MLE does not exist \citep{albert1984existence}. Additionally, the non-existence issue of the MLE becomes more pronounced as the dimension increases, as exemplified in the theory recently developed by \citet{candes2020phase}. In any event, we argue that the simplified RKHS model presented here is a compelling and feasible approach to functional logistic regression, since it bypasses the main difficulties of the usual maximum likelihood techniques.

\subsubsection*{Prior distributions}

As far as prior distributions go, we proceed as we did in the linear model. However, following the advice in \citet{gelman2008weakly} and \citet{ghosh2018use}, we do change the prior of the coefficients \(\beta_j\) and the constant term \(\alpha\), since their interpretation is different now:
\begin{equation}\label{eq:prior-logistic}
  \begin{array}{r@{\ }c@{\ }l@{\quad}l}
    p       & \sim               & \pi,                   &                \\[0.3em]
    t_j     & \textoversim{ind.} & \mathcal{U}[0,1],      & j = 1,\dots,p, \\[0.3em]
    \beta_j & \textoversim{ind.} & t_5(0,2.5),            & j = 1,\dots,p, \\[0.3em]
    \alpha  & \sim               & \mathrm{Cauchy}(0,10), &
  \end{array}
  \end{equation}
The scaled Student's \(t\) and Cauchy distributions represent robust and weakly informative priors that provide shrinkage and can handle the case of separation in logistic regression. The specific values of the parameters have been chosen via experimentation following the initial recommendations in \citet{ghosh2018use}, which also establishes the need to scale the regressors to have mean 0 and standard deviation 0.5. Note that in this case the nuisance parameter \(\sigma^2\) does not appear in the model.

\subsection{Reversible jump samplers for prediction}\label{sec:rjmcmc}

For the inference step in our Bayesian procedure, the usual approach would be to consider a fixed number of components, chosen separately via some model selection criterion. Instead, we aim to construct a fully Bayesian methodology that allows us to model the number of components and the component parameters jointly. Since we do not impose conjugate priors and the posterior distribution does not have a recognizable shape, a simple way to achieve the desired outcome is to use reversible jump MCMC (RJMCMC) techniques to approximate the posterior. Originally envisioned for approximate inference in Bayesian mixtures, they can be used to sample models with an unknown number of components, providing a certain level of flexibility and theoretically allowing the exploration of the whole parameter space \citep[see][]{richardson1997bayesian}.

The basis of the RJMCMC mechanism is a clever reformulation of the standard MCMC technique: on each iteration, apart from updating the current parameters, it tries to increase or decrease the dimension, creating new components or eliminating some already present. The acceptance fraction for these new moves is selected so that detailed balance as a whole is maintained, but this includes the possibly challenging computation of a certain Jacobian that is the key to dimension matching \citep{green1995reversible}. However, in nested models such as ours, we can simplify this expression by only allowing on each iteration either the birth of a new component or the death of an existing one (i.e.~changing the dimension by one unit at a time). If we also make the proposal distribution for the newly birthed components independent of previous values, the Jacobian term disappears, and the acceptance fraction reduces to \citep{brooks2003efficient}
\[
  \alpha_\text{nested} = \min\left(1, \frac{\mathcal{L}(Y | X, \theta_{p+1})}{\mathcal{L}(Y| X, \theta_p)}\frac{\Pi(\theta_{p+1})}{\Pi(\theta_p)}\frac{b_{p, p+1}}{d_{p+1, p}}\frac{1}{q(\theta_{+1})}\right).
\]
Here \(p\) is the current number of components, \(b_{p, p+1}\) is the probability of increasing the dimension from \(p\) to \(p+1\) (birth), \(d_{p+1, p}\) is the probability of the reverse move (death), \(q(\theta_{+1})\) is the proposal distribution for the added source, \(\Pi\) is the prior probability and \(\mathcal L\) represents the likelihood. This is already a rather manageable expression that can be implemented efficiently, but we introduce another simplification: the proposal distribution is chosen to match the prior distribution, so that the corresponding terms cancel out. Nonetheless, in real-world scenarios one could consider more sophisticated proposal distributions that might better explore the parameter space; see for example \citet{davies2023transport} or \citet{korsakova2024neural} for some interesting ideas.

From a higher level perspective, the RJMCMC algorithm is an iterative procedure that produces a chain of \(M\) approximate samples \((p_m, \theta^*_{p_m})\) of the posterior distribution \(\Pi_n(p, \theta_p| \mathcal D_n)\), each possibly of a different dimension; see Appendix~\ref{app:validation} for a visual representation. Given a previously unseen regressor \(X_{\text{test}}\), with each of these samples we can generate an individual approximate response following our RKHS models, denoted by \(F(X_{\text{test}}, \theta^*_{p_m})\). In the linear case we sample from \(Y | X_{\text{test}}, \theta^*_{p_m}\) as in~\eqref{eq:rkhs-model-linear-2}, and in the logistic case we directly consider the probabilities \(\mathbb{P}(Y=1|X_{\text{test}},\theta^*_{p_m})\) in~\eqref{eq:rkhs-model-logistic-2}. We propose to combine these predictions in four different ways.

\subsubsection*{One-stage methods}

In these methods we essentially make use of the so-called posterior predictive distribution (PP), and they are in turn divided in two approaches.

\paragraph*{(I) Weighted sum (W-PP).} The first idea involves utilizing all available information by summarizing and aggregating every individual RKHS response. We start by considering a point summary statistic \(g\) (such as the mean, median or mode) and consolidating the predictions from all sub-models, each having distinct dimension. Then, we bring together these predictions through a weighted sum, in which the weights correspond to the relative frequencies of the corresponding values of \(p\) (their approximate posterior):
\[
  \hat Y = \sum_{p} \tilde\Pi_n(p\mid\mathcal D_n) g(\{F(X_{\text{test}}, \theta^*_{p_m})\}_{m:p_m=p}),
  \]
where \(\tilde \Pi_n(p|\mathcal D_n) = M^{-1}\sum_m \mathbb I(p_m=p)\) and \(\mathbb I\) is the indicator function. In the logistic case, we employ the usual thresholding procedure to convert the final probability to a binary class label in \(\{0,1\}\). Note that this approach is reminiscent of the factorization \(\Pi_n(p, \theta_p|\mathcal D_n) = \sum_{p}\Pi_n(p|\mathcal D_n)\Pi_n(\theta_p|p, \mathcal D_n)\) of the posterior distribution. We shall see in Section~\ref{sec:results} that this method produces the best results in practice.

\paragraph*{(II) Maximum a posteriori (MAP-PP).} We also consider a MAP strategy, where only the most probable sub-model is used and the rest of the samples are discarded. In this case, if \(\tilde p = \arg\max_p \tilde  \Pi_n(p|\mathcal D_n)\), predictions are computed as \(\hat Y = g(\{F(X_{\text{test}}, \theta^*_{p_m})\}_{m:p_m=\tilde p})\). Although this method may ignore many samples in the posterior approximation, this omission can help reduce noise and prevent outliers from affecting the final prediction.

\subsubsection*{Two-stage methods}

In these methods we focus only on the marginal posterior distribution of \(\tau_p = (t_1, \dots, t_p)\), disregarding the rest of the parameters and effectively constructing a variable selection procedure. After choosing a set of time instants \(\hat \tau_p=(\hat t_1,\dots, \hat t_p)\) using the approximate posterior samples \(\{\tau^*_{p_m}\}\), we can reduce the original functional regressors to just the marginals \(\{X_i(\hat \tau_p)\}=\{(X_i(\hat t_1),\dots, X_i(\hat t_p))\}\) and apply any of the well-known finite-dimensional prediction algorithms suited for this situation.

\paragraph*{(III) Weighted sum (W-VS).} We can mirror the weighted sum approach of the one-stage methods, with prediction computed as
\[
  \hat Y = \sum_{p} \tilde\Pi_n(p\mid \mathcal D_n) G(X_{\text{test}}, \hat \tau_{p}),
\]
where \(\hat \tau_{p}=(\hat t_1, \dots ,\hat t_p)=g(\{\tau^*_{p_m}\}_{m:p_m=p})\) and \(g\) is a component-wise summary statistic. The function \(G\) represents the prediction for \(X_{\text{test}}\) of a regular linear/logistic regression algorithm, fitted with the transformed data set \(\{(X_i(\hat \tau_p), Y_i): i=1,\dots,n\}\).

\paragraph*{(IV) Maximum a posteriori (MAP-VS).} We also consider a MAP approach to variable selection with only the information of the most probable sub-model, i.e., \(\hat Y = G(X_{\text{test}}, \hat \tau_{\tilde p})\).

\section{Posterior consistency}\label{sec:consistency}

This section explores the theoretical foundations of the proposed Bayesian models in the context of predictive inference. In particular, we establish results grounded in the theory of posterior consistency, which provides rigorous justification for the reliability of these models in asymptotic settings. For an in-depth treatment of posterior consistency and related concepts, we refer the reader to \citet{ghosal2017fundamentals}.

Firstly, let us recall what we understand by posterior consistency and posterior contraction rates. Consider the general setting of an i.i.d.\ sample \(X_1,\dots, X_n\) from a random variable \(X\) taking values in a certain space \(\mathcal X\). Let us fix a prior distribution \(\Pi\) for random variables \(\theta\) on the parameter space \(\Theta\), that is, \(\theta \sim \Pi\), and let \(P_\theta\) represent a sampling model (a distribution on \(\mathcal X\) indexed by \(\theta \in \Theta\)) such that \(X | \theta \sim P_{\theta}\). Furthermore, assume that the model is well-specified, i.e., there is a true value \(\theta_{0}\in\Theta\) such that \(X \sim P_{\theta_0}\), and denote by \(P_0^\infty\) and \(P_0^{(n)}\) the joint probability measure of \((X_1, X_2, \dots)\) and \((X_1, X_2, \dots, X_n)\), respectively, when \(\theta_0\) is the true value of the parameter.

\begin{definition}
We say that the posterior distribution is (strongly) consistent at \(\theta_0\) if for every neighborhood \(B\) of \(\theta_0\) (which for a metric space can just be the open balls around \(\theta_0\)) we have
  \[
    \lim_{n\to\infty} \Pi_n(\theta \in B \mid X_1, \dots, X_n) = 1 \quad P_0^\infty-\text{a.s.}
  \]
\end{definition}

Note that the conditional probabilities are computed under the assumed joint distribution of \((\theta, (X_1, X_2,\dots))\). Essentially, we are saying that the posterior concentrates around \(\theta_0\) for almost all sequences of data, and thus the effect of the prior gets diluted as more and more data is available for the inference. Furthermore, when the parameter space carries a metric, say \(d\), we are also interested in the speed at which the posterior approaches the true parameter \(\theta_0\). This is what we call a \textit{posterior contraction rate}, which is a significant refinement of the comparatively weaker concept of consistency.

\begin{definition}
  A sequence \(\epsilon_n\to 0\) is a posterior contraction rate at \(\theta_0\) with respect to \(d\) if for every \(M_n\to\infty\) the posterior satisfies \(\Pi_n(\theta: d(\theta_0, \theta) \geq M_n \epsilon_n | X_1,\dots, X_n) \to 0\) in \(P_0^{(n)}\)-probability.
\end{definition}

Naturally, once we have established a contraction rate, every slower sequence is also a valid contraction rate. Even though we are interested in the fastest possible rate, this is often difficult to compute or does not exist at all. Hence, we are usually content with finding a good enough rate, which we call ``the'' rate of contraction for the model.

In the rest of the section we analyze how these frequentist concepts apply to our Bayesian functional models; we focus on the linear case, but the results we obtain generally hold \textit{mutatis mutandis} in the logistic case. All technical details and proofs are deferred to Appendix~\ref{app:proofs}. For the functional linear model, our data is an i.i.d.\ sample \((X_1, Y_1), \ldots, (X_n, Y_n)\) of a random vector \((X,Y)\), where \(X=X(t)\) is a second-order stochastic process taking values in \(\mathcal X=\mathcal C[0,1]\), the space of continuous functions, and \(Y\) is a random variable taking values in \(\mathcal Y=\R\). The state space is \(\mathcal X \times \mathcal Y = \mathcal C[0,1]\times \R\), which is a complete separable metric space. Meanwhile, the infinite-dimensional parameter space can be characterized via the Euclidean spaces \(\Theta_p = \{(b, \tau, \alpha, \sigma^2): b=(\beta_1,\dots,\beta_p) \in \R^p, \tau=(t_1,\dots,t_p) \in [0,1]^p, \alpha\in \R, \sigma^2 \in \R^+_0\}\) as the infinite union
\begin{equation}\label{eq:parameter-space-union}
  \Theta = \bigcup_{p=1}^\infty \Theta_p,
\end{equation}
and note that given \(\theta \in \Theta\) there is a unique \(p=p(\theta)\) such that \(\theta \in \Theta_p\). As usual, we equip both \(\mathcal X \times \mathcal Y\) and \(\Theta\) with their respective Borel sigma-algebras.

In terms of \(\theta\in\Theta\), the data distribution can be written as \(P_\theta(X,Y)=P_{b, \tau, \alpha, \sigma^2}(X,Y)\), and formally the joint distribution  factorizes as \(P_{\theta}(X,Y)=Q_X P_\theta(Y|X)\). Here \(Q_X\) is the distribution of the underlying process \(X\), and in our RKHS setting, \(P_\theta(Y|X)\) is the normal distribution \(\mathcal N(\alpha + \sum_{j=1}^{p(\theta)} \beta_j X(t_j),\, \sigma^2)\). Moreover, for convenience, we will denote the sequences \((X,Y)_{1:\infty} := (X_1, Y_1), (X_2, Y_2), \dots\) and \((X,Y)_{1:n} := (X_1,Y_1), (X_2, Y_2), \dots, (X_n, Y_n)\). The full hierarchical model under consideration is
\begin{equation}\label{eq:model-linear}
  \begin{aligned}
    \text{(no.\ of components)}\quad & \mathcal P \sim \pi,                                                                                    \\
    \text{(component values)}\quad   & b \mid \mathcal P=p \sim \phi_p,                                                                        \\
    \text{(component times)}\quad    & \tau \mid \mathcal P=p \sim \psi_p,                                                                     \\
    \text{(intercept)}\quad          & \alpha \sim \Gamma,                                                                                     \\
    \text{(error variance)}\quad     & \sigma^2 \sim \Delta,                                                                                   \\
    \text{(observed data)}\quad      & (X,Y)_{1:n} \mid b, \tau, \alpha, \sigma^2 \sim P_{b, \tau, \alpha, \sigma^2}(X,Y) \quad \text{i.i.d.},
  \end{aligned}
\end{equation}
where \(\pi\), \(\phi_p\), \(\psi_p\), \(\Gamma\) and \(\Delta\) are probability measures on \(\N\), \(\R^p\), \([0,1]^p\), \(\R\) and \(\R^+_0\), respectively. Lastly, define the random variable \(\theta = (b, \tau, \alpha, \sigma^2)\), which takes values in \(\Theta\), and denote by \(\Pi\) the prior distribution on \(\theta\) implied by the model in~\eqref{eq:model-linear}.

\subsection{Consistency via Doob's theorem}

It turns out that, under very general conditions, the posterior distribution is always consistent at almost every value of \(\theta_0\) with respect to the measure induced by the prior \citep{doob1949application}. We state this classical result in the general setting introduced earlier.

\begin{theorem}[Doob's consistency theorem]\label{th:doob}
Let the state space \(\mathcal X\) and the parameter space \(\Theta\) be complete separable metric spaces, endowed with their respective Borel sigma-algebras. If \(\theta \mapsto P_\theta\) is one-to-one and \(\theta \mapsto P_\theta(A)\) is measurable for all measurable sets \(A\subseteq \mathcal X\), then the posterior distribution is consistent at \(\Pi\)-almost all values of \(\Theta\). That is, there exists \(\Theta_*\subseteq \Theta\) such that \(\Pi(\Theta_*)=1\) and for all \(\theta_0\in\Theta_*\), if \(X_1,X_2,\ldots \sim P_{\theta_0}\) i.i.d., then for any neighborhood \(B\) of \(\theta_0\) we have
  \[
    \lim_{n\to\infty} \Pi_n(\theta \in B \mid X_1,\dots, X_n) = 1 \quad P_{0}^\infty-\text{a.s.}
  \]
\end{theorem}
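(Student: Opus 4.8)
The plan is to recognize the posterior probabilities as a bounded martingale in $n$, apply the martingale convergence theorem, and then use the identifiability hypothesis — via a measurable–selection argument — to identify the limit.

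First I would fix a joint probability space. Since $\theta\mapsto P_\theta(A)$ is measurable for every Borel $A\subseteq\mathcal X$, the map $\theta\mapsto P_\theta$ is a Markov kernel, so there is a well-defined probability measure $\mu$ on $\Theta\times\mathcal X^{\N}$ that is the law of $(\bm\theta,X_1,X_2,\dots)$ under $\bm\theta\sim\Pi$ and $X_1,X_2,\dots\mid\bm\theta$ i.i.d.\ $P_{\bm\theta}$. Write $\mathcal F_n=\sigma(X_1,\dots,X_n)$ and $\mathcal F_\infty=\sigma(X_1,X_2,\dots)$. Because $\Theta$ is Polish, a regular conditional distribution of $\bm\theta$ given $\mathcal F_n$ exists; fixing one, the quantity $\Pi(\bm\theta\in B\mid X_1,\dots,X_n)$ in the statement is a genuine function of $(X_1,\dots,X_n)$, namely a version of $\E[\I_B(\bm\theta)\mid\mathcal F_n]$.

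Next, for a fixed Borel set $B\subseteq\Theta$ the sequence $M_n:=\E[\I_B(\bm\theta)\mid\mathcal F_n]$ is a uniformly bounded $(\mathcal F_n)$-martingale, so by Lévy's upward convergence theorem $M_n\to\E[\I_B(\bm\theta)\mid\mathcal F_\infty]$ $\mu$-a.s. The heart of the argument is to show that $\bm\theta$ is $\mathcal F_\infty$-measurable up to $\mu$-null sets, so that this limit equals $\I_B(\bm\theta)$. For each fixed Borel $A\subseteq\mathcal X$, the strong law of large numbers applied conditionally on $\bm\theta$ gives $n^{-1}\sum_{i=1}^n\I_A(X_i)\to P_{\bm\theta}(A)$ $\mu$-a.s.; since the Borel $\sigma$-algebra of the Polish space $\mathcal M(\mathcal X)$ of probability measures on $\mathcal X$ is generated by the evaluations $\nu\mapsto\nu(A)$, the map $\omega\mapsto P_{\bm\theta(\omega)}$ agrees $\mu$-a.s.\ with an $\mathcal F_\infty$-measurable map into $\mathcal M(\mathcal X)$. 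Now I would invoke the Lusin--Souslin theorem: an injective Borel map between standard Borel spaces has Borel image and a Borel-measurable inverse on that image. Applied to the injective, measurable $\theta\mapsto P_\theta$, this makes $\bm\theta=(\theta\mapsto P_\theta)^{-1}(P_{\bm\theta})$ an $\mathcal F_\infty$-measurable (mod $\mu$-null) function. Hence $\E[\I_B(\bm\theta)\mid\mathcal F_\infty]=\I_B(\bm\theta)$, and $\Pi(\bm\theta\in B\mid X_1,\dots,X_n)\to\I_B(\bm\theta)$ $\mu$-a.s., for every fixed Borel $B$.

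Finally, I would upgrade the ``for each $B$, $\mu$-a.s.'' statement to the ``for $\Pi$-a.a.\ $\theta_0$, every neighborhood, $P_0^\infty$-a.s.'' form in the theorem. Pick a countable base $\{B_k\}_{k\in\N}$ for the topology of $\Theta$; for each $k$ there is a $\mu$-null set $N_k$ off which $\Pi(\bm\theta\in B_k\mid X_1,\dots,X_n)\to\I_{B_k}(\bm\theta)$, and $N:=\bigcup_kN_k$ is still $\mu$-null. Disintegrating $\mu$ over the prior, $0=\mu(N)=\int_\Theta P_{\theta_0}^\infty\big(N_{\theta_0}\big)\,\Pi(d\theta_0)$, where $N_{\theta_0}$ is the section of $N$ at $\theta_0$; thus $\Theta_*:=\{\theta_0:P_{\theta_0}^\infty(N_{\theta_0})=0\}$ has $\Pi(\Theta_*)=1$. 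For $\theta_0\in\Theta_*$ and $P_{\theta_0}^\infty$-a.e.\ data sequence, $\Pi(\bm\theta\in B_k\mid X_1,\dots,X_n)\to\I_{B_k}(\theta_0)$ simultaneously for all $k$; given any neighborhood $U$ of $\theta_0$, choosing $B_k$ with $\theta_0\in B_k\subseteq U$ yields $\Pi(\bm\theta\in U\mid X_1,\dots,X_n)\ge\Pi(\bm\theta\in B_k\mid X_1,\dots,X_n)\to1$, which is exactly the $P_0^\infty$-almost sure convergence claimed. I expect the main obstacle to be this identification step: turning the probabilistically transparent fact that $P_{\bm\theta}$ is recoverable from $(X_1,X_2,\dots)$ via the SLLN into the statement that $\bm\theta$ itself is recoverable. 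That is precisely where injectivity of $\theta\mapsto P_\theta$ and the completeness--separability of the spaces are used, through the Lusin--Souslin measurable-inverse theorem; the remaining ingredients — martingale convergence, existence of regular conditional distributions, disintegration of $\mu$ over $\Pi$, and the countable-base bookkeeping — are standard on Polish spaces.
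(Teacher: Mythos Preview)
The paper does not actually prove Doob's theorem; it is stated as a classical result and the reader is referred to Chapter~1.3 of \citet{ghosh2003bayesian} for details. There is therefore no ``paper's own proof'' to compare against.

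That said, your argument is correct and is essentially the standard proof: martingale convergence for $\E[\I_B(\bm\theta)\mid\mathcal F_n]$, recovery of $P_{\bm\theta}$ from the data via the SLLN, measurable inversion of $\theta\mapsto P_\theta$ through the Lusin--Souslin theorem (this is where injectivity and the Polish assumptions are used), and a countable-base argument to pass from ``for each $B$'' to ``for all neighborhoods''. One small point worth tightening: when you recover $P_{\bm\theta}$ as an $\mathcal F_\infty$-measurable map, you should apply the SLLN only to a \emph{countable} generating class of sets $A$ (or a countable convergence-determining family of bounded continuous functions), so that the union of the exceptional null sets remains null; your phrasing ``for each fixed Borel $A$'' followed by the claim about the Borel $\sigma$-algebra of $\mathcal M(\mathcal X)$ slightly elides this. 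With that made explicit, the proof is complete.
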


At this point we can follow an approach very similar to the one in \citet{miller2023consistency}, where posterior consistency is established through Doob's theorem in a mixture model with an unknown number of components, which has an infinite-dimensional parameter space that factorizes in the same way as~\eqref{eq:parameter-space-union}. Considering that we will only be interested in small balls around the true value of the parameter, we can define a bounded metric for \(\theta, \theta' \in \Theta\) by
\begin{equation}\label{eq:metric-doob}
  d_\Theta(\theta, \theta')= \begin{cases}
    \min \left\{\|\theta - \theta'\|, 1\right\}, \quad & \text{if } p(\theta)=p(\theta'), \\
    1, \quad                                           & \text{otherwise}.
  \end{cases}
\end{equation}
Since each \(\Theta_p\) is itself a complete separable metric space with the inherited Euclidean norm, Proposition A.1 in \citet{miller2023consistency} ensures that \((\Theta, d_\Theta)\) is a complete separable metric space. Note that \(P_\theta(X,Y)\) is invariant under permutations of the component labels \(b\) and \(\tau\), so in an identifiable model we can only show consistency in the parameter space up to one such permutation. To that effect, let \(S_p\) denote the set of permutations of \(\{1,\dots,p\}\), and for \(\theta \in \Theta_p\) and any \(\nu\in S_p\), denote by \(\theta[\nu]\) the result of applying the permutation \(\nu\) to the component labels of \(\theta\). That is, if \(\theta=(\beta_1,\dots,\beta_p, t_1,\dots, t_p,\alpha,\sigma^2)\), then \(\theta[\nu]=(\beta_{\nu_1},\dots,\beta_{\nu_p}, t_{\nu_1},\dots, t_{\nu_p},\alpha,\sigma^2)\). Next, for \(\theta_0\in\Theta_p\) and \(\epsilon>0\) define the neighborhood \(\tilde{B}(\theta_0, \epsilon)  =\bigcup_{\nu\in S_p} \{\theta \in \Theta: d_\Theta(\theta_0[\nu], \theta) < \epsilon \}\), which consists of all parameters that are within \(\epsilon\) of some permutation of (the component labels of) \(\theta_0\). Now, for identifiability to hold, we need to place some restrictions on the prior.

\begin{condition}[Identifiability constraints] Under the model in~\eqref{eq:model-linear}, for all \(p\in\N\):\label{cond:condition-ident}
  \begin{enumerate}[label=(\roman*)]
    \item \(\Pi(t_i=t_j|\mathcal P=p)=0\) for all \(1\leq i < j \leq p\).\label{cond:condition-ident-1}
    \item There exists \(\delta>0\) such that \(\Pi(|\beta_j|<\delta|\mathcal P=p)=0\) for all \(1\leq j \leq p\).\label{cond:condition-ident-2}
  \end{enumerate}
\end{condition}
Both assumptions can be interpreted as a way of pursuing parsimony in the model, aiming for as few components as possible. In practical and computational terms, we can think of \(\delta\) as the \textit{machine precision number}, so that virtually all continuous prior distributions satisfy the associated condition when implemented numerically in a computer. With this setup in mind, we are ready to state our own consistency result.

\begin{theorem}\label{th:consistency-doob-linear}
  Suppose that Condition~\ref{cond:condition-ident} holds and the covariance function \(K\) of the underlying process \(X\) is strictly positive definite. Then, there exists \(\Theta_*\subseteq \Theta\) such that \(\Pi(\theta \in \Theta_*)=1\) and for all \(\theta_0\in\Theta_*\), if \((X,Y)_{1:\infty} \sim P_{\theta_0}(X,Y)\) i.i.d., then for all \(\epsilon > 0\)
  \[
    \lim_{n\to\infty} \Pi_n(\theta \in \tilde{B}(\theta_0,\epsilon) \mid (X,Y)_{1:n}) = 1 \quad P_{0}^\infty(X,Y)-\text{a.s.}
  \]
  and
  \[
    \lim_{n\to\infty} \Pi_n(\mathcal P=p(\theta_0) \mid (X,Y)_{1:n}) = 1 \quad P_{0}^\infty(X,Y)-\text{a.s.}
  \]
\end{theorem}

The hypothesis of positive definiteness of \(K\) is needed to ensure identifiability. On the other hand, the second conclusion is of certain relevance in itself, because the estimation of the number of components in mixture-like models is a hard problem in general \citep[see][and references therein]{miller2018mixture}. Moreover, it is worth pointing out that the proof of Theorem~\ref{th:consistency-doob-linear} can be easily adjusted to guarantee consistency when the number of components is fixed beforehand and the parameter space is finite-dimensional.

\subsubsection*{A remark on Lebesgue consistency}

All in all, Theorem~\ref{th:consistency-doob-linear} guarantees consistency for \(\Pi\)-almost every parameter in the support of the prior distribution. However, even though we can choose the prior so that \(\supp( \Pi) = \Theta\), in principle there is no assurance that the \(\Pi\)-null set in which consistency may fail will not be a large set with respect to other measures. In fact, when the parameter space is infinite-dimensional there are examples of big inconsistency sets, even for reasonably chosen prior distributions~\citep{diaconis1986consistency}. Nonetheless, this problem can be alleviated when the parameter space is a countable union of disjoint finite-dimensional sets. First, note that there is a natural extension of the Lebesgue measure to our parameter space \(\Theta\): just consider the genuine Lebesgue measure \(\lambda_p\) on \(\Theta_p\), and for all \(B\subseteq \Theta\) measurable define \(\lambda_\infty(B) = \sum_{p=1}^\infty \lambda_p(\Theta_p \cap B)\). Then, if we choose a prior distribution with respect to which this measure is absolutely continuous, the inconsistency set in Theorem~\ref{th:consistency-doob-linear} will satisfy \(\lambda_\infty(\Theta \setminus \Theta_*)=0\) and thus be ``small'' with respect to a Lebesgue-type measure. In our case, the requirement of absolute continuity can be relaxed so that sets with nonzero Lebesgue measure have nonzero prior probability for some permutation of the component labels.

\begin{proposition}\label{prop:consistency-lebesgue-linear}
  Suppose that Condition~\ref{cond:condition-ident} holds. Furthermore, assume that for all \(p\in\N\) we have
  \begin{enumerate}[label=(\roman*)]
    \item \(\Pi(\mathcal P = p) > 0\).\label{cond:condition-lebesgue-1}
    \item \(\sum_{\nu\in S_p} \Pi(\theta[\nu] \in B|\mathcal P = p) = 0\) implies \(\lambda_p(B)=0\), for all \(B\subseteq \Theta_p\) measurable.\label{cond:condition-lebesgue-2}
  \end{enumerate}
  Then the conclusion of Theorem~\ref{th:consistency-doob-linear} remains valid with \(\lambda_\infty(\Theta \setminus \Theta_*)=0\).
\end{proposition}

The first condition is a somewhat technical requirement. The second condition is met, for example, if \(\theta| p\) has a density with respect to the Lebesgue measure that is invariant to permutations of the component labels and positive on all of \(\Theta_p\). A similar approach is considered in \citet{nobile1994bayesian} and \citet{miller2023consistency} to establish ``Lebesgue''-almost sure consistency in finite mixture models with a prior on the number of components.

\subsection{Consistency and contraction rates via Schwartz's theorem}\label{sec:consistency-schwartz}

The Doob-type results of the previous section, although interesting, are still assertions on consistency from the point of view of the prior distribution. We now seek additional results that offer more definitive statements and, more importantly, establish posterior contraction rates. In a nonparametric setting where the object of interest is a probability density, there is a stronger consistency result by \citet{schwartz1965bayes} which omits the \(\Pi\)-almost sure qualification under some more restrictive conditions, though it requires a dominated model and a density to estimate. To this end, we consider the conditional model \(Y|X=x, \theta \sim f_\theta(y|x)\), where \(\theta\in\Theta\) is a parameter vector and \(f_\theta(y|x)\) is the density of \(Y\) given \(X=x\) with respect to the Lebesgue measure \(\lambda\) on \(\mathcal Y\) (which under our assumptions is the normal density given by~\eqref{eq:rkhs-model-linear-2}). We could work with this model and analyze the fixed design case with non-i.i.d.\ data \citep[see][]{choi2008remarks}, but we choose to focus on the arguably more significant random design case.

In general, in infinite-dimensional models there is no reference measure with respect to which to define a density. However, in our specific functional regression setting we can find such a density through the following observation: it follows from the disintegration theorem and a straightforward application of Dynkin's \(\pi\)-\(\lambda\) theorem that the function \((x,y)\mapsto f_\theta(y|x)\) is a joint density of \((X,Y)\) with respect to the product measure \(\rho=Q_X\times \lambda\), where \(Q_X\) is the law of the process \(X\). In this way, denoting \(f_\theta := f_\theta(y|x)\), we can express our full model (with respect to the product measure \(\rho\)) as
\[
  (X, Y)_{1:n} \mid f_\theta \stackrel{\text{i.i.d.}}{\sim} f_\theta \quad \text{and} \quad f_\theta \sim \Pi_{\mathcal F},
\]
where \(\Pi_{\mathcal F}\) is a prior distribution on the parameter class \(\mathcal F=\{f_\theta: \theta \in \Theta\}\), which is the set of all densities following our model relative to the dominating measure \(\rho\) on \(\mathcal X \times \mathcal Y\). In practice, we work with the prior distribution \(\Pi\) on \(\Theta\) specified in~\eqref{eq:model-linear}, and interpret \(\Pi_{\mathcal F}\) as the pushforward measure of \(\Pi\) by the measurable mapping \(\Phi: \Theta \to \mathcal F\) given by \(\Phi(\theta)= f_\theta\). As always, we assume the existence of a ``true'' density \(f_0 :=f_{\theta_0}(y|x)\in \mathcal F\) that generates the observations, and denote \(\Theta_0\equiv \Theta_{p(\theta_0)}\).

Now we introduce a few concepts that lie at the core of most extensions of Schwartz's consistency theorem. Recall that the Kullback-Leibler divergence from \(f_0\) to \(f_\theta\) is defined as \(D_{\mathrm{KL}}(f_0 \,\|\, f_\theta) = \int f_0 \log (f_0/f_\theta)\, d\rho\). We say that \(f_0\) belongs to the \textit{Kullback-Leibler support of \(\Pi_{\mathcal F}\)}, and write it as \(f_0 \in \operatorname{KL}(\Pi_{\mathcal F})\), if \(\Pi_{\mathcal F}\left(f_\theta: D_{\mathrm{KL}}(f_0 \,\|\, f_\theta) < \epsilon\right) > 0\) for all \(\epsilon > 0\). This can be interpreted as saying that the prior places sufficient mass ``near'' \(f_0\). On the other hand, let \(N(\epsilon, \mathcal F, d)\) denote the \(\epsilon\)-covering number of the set \(\mathcal F\) with respect to the distance \(d\), i.e., the minimal number of \(d\)-balls of radius \(\epsilon\) needed to cover \(\mathcal F\). Finally, define the Hellinger distance \(d_H\) between two densities as \(d^2_H(f_\theta, f_{\theta'}) = 1 - \int \sqrt{f_\theta f_{\theta'}}\, d\rho\). The following result will give us consistency in the sense of Schwartz, by means of a sieve that controls the complexity of the model as the sample size \(n\) increases, measured in terms of the \textit{metric entropy} (the logarithm of the \(\epsilon\)-covering number) of the model.

\begin{theorem}[Theorem 6.23 in \citet{ghosal2017fundamentals}]\label{th:consistency-hellinger}
  Suppose that for every \(\epsilon > 0\) there exist measurable sets \(\mathcal F_n \subseteq \mathcal F\) and a constant \(C>0\) such that, for sufficiently large \(n\),
  \begin{enumerate}[label=(\roman*)]
    \item \(\log N(\epsilon, \mathcal F_n, d_H) \leq n\epsilon^2\).
    \item \(\Pi_{\mathcal F}(\mathcal F \setminus \mathcal F_n) \leq \exp\{-Cn\}\).
  \end{enumerate}
Then the posterior distribution is consistent relative to \(d_H\) at every \(f_0\in \KL(\Pi_{\mathcal F})\).
\end{theorem}

Looking at the form of the parameter space \(\Theta\) in~\eqref{eq:parameter-space-union}, a natural choice for the sieve is \(\Theta_n = \bigcup_{p=1}^{p_n} \Theta_p\), for \(p_n=O(h(n))\) and \(h(n)\to \infty\) a function to be determined later, with the corresponding sieve of densities defined as \(\mathcal F_n=\{f_\theta\in\mathcal F: \theta \in \Theta_n \}\). Moreover, to keep the conditions as simple as possible, it is more convenient to work with compact parameter spaces \(\Theta_p\) and adapt the prior distributions in~\eqref{eq:model-linear} accordingly; we assume this is so for the remainder of the section. Plus, we need to impose some smoothness restrictions on the underlying process \(X\).

\begin{condition}[Functional constraints]\label{cond:condition-schwartz-X}
  Suppose that each trajectory \(x\) of \(X\) is \(Q_X\)-almost surely Lipschitz with Lipschitz constant \(L(x)>0\), and assume that \(\mathbb E_{Q_X}[L^2]<\infty\).
\end{condition}

See Appendix~\ref{app:proofs} for a brief discussion and examples of processes that satisfy this condition. We can now state a new Hellinger consistency result for our model, which transforms the requirement that \(f_0\in \KL(\Pi_\mathcal F)\) into a condition on the prior \(\Pi\) on \(\Theta_0\) and places an additional mild constraint on the tails of the marginal prior distribution of \(p\). These conditions guarantee that we can choose a suitable growth order for \(p_n\) in the sieve to apply Theorem~\ref{th:consistency-hellinger}.

\begin{theorem}\label{th:consistency-schwartz-linear}
  Assume Condition~\ref{cond:condition-schwartz-X} holds, and suppose that \(\Pi(p)=O(e^{-\delta p (\log p)^k})\) as \(p\to\infty\), with \(\delta > 0\) and \(k>1\). If \(\Pi\) assigns positive mass to every open neighborhood of \(\theta_0\) in \(\Theta_0\), then the posterior distribution in the model \(\theta \sim \Pi\) and \((X,Y)_{1:n}|\theta \sim f_\theta\) i.i.d.\ is consistent relative to \(d_H\) at \(f_0\), i.e., for every \(\epsilon>0\) it holds that
  \[
    \lim_{n\to\infty} \Pi_n(\theta: d_H(f_0, f_\theta) < \epsilon \mid (X,Y)_{1:n})=1\quad P_0^\infty(X,Y)-\text{a.s.}
  \]
\end{theorem}

An immediate consequence is that if \(\Pi(p)\) is supported on a finite set (as we do in our experiments), then the prior tail condition is automatically verified. Moreover, since it only needs to hold for large values of \(p\), we can always consider a hybrid prior distribution that changes at some large cutoff point to have rapidly decaying tails.

Lastly, to find posterior contraction rates for our model we consider an extension of Theorem~\ref{th:consistency-hellinger} that employs the so-called \textit{second Kullback-Leibler variation}, defined by \(V_2(f_0, f_\theta)=\E_{f_0}[(\log(f_0/f_\theta) - D_{\mathrm{KL}}(f_0 \,\|\, f_\theta))^2]\), to quantify the Kullback-Leibler property of \(f_0\) in terms of convergence speed. In particular, for \(\epsilon>0\) we consider the neighborhood
\[
B_2(f_0, \epsilon) = \{f\in \mathcal F: D_{\mathrm{KL}}(f_0 \,\|\, f_\theta) < \epsilon^2,\  V_2(f_0, f_\theta) < \epsilon^2\}.
\]

\begin{theorem}[Theorem~8.9 in \citet{ghosal2017fundamentals}]\label{th:contraction-rate}
  Suppose that there exist measurable sets \(\mathcal{F}_n \subseteq \mathcal F\) and a constant \( C > 0 \) such that, for a sequence \(\epsilon_n \to 0\) with \( n\epsilon_n^2 \to \infty\), the following hold for sufficiently large \(n\):
  \begin{enumerate}[label=(\roman*)]
    \item \( \log N(\epsilon_n/2, \mathcal{F}_n, d_H) \leq n \epsilon_n^2\).
    \item \( \Pi_{\mathcal F}(\mathcal{F} \setminus \mathcal{F}_n) \leq \exp\{-(C+4)n \epsilon_n^2\}\).
    \item \( \Pi_{\mathcal F}(B_2(f_0, \epsilon_n)) \geq \exp\{-C n\epsilon_n^2\}\).
  \end{enumerate}
  Then the posterior rate of contraction at \(f_0 \) with respect to \(d_H\) is \(\epsilon_n \).
\end{theorem}

Note that the condition \(n\epsilon_n^2\to\infty\) excludes the parametric rate \(n^{-1/2}\), since in nonparametric situations the rates are usually slower. Condition (i) bounds the complexity of the model, condition (ii) merely expresses that subsets of negligible prior mass do not play a role in the rate of contraction, and condition (iii) ensures that sufficient prior mass is put ``near'' the true density \(f_0\). Following a similar proof strategy as in Theorem~\ref{th:consistency-schwartz-linear}, we can find a convenient sieve to apply Theorem~\ref{th:contraction-rate} and prove the following result in our model.

\begin{theorem}\label{th:contraction-rate-schwartz-linear}
Assume Condition~\ref{cond:condition-schwartz-X} holds. Let \(0<\gamma<1/2\) and suppose that the prior distribution satisfies \(\Pi(p)=O(e^{-\delta p \log p})\) as \(p\to\infty\), with \(\delta>16(3-4\gamma)/(1-2\gamma)\). If \(\Pi\) has a density on \(\Theta_0\) with respect to Lebesgue measure that is bounded away from zero in a neighborhood of \(\theta_0\), then the posterior distribution in the model \(\theta \sim \Pi\) and \((X,Y)_{1:n}|\theta \sim f_\theta\) i.i.d.\ contracts at a rate \(\epsilon_n= n^{-\gamma}\) relative to \(d_H\) at \(f_0\), i.e., for every \(M_n\to\infty\) it holds that
\[
    \Pi_n(\theta: d_H(f_0, f_\theta) \geq M_n \epsilon_n \mid (X,Y)_{1:n}) \to 0 \quad \text{in } P_0^{(n)}(X,Y)\text{-probability}.
    \]
\end{theorem}

The requirements of this result are akin to those of Theorem~\ref{th:consistency-schwartz-linear}, but the conclusion is considerably stronger. To verify the condition on the density of \(\Pi\), provided that it exists, it suffices to show that it is continuous and positive at \(\theta_0\) (which our proposed prior distributions satisfy). The contraction rate \(\epsilon_n=n^{-\gamma}\) with \(0<\gamma<1/2\) is optimal in the sense that it is as close as desired to the parametric rate of \(n^{-1/2}\), and it is \textit{minimax optimal} for estimators of \(f_\theta\) given the model \(\mathcal F_n\) \citep[see][p.~198 and references therein]{ghosal2017fundamentals}.

To conclude, it is worth pointing out that a contraction rate relative to \(d_H\) is also a contraction rate relative to any distance bounded above by a multiple of the Hellinger distance, so we immediately have consistency and a contraction rate in more descriptive distances. Let \([h]_M\) be the real-valued function \(h\) truncated to the interval \([-M, M]\).

\begin{corollary}\label{th:corollary-contraction-rate}
  Under the same conditions as in Theorem~\ref{th:contraction-rate-schwartz-linear}, \(\epsilon_n = n^{-\gamma}\) is a posterior contraction rate relative to the total variation distance \(d_1(f_0, f_\theta)=\int |f_0 - f_\theta|\, d\rho\) and the mean-variance discrepancy metric \(d_{2,Q_X}(\theta_0, \theta) = \|[\mu_{\theta_0}]_M - [\mu_\theta]_M\|_{2,Q_X} + |\sigma_0^2 - \sigma^2|\), for any \(M>0\), where \(\|\cdot\|_{2,Q_X}\) is the \(L^2(Q_X)\)-norm and \(\mu_\theta(x)= \alpha + \sum_{j=1}^{p(\theta)} \beta_j x(t_j)\).
\end{corollary}

\section{Experimental results}\label{sec:results}

In this section we present the results of the experiments carried out to test the performance of our Bayesian methods in different scenarios, together with an overview of their computational implementation. Further details such as simulation parameters, execution times or algorithmic decisions, as well as additional experiments, figures and tables are provided in Appendices~\ref{app:model-choice},~\ref{app:experiments} and~\ref{app:source-code}, while the code itself is publicly available on GitHub at \url{https://github.com/antcc/rk-bfr-jump}.

For simulated data we consider \(n=200\) training samples and \(n'=100\) testing samples, with functional regressors measured on an equispaced grid of \(100\) points on \([0, 1]\), and for the real data sets we perform a 66\%/33\% train/test split. We fit our models on the training data, using RJMCMC to sample from the approximate posterior, and then compute out-of-sample predictions on the testing set. We independently repeat the whole process 10 times (each with a different train/test configuration) to account for the stochasticity in the sampling step, and average the results across these executions. For the purposes of prediction we consider the point statistics \textit{trimmed mean} (10\%), \textit{median} and \textit{mode} to aggregate together predictions and summarize parameters (see Section~\ref{sec:rjmcmc}). Moreover, the values of \(\{t_j\}\) that fall outside the specified grid are truncated to the nearest neighbor in the grid, with the additional restriction that time instants in different components of our models cannot be equal.

The Python library used to perform the RJMCMC approximation is \textit{Eryn} \citep{karnesis2023eryn}. It is a toolbox for Bayesian inference that allows trans-dimensional posterior approximation, running multiple chains in an ensemble configuration with different starting points, and incorporating a parallel tempering mechanism \citep{hukushima1996exchange} to increase both convergence speed and acceptance rate. Because of execution time constraints, the hyperparameters of the Eryn sampler are selected manually based on preliminary experiments and the authors' recommendations. This reduction in computational cost renders the sampling step practically viable; information on the execution times is provided in Appendix~\ref{app:execution-times}. Moreover, some amount of post-processing is needed to mitigate the well-known \textit{label switching} phenomenon that occurs in MCMC approximations of mixture-like models \citep{stephens2000dealing}; see Appendix~\ref{app:label-switching}.

\subsubsection*{Data sets}

We consider a set of functional regressors common to linear and logistic regression problems. They are four zero-mean Gaussian processes (GPs), each with a different covariance function. In particular, we consider a Brownian motion, a fractional Brownian motion, an Ornstein-Uhlenbeck process, and a GP with a squared exponential kernel.

\paragraph*{Linear regression data sets.} We employ two different types of simulated data sets, all with a common value of \(\alpha=5\) for the intercept and \(\sigma^2=0.5\) for the error variance:
\begin{itemize}
  \item A finite-dimensional RKHS response with three components for each of the four GP regressors mentioned above: \(Y=5 -5X(0.1) + 5X(0.6) + 10X(0.8) + \epsilon\).
  \item A ``component-less'' response generated by an \(L^2\)-model with a smooth underlying coefficient function, namely \(\beta(t)=\log(1+4t)\), again for the same four GPs.
\end{itemize}
As for the real data sets, we use the Tecator data set \citep{borggaard1992optimal} to predict fat content based on near-infrared absorbance curves of 193 meat samples, as well as what we call the Moisture \citep{kalivas1997two} and Sugar \citep{bro1999exploratory} data sets. The first consists of near-infrared spectra of 100 wheat samples and the objective is to predict the samples' moisture content, whereas the second contains 268 samples of sugar fluorescence data in order to predict ash content. The regressors of the three data sets are measured on a grid of 100, 101 and 115 equispaced points on \([0, 1]\), respectively.

\paragraph*{Logistic regression data sets.} Again we consider two different types of simulated data sets, with a common value of \(\alpha=-0.5\) for the intercept:
\begin{itemize}
  \item Four logistic finite-dimensional RKHS responses with the same functional parameter as in the linear regression case (one for each GP). Specifically,
        \[
          \mathbb P(Y=1\mid X) = \frac{1}{1 + \exp\left\{0.5 +5X(0.1) - 5X(0.6) - 10X(0.8)\right\}}.
        \]
  \item Four logistic responses following an \(L^2\)-model with the same coefficient function as in the linear regression case, i.e., \(\beta(t)=\log(1+4t)\).
\end{itemize}
Additionally, we use three real data sets well known in the literature. The first one is a subset of the Medflies data set \citep{carey1998relationship}, consisting on samples of the number of eggs laid daily by 534 flies over 30 days, to predict whether their longevity is high or low. The second one is the Berkeley Growth Study data set \citep{tuddenham1954physical}, which records the height of 54 girls and 39 boys over 31 different points in their lives. Finally, we selected a subset of the Phoneme data set \citep{hastie1995penalized} based on 200 digitized speech frames over 128 equispaced points to predict the phonemes ``aa'' and ``ao''.

\subsubsection*{Comparison algorithms}

We have included a fairly comprehensive suite of comparison algorithms, chosen among the most common frequentist methods used in machine learning and FDA, and following a standard choice of implementation and hyperparameters. There are purely functional methods (such as the usual \(L^2\) regression based on model~\eqref{eq:l2-linear-model}), finite-dimensional models that work on the discretized data (e.g.\ penalized finite-dimensional regression), and variable selection/dimension reduction procedures (like Principal Component Analysis or Partial Least Squares). The main parameters of all these algorithms are selected by cross-validation, and when a number of components needs to be specified, we use the same range as in our own models so that comparisons are fair. A more detailed account of these algorithms is available in Appendix~\ref{app:data-sets}.

\subsubsection*{Results display}

We have adopted a visual approach to presenting the experimentation results, using colored graphs to help visualize them. In each case, the mean and standard deviation of the score obtained across the 10 independent runs is shown, depicting our methods in blue and the comparison algorithms in orange. We also show the global mean of all the comparison algorithms with a dashed vertical line, excluding extreme negative results to avoid distortion. Moreover, we separate one-stage and two-stage methods, the latter being the ones that perform variable selection or dimension reduction prior to a multiple linear/logistic regression method (represented with ``+r''/``+log'' in the figures). We name our methods according to the acronyms described in Section~\ref{sec:rjmcmc}.

\subsection{Functional linear regression}\label{sec:experiments-linear}

The initial experiments carried out indicate that low values of \(p\) provide sufficient flexibility in most scenarios, so we allow the number of components to vary in the set \(\{1,2,\dots,10\}\). Following \citet{nobile2007bayesian} we select a truncated Poisson prior with a low rate parameter \(\lambda=3\) for \(p\), so that simpler models are favored. However, in the experiments with real data we set \(p\sim \mathcal U\{1,2,\dots,10\}\) to allow a less informative exploration of the parameter space. Moreover, for simplicity we choose to scale the regressors and response to have standard deviation unity in the inference step, but then convert the results back to the original scale for prediction. This allows us to set a reasonable value of \(\eta^2=25\) in the weakly informative prior of \(\beta_j\) (see~\eqref{eq:prior-linear}). Lastly, the metric used to evaluate the performance is the Root Mean Square Error (RMSE).

\subsubsection*{Simulated data sets}

In Figure~\ref{fig:reg_rkhs} we see the results for the RKHS response. This is our baseline and the most favorable case for us, as the underlying model coincides with our assumed model. Indeed, we can see that in most instances our algorithms are the ones with lower RMSE, as expected, and there is not much variance among our different approaches to prediction, at least in the one-stage methods. We even manage to consistently beat the \textit{lasso} finite-dimensional regression mechanism, which can select all 100 components for its model, versus our 10 components at most. This is an indicator that our models make a more efficient selection of variables, encapsulating more information with fewer components and justifying our partiality to parsimonious models.

\begin{figure}[ht!]
  \centering
  \includegraphics[width=.95\textwidth]{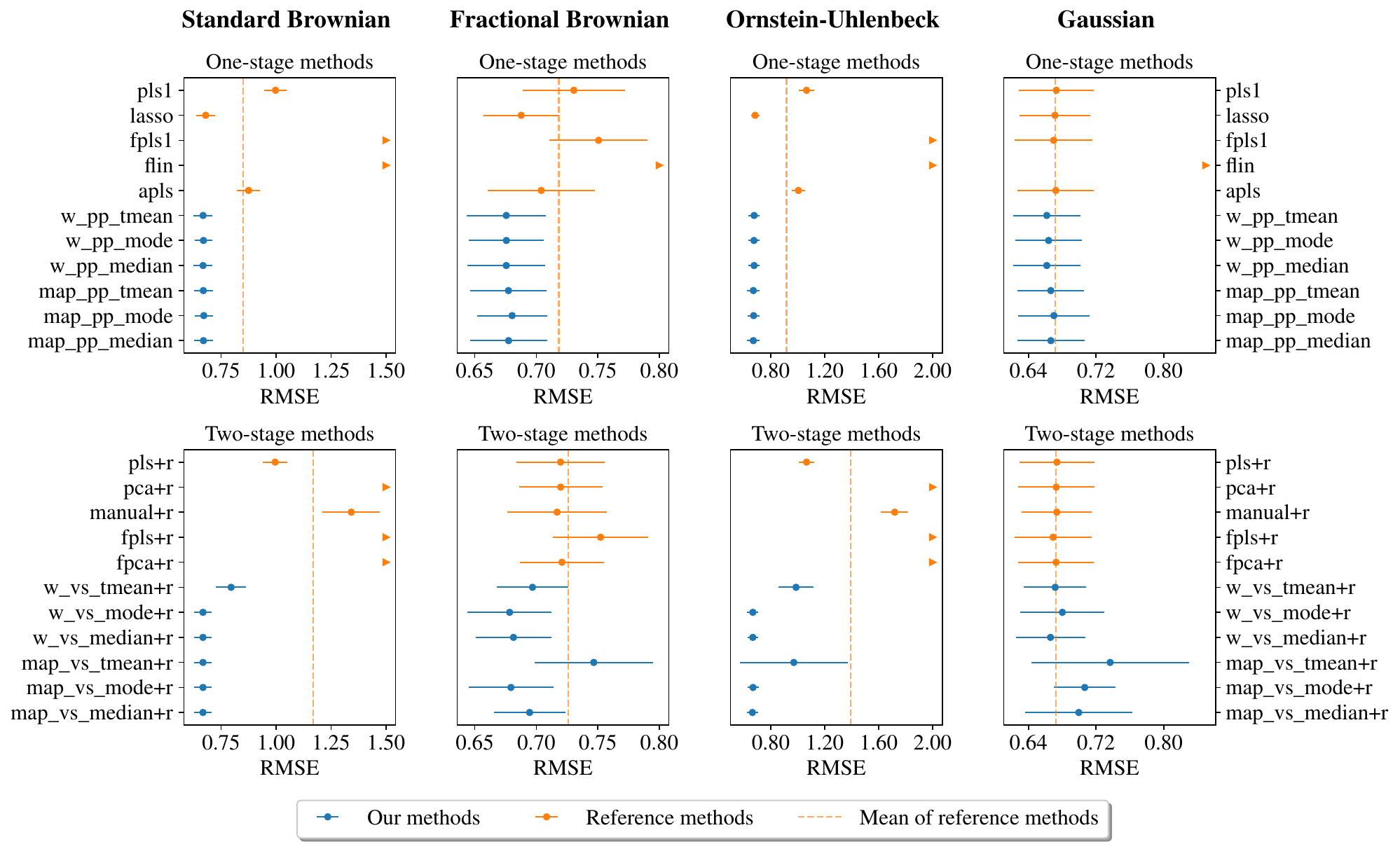}
  \caption{Mean and standard error of RMSE of predictors (lower is better) for 10 runs with GP regressors, one on each column, that obey an underlying linear RKHS model.}\label{fig:reg_rkhs}
\end{figure}

Figure~\ref{fig:reg_l2} shows the results for an underlying \(L^2\)-model, which would be our direct competitor and a more representative test for our Bayesian model. In this case the outcome is satisfactory, as for the most part our models are on a par with the rest, even surpassing other methods that were designed with the \(L^2\)-model in mind. Especially interesting is the comparison with the standard \(L^2\) functional regression (\textit{flin} in the graphics), which we outperform most of the time.

\begin{figure}[ht!]
  \vspace{2em}
  \centering
  \includegraphics[width=.95\textwidth]{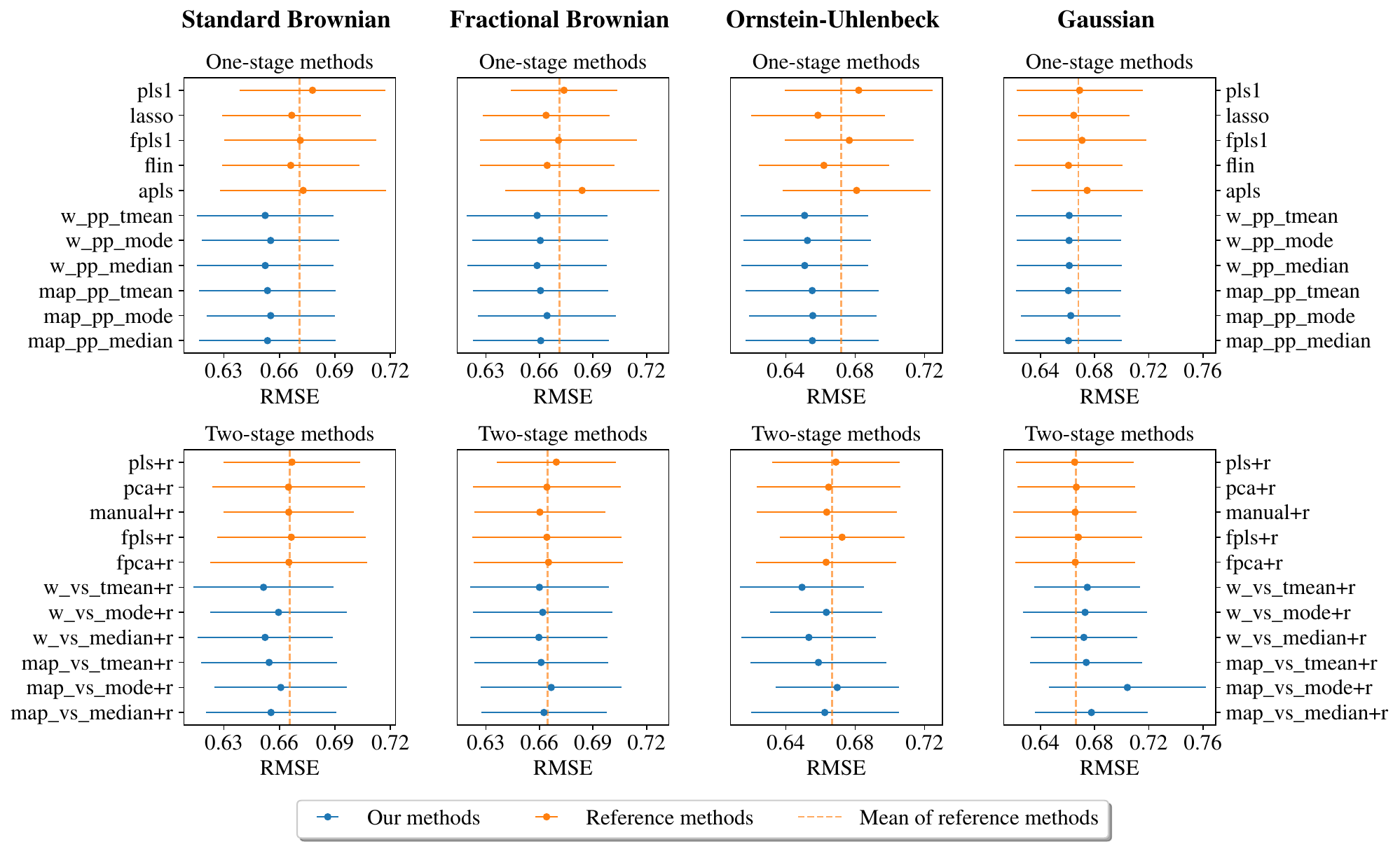}
  \caption{Mean and standard error of RMSE of predictors (lower is better) for 10 runs with GP regressors, one on each column, that obey an underlying linear \(L^2\)-model.}\label{fig:reg_l2}
\end{figure}

\subsubsection*{Real data}

Figure~\ref{fig:reg_real} depicts the results for the real data sets, where we can see that the performance of our one-stage methods is about the same as that of the comparison methods. However, our variable selection methods are somewhat worse that the reference methods, although not by a wide margin. We have to bear in mind that real data is more complex and noisy than simulated data, and it is possible that after a suitable pre-preprocessing we would have obtained better results. Nonetheless, our goal was to perform a general comparison with a uniform methodology, without focusing too much on the specifics of any particular data set. On another note, we see that some of our Bayesian models have a higher standard deviation, partly because there is an intrinsic randomness in the sampling mechanism, and it can be the cause of the occasional worse performance. In relation to this, we observe that the methods that use the trimmed mean as a summary statistic tend to have a worse score, as this statistic is more sensitive to outliers than the median or the mode, even with the 10\% trimming.

\begin{figure}[ht!]
  \centering
  \includegraphics[width=.8\textwidth]{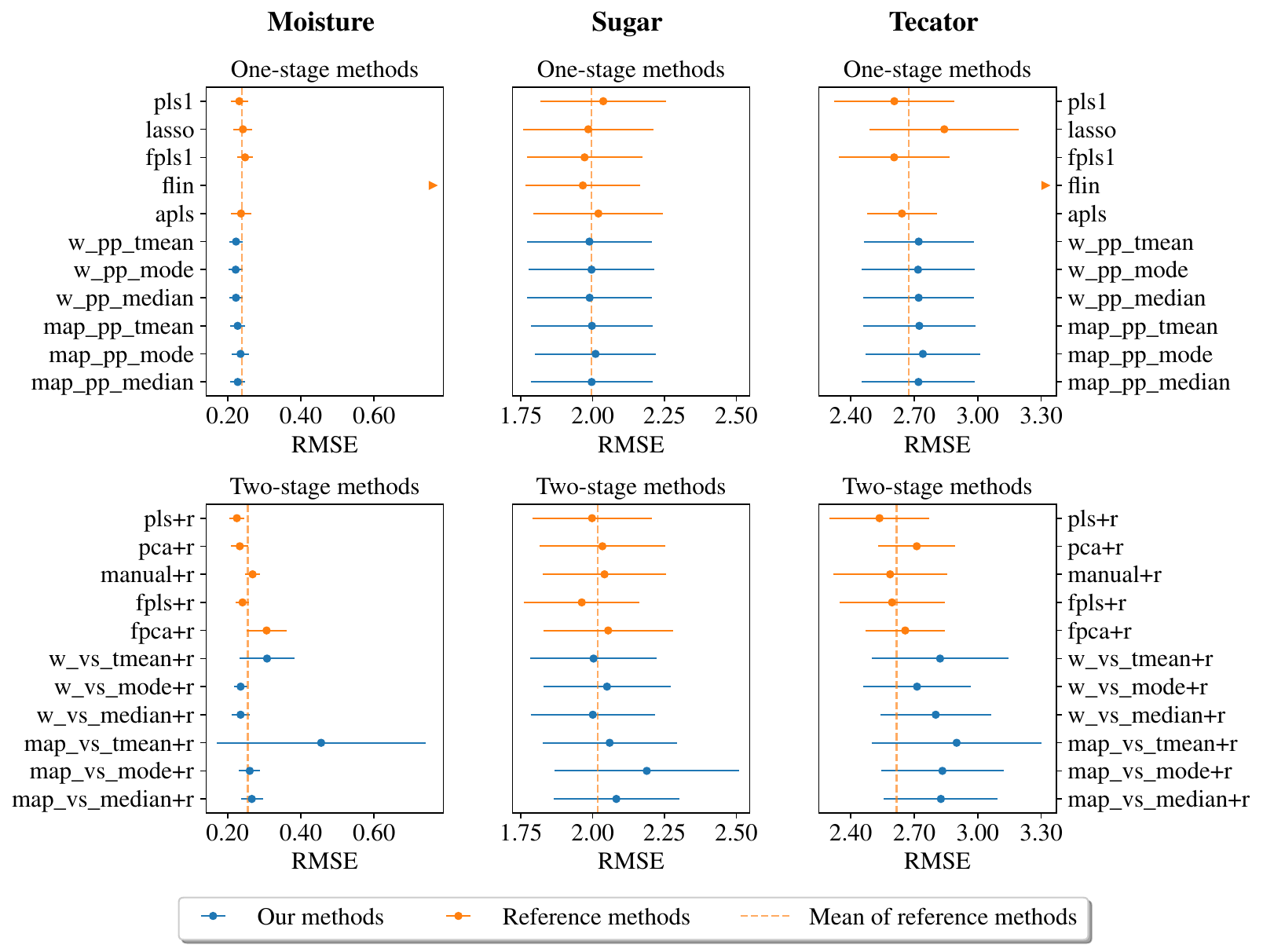}
  \caption{Mean and standard error of RMSE of predictors (lower is better) for 10 runs with real data sets, one on each column.}\label{fig:reg_real}
\end{figure}

\subsection{Functional logistic regression}\label{sec:experiments-logistic}

In this case we set the same prior distribution for \(p\) as in the linear case, but now in the fitting phase the regressors are scaled to have standard deviation 0.5 to accommodate the priors in~\eqref{eq:prior-logistic}. We use the standard threshold of 0.5 to convert probabilities to class labels, and the performance metric is the accuracy, i.e., the rate of correctly predicted samples.

\subsubsection*{Simulated data sets}

First, in Figure~\ref{fig:clf_rkhs} we see the results for the GP regressors with a logistic RKHS response. Our models perform fairly well in this advantageous situation, and even more so in the one-stage case. We again improve the results of other models that can select far more components than our self-imposed limit of 10. Moreover, whenever our two-stage methods score lower, the differences account for only a couple of misclassified samples at worst.

Subsequently, Figure~\ref{fig:clf_l2} shows that in the \(L^2\) scenario the results are again promising, since our models score consistently on or above the mean of the reference models, and in many instances exceed most of them. In this case we also beat the alternative functional logistic regression method (\textit{flog}). In addition, in this situation the overall accuracy of all methods is poor (around 60\%), so this is indeed a difficult problem in which even small increases in accuracy are relevant.

\begin{figure}[ht!]
  \centering
  \includegraphics[width=.925\textwidth]{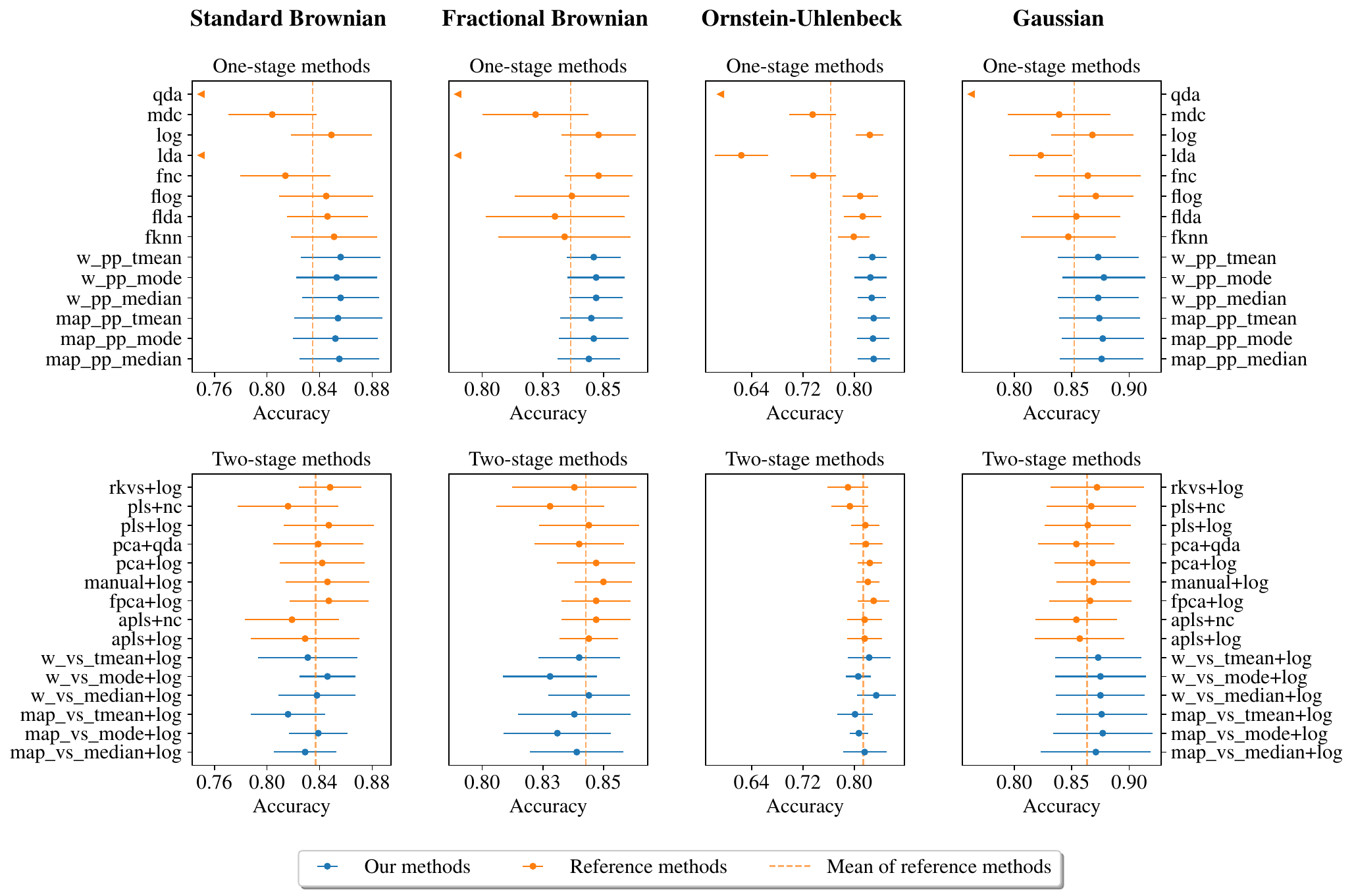}
  \caption{Mean and standard error of accuracy of classifiers (higher is better) for 10 runs with GP regressors, one on each column, that obey an underlying logistic RKHS model.}\label{fig:clf_rkhs}
\end{figure}

\begin{figure}[ht!]
  \vspace{2em}
  \centering
  \includegraphics[width=.925\textwidth]{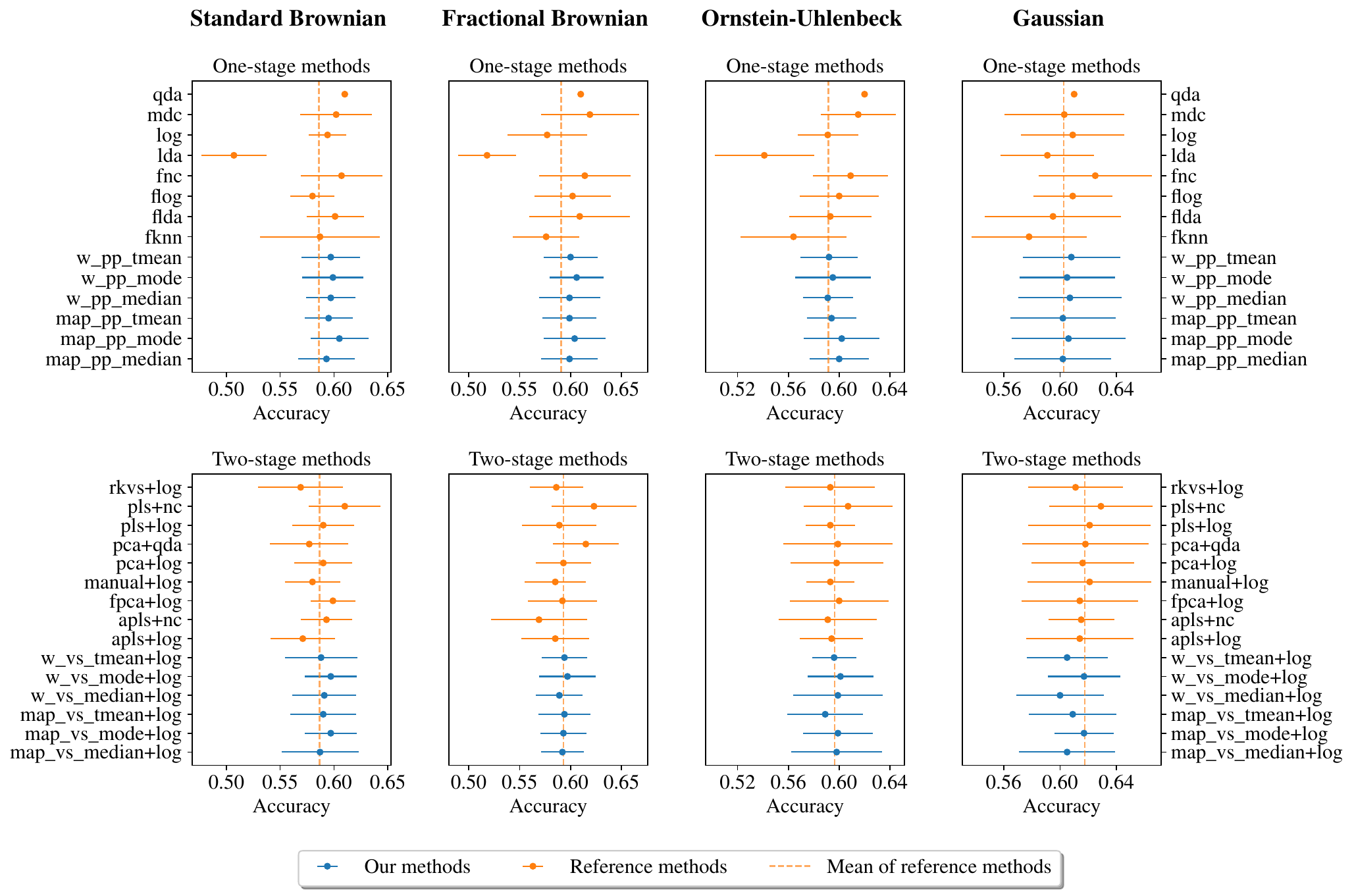}
  \caption{Mean and standard error of accuracy of classifiers (higher is better) for 10 runs with GP regressors, one on each column, that obey an underlying logistic \(L^2\)-model.}\label{fig:clf_l2}
\end{figure}

\subsubsection*{Real data}

As for the real data sets, in Figure~\ref{fig:clf_real} we see positive results in general, obtaining in some cases accuracies well above the mean of the reference models. In particular, the weighted sum methods tend to have slightly better results than the MAP methods, which is a trend that was also present in the simulated data sets and in the linear regression experiments. This was somewhat expected, since the weighted predictions use the full range of information available for prediction.

\begin{figure}[ht!]
  \centering
  \includegraphics[width=.71\textwidth]{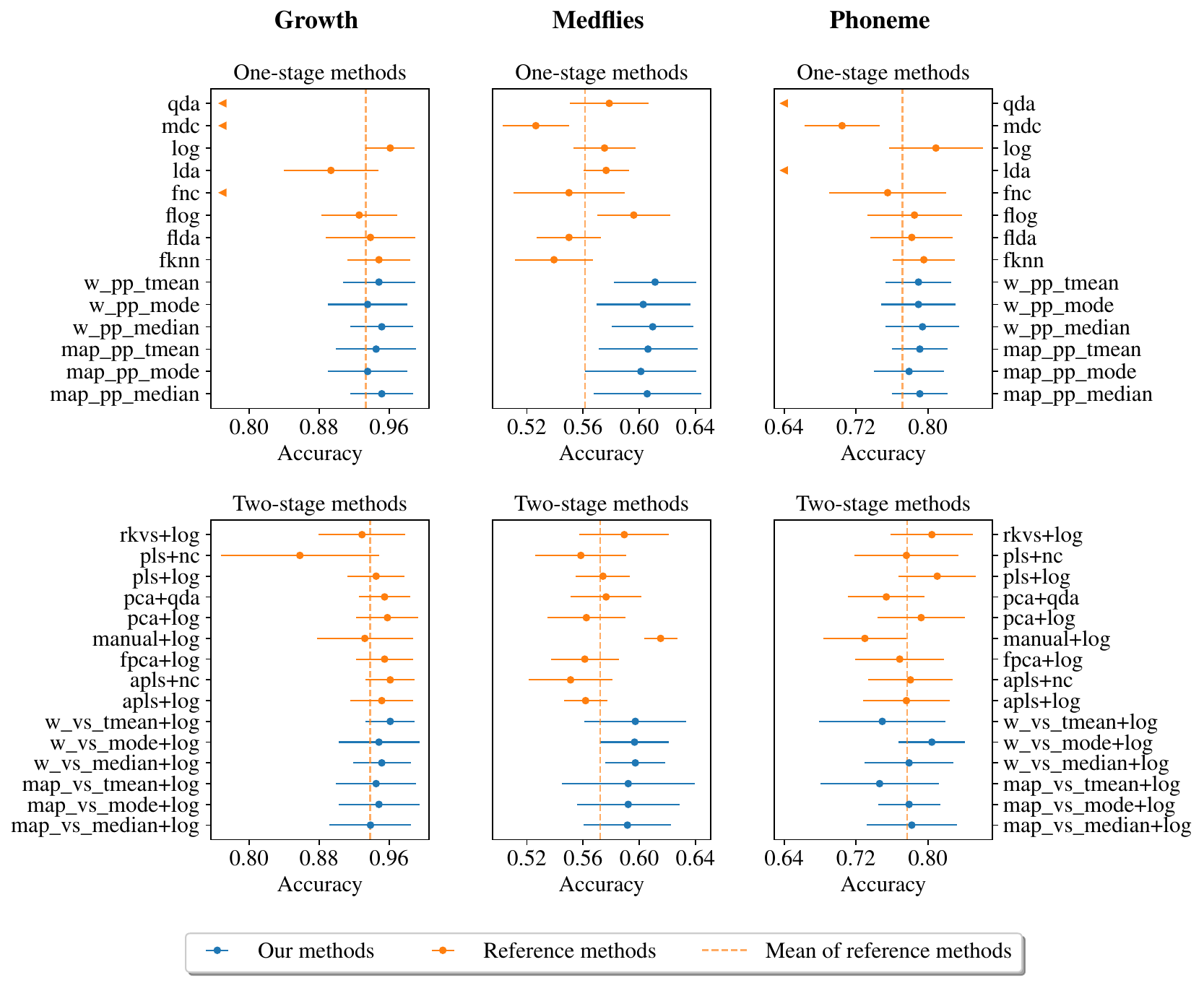}
  \caption{Mean and standard error of accuracy of classifiers (higher is better) for 10 runs with real data sets, one on each column.}\label{fig:clf_real}
\end{figure}

\section{Conclusion}\label{sec:conclusion}

We have introduced a natural and computationally feasible way of integrating Bayesian inference into functional regression models, by means of a RKHS approach that simplifies the inherently hard task of setting a prior distribution on an infinite-dimensional space. The proposed RKHS formulation gives a common framework to all finite-dimensional models based on linear combinations of the marginals of the underlying process, establishing a solid theoretical foundation for these popular points-of-impact models while retaining a functional viewpoint. Our approximation relies on simpler functional parameters, thus enhancing interpretability and ease of implementation, and also enables meaningful model comparisons across different dimensions and facilitates the analysis of the effects of local characteristics of the process.

We have also proved posterior consistency results that ensure the coherence and correctness of the Bayesian methods we developed. These kinds of results have in other contexts more intricate and restrictive conditions to arrive at essentially the same conclusions as we did, but again the introduction of RKHS's is the key point to greatly simplifying them. On the one hand, we have successfully adapted the methodology recently introduced in \citet{miller2023consistency}---originally intended for mixture models---to the fundamentally different scenario of functional regression, obtaining ``Lebesgue''-almost sure consistency for both the model parameters and the unknown number of components. On the other hand, we have derived alternative consistency results and optimal contraction rates through a more sophisticated approach based on Schwartz's theorem, which is the standard tool in nonparametric consistency problems. In addition, it is worth mentioning that the theorems we have proved are applicable to a wide range of prior distributions, including many that are rather simple and hence easier to work with.

Lastly, we have presented numerical evidence that supports the proposed Bayesian methodology and its predictive performance with simulated and real data sets. Thanks to our RKHS formulation, we can effectively leverage the capabilities of RJMCMC samplers and integrate the unknown number of components into the Bayesian procedure, which to our knowledge is a novel application in FDA. Moreover, a key finding in our empirical studies is that a relatively low number of components is sufficient to obtain good results. This practical side of our work showcases that the Bayesian prediction methods we constructed are competitive against several non-cherry-picked frequentist techniques, especially those based on the usual \(L^2\)-models, while still remaining viable implementation-wise. Of course, our proposed model is not without limitations, and we are not suggesting that the \(L^2\) perspective should be abandoned, but merely offering a simple, theoretically-backed alternative that can perform better in many situations.


  {\footnotesize
  \subsubsection*{Acknowledgments}
    The authors would like to acknowledge the valuable computational resources provided by the Centro de Computación Científica-Universidad Autónoma de Madrid (CCC-UAM). This research was partially supported by grants PID2023-148081NB-I00 and PRE2020-095147 of the Spanish Ministry of Science and Innovation (MCIN), co-financed by the European Social Fund (ESF). J. R. Berrendero and A. Cuevas also acknowledge financial support from grant CEX2023-001347-S funded by MICIU/AEI/10.13039/501100011033.
    }


\bibliography{bibliography}


\newpage
\appendix
\counterwithin{theorem}{section} 
%
%


\section{Model choice and implementation details}\label{app:model-choice}

\subsection{Posterior distributions}\label{app:posterior}

For the posterior distributions in our Bayesian formulation, we only compute a function proportional to their log-density, since that is enough for a MCMC algorithm to work. Consider the parameter vector \((p, \theta_p)\), where \(\theta_p=(b_p, \tau_p, \alpha, \sigma^2)\) with \(b_p=(\beta_1, \dots, \beta_p)\) and \(\tau_p = (t_1,\dots, t_p)\). Recall that we have a labeled data set of independent observations from \((X, Y)\), denoted by \(\mathcal D_n =\{(X_i, Y_i): i=1,\dots, n\}\). A standard algebraic manipulation in the posterior expression using Bayes' formula yields the following results.

\begin{proposition} Under the linear RKHS model and the prior distribution in Section~\ref{sec:rkhs-linear-model}, the log-posterior distribution up to an additive constant is
  \begin{align*}
  \log\Pi_n(p, \theta_p\mid \mathcal D_n) \propto {} & -\frac{1}{2}\left(\frac{\|b_p\|^2}{\eta^2} - \frac{\|\bm{Y}_n - \alpha\bm{1}_n + \mathcal X_{\tau_p} b_p\|^2}{\sigma^2} \right) - (n+2)\log \sigma\\
  &- p\log \eta - \frac{p}{2}\log(2\pi) + \log \Pi(p),
  \end{align*}
  where \(\bm Y_n=(Y_1,\dots,Y_n)^T\), \(\bm{1}_n\) is an \(n\)-dimensional vector of ones, and \(\mathcal X_{\tau_p}\) is the data matrix \((X_i(t_j))_{i,j}\), for \(i=1,\dots,n\) and \(j=1,\dots,p\).
\end{proposition}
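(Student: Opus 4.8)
The plan is to apply Bayes' formula and then carry out elementary bookkeeping of logarithmic terms. Starting from $\pi(p,\theta_p\mid\mathcal D_n)\propto \mathcal L(\mathcal D_n\mid p,\theta_p)\,\pi(p,\theta_p)$, I would take logarithms of both factors and discard every summand not depending on $(p,\theta_p)$, since such terms are absorbed into the proportionality constant.

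First I would expand the likelihood. By the conditional independence of the observations in~\eqref{eq:rkhs-model-linear-2}, $\mathcal L(\mathcal D_n\mid p,\theta_p)=\prod_{i=1}^n(2\pi\sigma^2)^{-1/2}\exp\!\bigl(-(Y_i-\alpha_0-\sum_{j=1}^p\beta_j X_i(t_j))^2/(2\sigma^2)\bigr)$; stacking the residuals into the vector $\bm Y-\alpha_0\bm 1-\mathcal X_{\tau_p}b_p$, with $\mathcal X_{\tau_p}=(X_i(t_j))_{i,j}$ as in the statement, the log-likelihood becomes $-\tfrac n2\log(2\pi)-n\log\sigma-\tfrac1{2\sigma^2}\|\bm Y-\alpha_0\bm 1-\mathcal X_{\tau_p}b_p\|^2$, the first summand being a constant. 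Next I would expand the log-prior using the product structure of~\eqref{eq:prior-linear}: the uniform factors on the $t_j$ contribute $0$ to the log-density, the Gaussian block on $b_p$ contributes $-\tfrac p2\log(2\pi)-p\log\eta-\tfrac1{2\eta^2}\|b_p\|^2$, the Jeffreys factor $1/\sigma^2$ contributes $-2\log\sigma$, and $\pi(p)$ contributes $\log\pi(p)$. Summing the two logarithms, discarding the additive constant $-\tfrac n2\log(2\pi)$, and merging $-n\log\sigma-2\log\sigma=-(n+2)\log\sigma$ then yields the asserted expression.

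I do not anticipate a genuine obstacle, as this is routine Gaussian algebra; the only point that deserves care is that, because the dimension $p$ is itself one of the quantities being inferred, the normalizing constant $(2\pi\eta^2)^{-p/2}$ of the coefficient prior \emph{cannot} be treated as a fixed constant and must be retained, producing precisely the terms $-\tfrac p2\log(2\pi)$ and $-p\log\eta$; similarly the improper Jeffreys prior must be folded into the $\sigma$-dependence of the likelihood rather than dropped. The logistic analogue would follow in exactly the same way, replacing the Gaussian log-likelihood by the Bernoulli one and using the priors of Section~\ref{sec:rkhs-logistic-model}.
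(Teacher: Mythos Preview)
Your proposal is correct and is precisely the ``standard algebraic manipulation in the posterior expression using Bayes' formula'' that the paper invokes without further detail; there is nothing to add on the method side. In fact your careful bookkeeping exposes two sign slips in the displayed formula of the proposition: the quadratic form in the residuals should enter with a minus sign (not the plus that results from the parenthesised $-\tfrac12(\cdot - \cdot)$), and the design-matrix term inside the norm should be $-\mathcal X_{\tau_p}b_p$ rather than $+\mathcal X_{\tau_p}b_p$, exactly as your derivation shows.
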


\begin{proposition}
  Under the logistic RKHS model and the prior distribution in Section~\ref{sec:rkhs-logistic-model}, the log-posterior distribution up to an additive constant is
  \begin{align*}
    \log \Pi_n(p, \theta_p \mid \mathcal D_n) \propto {} & \sum_{i=1}^n \left[Y_i \left(\alpha + \dotprod{X_i}{\beta}_K\right) - \log\left(1 + \exp\left\{\alpha +\dotprod{X_i}{\beta}_K\right\}\right) \right]\\
     & - 3\sum_{j=1}^p \log\left(4\beta_j^2 + 125\right) + p \left[\frac{15}{2}\log 5 + 4 \log 2 - \log (3\pi)\right]\\
     & - \log\left(100 + \alpha^2\right) + \log \Pi(p),
  \end{align*}
  where \(\beta(\cdot)=\sum_{j=1}^p \beta_j K(t_j, \cdot)\) and \(\dotprod{X_i}{\beta}_K = \sum_{j=1}^p \beta_j X_i(t_j)\).
\end{proposition}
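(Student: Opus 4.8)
The plan is a straightforward application of Bayes' formula. Conditioning on the regressors $X_1,\dots,X_n$ throughout, the posterior satisfies $\pi(p,\theta_p\mid\mathcal D_n)\propto \mathcal L(\mathcal D_n\mid p,\theta_p)\,\pi(p,\theta_p)$, where the proportionality constant (the marginal likelihood) does not depend on $(p,\theta_p)$. Hence it is enough to compute the log-likelihood and the log-prior, add them, and discard every additive term free of $(p,\theta_p)$.

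First I would handle the log-likelihood. By~\eqref{eq:rkhs-model-logistic-2} the responses $Y_i$ are conditionally independent $\mathrm{Bernoulli}(p_i)$ with $p_i=(1+e^{-\ell_i})^{-1}$ and $\ell_i=\alpha_0+\dotprod{X_i}{\alpha}_K=\alpha_0+\sum_{j=1}^p\beta_j X_i(t_j)$. Writing $p_i=e^{\ell_i}/(1+e^{\ell_i})$ and $1-p_i=1/(1+e^{\ell_i})$, the contribution of the $i$-th observation is $Y_i\log p_i+(1-Y_i)\log(1-p_i)=Y_i\ell_i-\log(1+e^{\ell_i})$, and summing over $i=1,\dots,n$ reproduces the first displayed line exactly.

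Next I would expand the log-prior from~\eqref{eq:prior-logistic}, which by the stated independence factorizes into $\log\pi(p)$, the uniform factors for the $t_j$ (which add only a constant on the support), the $t_5(0,2.5)$ factors for the $\beta_j$, and the $\mathrm{Cauchy}(0,10)$ factor for $\alpha_0$. For a coefficient I substitute the scaled Student's-$t$ density, whose kernel is $\bigl(1+\tfrac{1}{5}(\beta/2.5)^2\bigr)^{-3}=\bigl((4\beta^2+125)/125\bigr)^{-3}$ with normalizing constant $\Gamma(3)/\bigl(\Gamma(5/2)\sqrt{5\pi}\cdot 2.5\bigr)$; taking logarithms and using $\Gamma(3)=2$, $\Gamma(5/2)=\tfrac{3}{4}\sqrt\pi$, the $\beta$-dependent part is $-3\log(4\beta^2+125)$ and the leftover constant is $\tfrac{15}{2}\log 5+4\log 2-\log(3\pi)$ per component, so summing over $j=1,\dots,p$ yields the second line. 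Similarly the $\mathrm{Cauchy}(0,10)$ density equals $10/\bigl(\pi(100+\alpha_0^2)\bigr)$, contributing $-\log(100+\alpha_0^2)$ up to a constant. Collecting the likelihood line, the two prior lines, $\log\pi(p)$, and absorbing all $(p,\theta_p)$-free terms into the additive constant gives the stated identity.

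There is no genuine obstacle here: the only point requiring care is the bookkeeping of the normalizing constants, in particular checking that the scale-$2.5$ parametrization turns $1+\tfrac{1}{5}(\beta/2.5)^2$ into $(4\beta^2+125)/125$ and that the several $\log 5$ contributions — from the $125^3$ produced when clearing that denominator, from $\sqrt{5\pi}$, and from the scale factor $2.5$ — combine to precisely $\tfrac{15}{2}\log 5$ per component. The argument is identical for the companion linear proposition, with the Gaussian likelihood of~\eqref{eq:rkhs-model-linear-2} and the Jeffreys prior of~\eqref{eq:prior-linear} replacing the Bernoulli likelihood and the $t_5$/Cauchy priors.
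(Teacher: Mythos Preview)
Your proposal is correct and takes exactly the same approach as the paper, which simply states that ``a standard algebraic manipulation in the posterior expression using Bayes' formula yields the following results'' without giving details. Your careful bookkeeping of the normalizing constants (in particular the $\tfrac{15}{2}\log 5$ term arising from $9\log 5$ in the kernel and $-\tfrac{3}{2}\log 5$ in the normalizer) is accurate and fills in precisely what the paper omits.
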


The discrete distribution \(\Pi(p)\) used in the experiments is either uniform in \(\{1,\dots, 10\}\), with density \(\Pi_{\text{uniform}}(p)=1/10\), or a Poisson distribution with rate parameter \(\lambda=3\) truncated to \(\{1,\dots, 10\}\), with density
\[
\Pi_{\text{Poisson}}(p) = \frac{3^p}{Cp!}, \quad p\in\{1,\dots,10\},
\]
where the normalization constant is \(C = \sum_{k=1}^{10} \frac{3^k}{k!}\).

\subsection{Label switching}\label{app:label-switching}

A well-known issue that arises when using MCMC methods in mixture-like models such as the one proposed in this work is \textit{label switching}, which in short refers to the non-identifiability of the components of the model caused by their interchangeability. In our case, this happens because the likelihood and the prior are symmetric with respect to the ordering of the parameters \(b\) and \(\tau\), i.e., \(\mathcal L(Y|X,\theta)=\mathcal L(Y|X, \theta[\nu])\) for any permutation \(\nu\) that rearranges the indices \(j=1,\dots, p\). Thus, since the components are arbitrarily ordered, they may be inadvertently exchanged from one iteration to the next in a MCMC algorithm. This can cause nonsensical answers when summarizing the marginal posterior distributions of the parameters to perform inference, as different labelings might be mixed on each component \citep{stephens2000dealing}. This is primarily the reason why we do not directly use summaries of the posterior distribution of the individual parameters in our prediction methods.

Moreover, the use of trans-dimensional samplers exacerbates this problem, since the change in dimension can further disrupt the internal ordering of the components \citep[see][]{roodaki2014relabeling}. However, this phenomenon is perhaps surprisingly a condition for the convergence of the MCMC method: as pointed out by many authors \citep[e.g.][]{celeux2000computational}, a lack of switching would indicate that not all modes of the posterior distribution were being explored by the sampler. For this reason, many ad-hoc solutions revolve around post-processing and relabeling the samples to eliminate the switching effect, but they generally do not prevent it from happening in the first place.

The most straightforward solutions consist on imposing an artificial identifiability constraint on the parameters to break the symmetry of their posterior distributions; see \citet{jasra2005markov} and references therein. A common approach that seems to work well is to simply enforce an ordering in the parameters in question, which in our case would mean requiring for example that \(\beta_i < \beta_j\) for \(i < j\), or the analogous with the times in \(\tau\). We have implemented a variation of this method described in \citet{simola2021approximate}, which works by post-processing the samples and relabeling the components to satisfy the order constraint mentioned above, choosing either \(b\) or \(\tau\) depending on which set of ordered parameters would produce the largest separation between any two of them (suitably averaged across all iterations of the chains). This is an area of ongoing research, and thus there are other, more complex relabeling strategies, both deterministic and probabilistic. A summary of several such methods can be found for example in \citet{sperrin2010probabilistic} or \citet{rodriguez2014label}.

\subsection{Affine-invariant ensemble samplers}\label{app:ensemble-sampler}

An interesting and often desirable property of regular MCMC sampling algorithms is that they be \textit{affine-invariant}, which informally means that they regard two distributions that differ in an affine transformation, say \(\pi(x)\) and \(\pi_{A, b}(Ax + b)\), as equally difficult to sample from. This is useful when one is working with very asymmetrical or skewed distributions, for an affine transformation can turn them into ones with simpler shapes. Generally speaking, a MCMC algorithm can be described through a function \(R\) as \(\Lambda(t+1)=R(\Lambda(t), \xi(t), \pi)\), where \(\Lambda(t)\) is the state of the chain at instant \(t\), \(\pi\) is the objective distribution, and \(\xi(t)\) is a sequence of i.i.d.\ random variables that represent the random behavior of the chain. With this notation, the affine-invariance property can be characterized as \(R(A\lambda+b, \xi(t), \pi_{A,b}) = AR(\lambda, \xi(t), \pi) + b\), for all \(A,b\) and \(\lambda\), and almost all \(\xi(t)\). This means that if we fix a random generator and run the algorithm twice, one time using \(\pi\) and starting in \(\Lambda(0)\) and a second time using \(\pi_{A,b}\) with initial point \(\Gamma(0)=A\Lambda(0)+b\), then \(\Gamma(t)=A\Lambda(t)+b\) for all \(t\). In \citet{goodman2010ensemble} the authors consider an ensemble of samplers with the affine invariance property. Specifically, they work with a set \(\Lambda=(\Lambda_1, \dots, \Lambda_L)\) of \textit{walkers}, where \(\Lambda_l(t)\) represents an individual chain at time \(t\). At each iteration, an affine-invariant transformation is used to find the next point, which is constructed using the current values of the rest of the walkers (similar to Gibb's algorithm), namely the \textit{complementary ensemble}
\[
  \Lambda_{-l}(t) = \{\Lambda_1(t+1), \dots, \Lambda_{l-1}(t+1), \Lambda_{l+1}(t), \dots, \Lambda_L(t)\}, \quad l=1,\dots, L.
\]

To maintain the affine invariance and the joint distribution of the ensemble, the walkers are advanced one by one following a Metropolis-Hastings acceptance scheme. The authors consider mainly two types of moves:

\begin{description}
  \item[Stretch move.] For each walker \(1\leq l \leq L\) another walker \(\Lambda_j \in \Lambda_{-l}(t)\) is chosen at random, and the proposal is constructed as
  \[
    \Lambda_l(t) \to \Gamma = \Lambda_j + Z(\Lambda_l(t) - \Lambda_j),
  \]
  where \(Z \stackrel{i.i.d.}{\sim} g(z)\) satisfying the symmetry condition \(g(z^{-1})=zg(z)\). In particular, the suggested density is
  \[
    g_a(z) \propto \begin{cases}
      \frac{1}{\sqrt{z}}, & \text{if } z \in [a^{-1}, a], \\
      0,                  & \text{otherwise.}
    \end{cases}, \quad a > 1.
  \]
  Supposing \(\R^p\) is the sample space, the corresponding acceptance probability (chosen so that the detailed balance equations are satisfied) is:
  \[
    \alpha = \min\left\{1, \ Z^{p-1}\frac{\pi(\Gamma)}{\pi(\Lambda_l(t))}\right\}.
  \]

  \item[Walk move.] For each walker \(1\leq l \leq L\) a random subset \(S_l \subseteq \Lambda_{-l}(t)\) with \(|S_l| \geq 2\) is selected, and the proposed move is
  \[
    \Lambda_l(t) \to \Gamma = \Lambda_l(t) + W,
  \]
  where \(W\) is a normal distribution with mean \(0\) and the same covariance as the sample covariance of all walkers in \(S_l\). The acceptance probability in this case is just the Metropolis ratio, namely \(\alpha=\min\{1, \pi(\Gamma)/\pi(\Lambda_l(t))\}\).
\end{description}

From a computational perspective, the Python library \textit{emcee} \citep{foreman2013emcee}  provides a parallel implementation of this algorithm. The idea is to divide the ensemble \(\Lambda\) into two equally-sized subsets \(\Lambda^{(0)}\) and \(\Lambda^{(1)}\), and then proceed on each iteration in the following alternate fashion:
\begin{enumerate}
  \item Update \textit{all} walkers in \(\Lambda^{(0)}\) through one of the available moves explained above, using \(\Lambda^{(1)}\) as the complementary ensemble.
  \item Use the new values in \(\Lambda^{(0)}\) to update \(\Lambda^{(1)}\).
\end{enumerate}
In this way the detailed balance equations are still satisfied, and each of the steps can benefit from the computing power of an arbitrary number of processors (up to \(L/2\)).  

\subsection{RJMCMC implementation}

For the computational implementation of our Bayesian prediction pipeline we chose the Python library \textit{Eryn} \citep{karnesis2023eryn}. This package is a general-purpose suite of MCMC methods which is reliable, easy to use and performs well in a wide class of problems. It is an enhancement of the \textit{emcee} library mentioned in Appendix~\ref{app:ensemble-sampler} that implements affine-invariant ensemble samplers, with a few key improvements:

\begin{description}
  \item[Reversible jump sampling.] The main reason for choosing this library is that it implements a reversible jump MCMC sampling scheme, which is an essential element in our proposed methodology. It allows trans-dimensional sampling for posterior approximation, letting the user select the likelihood, prior and proposal distributions, and providing a great level of control over the details.

  \item[Parallel tempering.] This is a mechanism to increase the efficiency with which the sampler explores the parameter space. The basic idea is to consider a set of Markov chains in parallel, each one sampling from a transformed posterior distribution given by \(\pi_T(p, \theta_p|X,Y) = \pi(Y|X, p, \theta_p)^{1/T}\pi(p, \theta_p)\), where \(T\geq 1\) is the temperature. In the words of \citet{karnesis2023eryn}, \textit{``intermediate temperatures `smooth out' the posterior by reducing the contrast between areas of high and low likelihood''}. In practice, these chains periodically exchange information, with the swaps controlled by an acceptance probability that maintains detailed balance, and ultimately we are only interested on the cold chain (\(T=1\)).

  \item[Multiple try.] Since the trans-dimensional moves are harder to manage and generally give a low acceptance rate, this library allows the proposal of several candidates for a given move, using a weight function to assign them a relative importance, and then choosing from them with probability given by the normalized weights. This naturally increases the computational cost, but it often produces better results.
\end{description}

Another advantage of this ensemble approach, apart from the property of affine-invariance, is that it only requires the specification of a few hyperparameters irrespective of the underlying dimension. This contrasts to, say, the \(O(N^2)\) degrees of freedom corresponding to the covariance matrix of an \(N\)-dimensional jump distribution in Metropolis-Hastings. We already covered the prior, likelihood and posterior distributions used in Section~\ref{sec:methodology} and in Appendix~\ref{app:posterior}. We give below an overview of other integral parts of our RJMCMC method with some implementation details.

\subsubsection*{Initial values}

We need to specify the initial values for the parameters of all the chains in our sampler. In general, we set these values by sampling from the prior distribution of the corresponding parameters. However, in the linear case the prior on \((\alpha, \sigma^2)\) is improper, so we use ad-hoc weakly informative distributions for the initial values instead. For \(\alpha\) we consider a normal distribution with mean \(0\) and standard deviation \(10|\bar{Y}_{\text{scaled}}|\), where \(Y_{\text{scaled}}\) is the version of the original data scaled to have standard deviation unity. For \(\sigma^2\) we use an inverse-gamma distribution with shape parameter \(a=2\) and scale parameter \(b=\hat\sigma^2_Y/\operatorname{Var}(Y)\), where \(\hat \sigma^2_Y\) is a rough estimate of an acceptable error in the scale of \(Y\). This estimation is done is practice as two order of magnitudes less than \(|\bar{Y}|\).

\subsubsection*{Moves}

As with any MCMC method, the Markov chains are advanced iteratively through a set of moves. For the \textit{in-model} moves (that is, moves that do not change the dimension), we divide our parameter vector in two parts:

\begin{description}
  \item[Stretch move.] For the parameters \(\alpha\) and \(\sigma^2\), which are common to all sub-models, we use the stretch move explained in Appendix~\ref{app:ensemble-sampler}.
  \item[Group stretch move.] For \(b\) and \(\tau\), we use a variant of the stretch move known as the group stretch move, which was designed as an extension of the original move that can handle reversible jump setups. The main difference is that the random walker \(\Lambda_j\) used to advance the chain is selected from a stationary group that does not change for several iterations (see Section~3.1 in \citealp{karnesis2023eryn} for more information about this move).
\end{description}
In both moves above, the key scaling parameter \(a\) starts with a value of \(2\), and is changed dynamically to guide the sampler towards an acceptance rate of about \(20\%-30\%\), which is within the range usually recommended in the literature \citep[e.g.][]{rosenthal2011optimal}.

For the \textit{between-model} moves (those which change the dimension), we set an equal probability of births and deaths, except on the end points of the range of \(p\). Specifically, \(b_{p,p+1}=d_{p, p-1}=0.5\) if \(1<p<10\), and \(b_{1,2}=d_{10,9}=1\). Note that these are only the probabilities to propose the corresponding move, which will be accepted or rejected according to the acceptance formula in Section~\ref{sec:rjmcmc}. When the birth of a new component is proposed, we use the prior distribution of \(b\) and \(\tau\) to generate the new values, and when a death is proposed, the method selects an existing component at random as a candidate for deletion.

\subsubsection*{Hyperparameters}

Other relevant hyperparameters include the burn-in period for the chains, which is the number of initial samples discarded, the number of actual steps, the number of chains, and the number of temperatures for parallel tempering. In the experiments we use 64 chains and 10 different temperatures, and run them for 5000 iterations in total, discarding the first 4000 as burn-in. Moreover, when the prior on \(p\) is uniform (i.e.~in the experiments with real data) we activate the multiple try scheme and set the number of tries to \(2\). Lastly, a computational decision we made is working with \(\log \sigma\) instead of \(\sigma^2\) so that the domain of this parameter is an unconstrained space, which is a widespread recommendation that helps increase the sampling efficiency.

\newpage
\section{Proofs of posterior consistency}\label{app:proofs}

In this section we complete the theoretical exposition in Section~\ref{sec:consistency} by providing detailed proofs of the stated results.

\subsection{Consistency via Doob's theorem}

For this part, the general strategy will be to apply Doob's theorem (Theorem~\ref{th:doob}) to a subset of the parameter space \(\Theta\), where permutations are suitably taken into account and full identifiability holds, and then extend the conclusions to the whole parameter space. As we will see shortly, save for an eventual permutation, identifiability of the map \(\theta \mapsto P_\theta(X,Y)\) is obtained in the RKHS case when the covariance function \(K\) of the underlying stochastic process is non-degenerate.

\subsubsection*{Proof of Theorem~\ref{th:consistency-doob-linear}}

\textit{Measurability issues}. We first show the measurability of the mapping \(\theta \mapsto P_\theta(X,Y)(A)\) for every measurable set \(A \subseteq \mathcal X \times \mathcal Y\). Define the function \(h(\theta, x, y)=f_\theta(y|x)\bm{1}_A(x,y)\), where \(f_\theta(\cdot|x)\) is the density of the normal distribution \( \mathcal N(\alpha + \sum_j \beta_j x(t_j),\, \sigma^2)\), and \(\bm{1}_{A}\) is the indicator function of the set \(A\). Let \(E(x,t)=x(t)\) be the evaluation map on \(\mathcal C[0,1]\times [0,1]\), and note that \(x\mapsto x(t)\) is continuous on \(\mathcal C[0,1]\) while \(t\mapsto x(t)\) is continuous on \([0,1]\). It follows that \(E\) is a Carathéodory function and hence jointly measurable \citep[][Lemma~4.51]{aliprantis2006infinite}. Thus, \(h\) is seen to be measurable as a composition of Borel measurable functions, and by Tonelli's theorem \citep[e.g.][Theorem~2.37]{folland1999real} the function
\[
  \theta \mapsto \int_{\mathcal X \times \mathcal Y} h(\theta, x, y)\, d\rho = P_\theta(X, Y)(A)
\]
is measurable. Another measurability concern is the existence of regular conditional distributions such as \(\theta|(X, Y)_{1:n}\). For this, one should see Theorem 10.2.1 and Theorem 10.2.2 in \citet{dudley2002real}, which guarantee they are well-defined provided that the underlying spaces are sufficiently regular.

\textit{Reduced parameter space}. Consider the space \(\R_\delta = (-\infty, -\delta]\cup[\delta, +\infty)\), with \(\delta\) the fixed tolerance given in Condition~\ref{cond:condition-ident}-\ref{cond:condition-ident-2}, and define
\[\tTheta_p = \R^p_\delta \times [0,1]^p_{\text{ord}} \times \R \times \R^+_0,
\]
where \(
[0,1]^p_{\text{ord}} =\{(t_1,\dots, t_p) \in [0,1]^p: t_1 < \cdots < t_p\}
\). Now consider \(\tTheta = \bigcup_{p\in \N}\tTheta_p\), and note that the sets \(\tTheta_1,\tTheta_2,\ldots\) are still disjoint. Then, again by Proposition A.1 in \citet{miller2023consistency}, we can conclude that \(\tTheta\) is a complete separable metric space under the metric \(d_\Theta\) in~\eqref{eq:metric-doob}. We will henceforth say that a parameter in \(\tTheta\) is ``ordered''.

\textit{Transformation into ordered form}. For \(\theta\in\Theta_p\), define \(T(\theta)=\theta[\nu]\) if there is a \(\nu\in S_p\) such that \(\theta[\nu]\in \tTheta_p\); otherwise set \(T(\theta)=\theta\). Note that we can only transform \(\theta\) to be in \(\tTheta_p\) if the times \(t_j\) are all distinct and the coefficients \(\beta_j\) satisfy \(|\beta_j|\geq\delta\). But since the prior distribution assigns probability one to both events by Condition~\ref{cond:condition-ident}, we have \(\Pi(T(\theta)\in\tTheta)=1\). It will be useful later to observe that for any set \(B\subseteq \tTheta_p\), if we denote \(B[\nu] = \{\theta[\nu]: \theta \in B\}\), we have
\begin{equation}\label{eq:inverse-image-T}
  \bigcup_{\nu\in S_p} B[\nu] = T^{-1}(B).
\end{equation}

\textit{Collapsed model}. Let \(\tilde \Pi_T\) denote the distribution of \(T(\theta)\) restricted to \(\tTheta\), and note that we have \(P_{T(\theta)}(X,Y)=P_{\theta}(X,Y)\). Then the following model holds on the reduced space \(\tTheta\):
\begin{equation}\label{eq:model-linear-collapsed}
  \begin{aligned}
     & T(\theta) \sim \tilde{\Pi}_T,                                         \\
     & (X,Y)_{1:n}\mid T(\theta) \sim P_{T(\theta)}(X,Y) \quad \text{i.i.d.}
  \end{aligned}
\end{equation}

\textit{Verifying conditions}. We will now show that the conditions of Doob's theorem hold on \(\tTheta\). First, since \(\theta \mapsto P_\theta(X, Y)(A)\) is measurable on \(\Theta\), it is also measurable on \(\tTheta\) for all sets \(A\subseteq \mathcal X \times \mathcal Y\) measurable. For the identifiability part, suppose by contradiction that there are two parameters \(\theta,\theta' \in \tTheta\) such that \(\theta\neq \theta'\) and \(P_{\theta}(X, Y) = P_{\theta'}(X, Y)\). Then necessarily \(P_{\theta}(Y|X) = P_{\theta'}(Y|X)\), which in turn implies that the means and variances of these distributions are equal, i.e., \(\sigma^2=(\sigma^2)'\) and
\[
  \alpha + \sum_{j=1}^{p(\theta)} \beta_j X(t_j) = \alpha' + \sum_{j=1}^{p(\theta')} \beta'_j X(t'_j),
\]
where \(t_j\neq t_k\) and \(t_j'\neq t_k'\) for \(j\neq k\). Reordering the terms and combining those where the impact points coincide, we have a linear combination of marginals of the process \(X\) that equals a constant. By taking variances, we can see that all the coefficients in this linear combination must vanish, since the covariance function of the process is strictly positive definite. But \(\beta_j\) and \(\beta_j'\) cannot be \(0\) for any \(j\) (by definition of \(\tTheta\)), so it must be the case that \(\theta'=\theta[\nu]\) for some \(\nu \in S_p\), where \(p=p(\theta)=p(\theta')\). However, \(t_1< \cdots < t_p\) and \(t'_1 < \cdots < t'_p\), so  \(\nu\) must be the identity permutation, that is, \(\theta'=\theta\), contradicting the initial assumption.

\textit{Applying Doob's theorem}. Next we analyze the conclusions of Doob's theorem applied to the collapsed model~\eqref{eq:model-linear-collapsed}: there exists \(\tTheta_* \subseteq \tTheta\) with \(\Pi(T(\theta)\in\tTheta_*)=1\) such that, if \(T(\theta_0)\in \tTheta_*\) and we have \((X,Y)_{1:\infty} \sim P_{T(\theta_0)}(X, Y)\) i.i.d., then for any neighborhood \(B\subseteq \tTheta\) of \(T(\theta_0)\) it holds that
\begin{equation}\label{eq:consistency-collapsed}
  \Pi_n(T(\theta) \in B \mid (X,Y)_{1:n}) \xrightarrow[]{n\to\infty} 1 \quad P_{T(\theta_0)}^\infty-\text{a.s.}
\end{equation}
Now define \(\Theta_*\) to be the set of all parameters in \(\Theta\) that can be obtained by permuting a parameter in \(\tTheta_*\), i.e., \(\Theta_* = \bigcup_{p=1}^\infty \bigcup_{\nu\in S_p}(\tTheta_* \cap \tTheta_p)[\nu]\). Then, by~\eqref{eq:inverse-image-T} we have
\[
  \Pi(\theta\in \Theta_*) = \Pi\left(T(\theta) \in \bigcup_{p=1}^\infty (\tTheta_*\cap \tTheta_p)\right) = \Pi(T(\theta)\in \tTheta_*) = 1.
\]

\textit{Extending the result to \(\Theta\)}. Let \(\theta_0\in\Theta_*\), suppose that \((X,Y)_{1:\infty}\sim P_{\theta_0}(X, Y)\) i.i.d., and define \(p_0=p(\theta_0)\) and \(S_0 = S_{p_0}\). Fix \(\epsilon\in(0,1)\) and consider the set \(B\) of all ordered parameters that are within \(\epsilon\) of the ordered version of \(\theta_0\), i.e.,
\begin{equation}\label{eq:set-B}
  B = \left\{ \theta \in \tTheta: d_\Theta(T(\theta_0), \theta) < \epsilon \right\}.
\end{equation}
Observe that, since \(\epsilon < 1\), we have \(B\subseteq \tTheta_{p_0}\) by definition of \(d_\Theta\). Moreover, \(\bigcup_{\nu\in S_{0}}B[\nu]\subseteq \tilde{B}(\theta_0, \epsilon)\). Then, again by~\eqref{eq:inverse-image-T}, we can write
\begin{equation}\label{eq:consistency-intermediate}
  \begin{aligned}
    \Pi_n(\theta \in \tilde{B}(\theta_0, \epsilon) \mid (X,Y)_{1:n}) & \geq \Pi_n(\theta \in \bigcup_{\nu\in S_{0}} B[\nu] \mid (X,Y)_{1:n}) \\
                                                                     & = \Pi_n(T(\theta) \in B \mid (X,Y)_{1:n}).
  \end{aligned}
\end{equation}
Now, \(T(\theta_0)\in{\tTheta_*}\) because \(\theta_0\in\Theta_*\), and in that case we know that the collapsed model is consistent at \(T(\theta_0)\). Note that we also have \((X,Y)_{1:\infty} \sim P_{T(\theta_0)}(X, Y)\) i.i.d.\ (since \(P_{\theta_0}=P_{T(\theta_0)}\)), and the set \(B\) in~\eqref{eq:set-B} is a neighborhood of \(T(\theta_0)\) in \(\tTheta\). Then, by~\eqref{eq:consistency-collapsed} we have \(\Pi_n(T(\theta) \in B | (X,Y)_{1:n}) \xrightarrow[]{n\to\infty} 1, \ P_0^\infty (X,Y)-\text{a.s.}\), and this fact together with~\eqref{eq:consistency-intermediate} proves consistency for \(\theta_0\) in the original model~\eqref{eq:model-linear}. Lastly, since \(\epsilon < 1\) implies \(\tilde{B}(\theta_0,\epsilon) \subseteq \Theta_{p_0}\), we have also proved the second assertion of our theorem:
\begin{align*}
  \Pi_n(\mathcal P=p_0\mid (X,Y)_{1:n}) & = \Pi_n(\theta \in \Theta_{p_0}\mid (X, Y)_{1:n})                                 \\
                                        & \geq \Pi_n(\theta \in\tilde{B}(\theta_0,\epsilon) \mid (X,Y)_{1:n})               \\
                                        & \xrightarrow[n\to\infty]{} 1 \quad P_0^\infty (X,Y)-\text{a.s.}\tag*{\(\square\)} 
\end{align*}

\subsubsection*{Proof of Proposition~\ref{prop:consistency-lebesgue-linear}}

Define \(\Theta_*\) as in the proof of Theorem~\ref{th:consistency-doob-linear} above. Recall that \(\Pi(\Theta_*)=1\), and observe that
\[
  0=\Pi(\Theta \setminus \Theta_*) = \sum_{p=1}^\infty \Pi(\Theta_p \setminus \Theta_*|\mathcal P = p)\Pi(\mathcal P = p).
\]
Since \(\Pi(\mathcal P = p)> 0\) for all \(p\), we have \(\Pi(\Theta_p\setminus \Theta_*|\mathcal P = p) = 0\) for all \(p\). Now, for \(\nu \in S_p\), let \(\mu^\nu_p\) be the distribution of \(\theta[\nu]|\mathcal P=p\) under the model, and note that \((\Theta_p\setminus \Theta_*)[\nu] = \Theta_p \setminus \Theta_*\) by definition of \(\Theta_*\). Thus, if \(id\) is the identity permutation, for all \(\nu\in S_p\) it holds that
\begin{equation}\label{eq:measure-zero}
  \mu^\nu_p (\Theta_p \setminus \Theta_*) = \mu_p^{id} (\Theta_p \setminus \Theta_*) = \Pi (\Theta_p \setminus \Theta_*\mid\mathcal P = p) =0.
\end{equation}
Lastly, \(\lambda_p \ll \sum_{\nu\in S_p}\mu^\nu_p\) by assumption, where \(\ll\) denotes absolute continuity, and this together with~\eqref{eq:measure-zero} implies that \(\lambda_p(\Theta_p \setminus \Theta_*)=0\). But this is valid for all \(p\in\N\), so we can conclude that the inconsistency set satisfies \(\lambda_\infty(\Theta\setminus \Theta_*) = \sum_{p=1}^\infty \lambda_p(\Theta_p\setminus \Theta_*)=0\), as claimed.\qed{}

\subsubsection*{The case of fixed dimension}

As we said earlier, the proofs above can be modified in a straightforward way to establish consistency when the number of components \(p\) is fixed.

\begin{corollary}
  Assume model~\eqref{eq:model-linear} with a fixed value of \(p\), where \(\Theta_p\) is the (finite-dimensional) parameter space. If Condition~\ref{cond:condition-ident}-\ref{cond:condition-ident-1} holds, then the posterior is consistent at \(\theta_0\in\Theta_p\) with \(\Pi\)-probability one. Moreover, if condition (ii) in Proposition~\ref{prop:consistency-lebesgue-linear} holds as well, then the inconsistency set has Lebesgue measure zero.
\end{corollary}

In this case Doob's theorem applies directly under the sole condition that the times be distinct with prior probability one. The coefficients \(\beta_j\) do not cause a problem for identifiability now, since the dimension of every parameter is the same. Note that by allowing \(\beta_j\) to be zero we can circumvent the fact that the true value of the parameter might not have exactly \(p\) components, as long as \(p\) is larger than the true value \(p(\theta_0)\). Indeed, if \(\theta_0\in\Theta\) with \(p(\theta_0) < p\) and \((X,Y)_{1:\infty} \sim P_{\theta_0}(X,Y)\) i.i.d., then we can find \(\theta_1\in\Theta_p\), which is just \(\theta_0\) completed with zeros, such that \(P_{\theta_0}\)(X, Y) = \(P_{\theta_1}(X,Y)\) and the result holds almost surely.

\subsection{Consistency and contraction rates via Schwartz's theorem}

In what follows, we introduce the notation \(a\lesssim b\) to mean inequality up to a multiplicative positive constant. We write \(g(n)=O(h(n))\) for positive functions \(g\) and \(h\) if there exist \(C,n_0>0\) such that \(g(n)\leq Ch(n)\) for all \(n\geq n_0\). Intuitively, this means that \(g(n)\) is of order smaller than or equal to \(h(n)\), i.e., \(\lim\sup_{n\to\infty} g(n)/h(n) < \infty\). Moreover, we say that \(g(n)=o(h(n))\) if for every \(c>0\) there exists \(n_0\in\mathbb{N}\) such that \(g(n)\leq ch(n)\) for all \(n\geq n_0\). Equivalently, this means that \(g(n)\) grows much slower than \(h(n)\), i.e., \(\lim_{n\to\infty} g(n)/h(n) = 0\).

Additionally, when the process \(X\) is understood as a mapping \(X:\Omega \to \mathcal C[0,1]\), we will make frequent use of the correspondence \(\mathbb E[g(X)]=\mathbb E_{Q_X}[g]\) for a measurable function \(g\) on \(\mathcal C[0,1]\), where the latter quantity is an integral in the Bochner sense. In a similar vein, if \(e_t:\mathcal C[0,1]\to \mathbb R\) is the evaluation functional \(x\mapsto x(t)\), given \(h:\mathbb R \to \mathbb R\) we will denote \(\mathbb E_{Q_X}[h\circ e_t]\) by \(\mathbb E_{Q_X}[h(x(t))]\). In particular, for \(h=id\) we have
\[
  \mathbb E_{Q_X}[x(t)]:=\mathbb E_{Q_X}[e_t]=\mathbb E[e_t(X)]=\mathbb E[X(t)].
\]

Next, we present a collection of standard results on covering numbers that will be extensively used in the subsequent proofs \citep[see e.g.][]{van1996weak}.

\begin{lemma}\label{lemma:covering-number}
  Suppose \(d\) is a distance on the set of densities \(\mathcal F=\{f_\theta: \theta\in\Theta\}\).
  \begin{enumerate}[label=(\roman*)]
    \item If \(\epsilon\leq \epsilon'\) then \(N(\epsilon', \mathcal F, d) \leq N(\epsilon, \mathcal F, d)\).
    \item If \(d'\) is another distance on \(\mathcal F\) such that \(d\leq \tilde C d'\), then \(N(\epsilon, \mathcal F, d) \leq N(\epsilon/\tilde C, \mathcal F, d')\) for all \(\epsilon > 0\).
    \item If there is a constant \(\tilde C>0\) such that \(d(f_{\theta}, f_{\theta'})\leq \tilde C \|\theta - \theta'\|\) for every \(\theta, \theta' \in  \Theta\), then the corresponding covering numbers satisfy \(N(\epsilon, \mathcal F, d)\leq N(\epsilon/\tilde C, \Theta, \|\cdot\|)\) for all \(\epsilon > 0\). 
    \item If \(\Theta_p \subset \R^p\) has finite diameter \(R_p\), then \(N(\epsilon, \Theta_p, \|\cdot\|) \leq (3R_p/\epsilon)^p\) for all \(0<\epsilon<R_p\). 
  \end{enumerate}
\end{lemma}

Recall that our full parameter space is the infinite union \(\Theta=\bigcup_{p\in\N}\Theta_p\). As we said in Section~\ref{sec:consistency-schwartz}, for convenience we take the finite-dimensional sets \(\Theta_p\) to be compact spaces. More specifically, we work with
\[
  \Theta_p = \{(b, \tau, \alpha, \sigma^2): b\in [-M_B, M_B]^p, \tau \in [0,1]^p, \alpha\in [-M_A, M_A], \sigma^2 \in [\sigma^2_{\text{min}}, \sigma^2_{\text{max}}]\},
\]
where \(M_A,M_B>0\) and \(0<\sigma^2_{\text{min}}<\sigma^2_{\text{max}}\). For a given parameter \(\theta\in\Theta\) we consider the function \(\mu_\theta:\mathcal C[0,1]\to\mathbb R\) given by \(\mu_\theta(x)=\alpha + \sum_{j=1}^{p(\theta)} \beta_j x(t_j)\), which allows us to express the density of \(Y|X=x, \theta\) in our linear model as the normal density
\[
  f_\theta(y\mid x) = \frac{1}{\sqrt{2\pi\sigma^2}}\exp\left\{-\frac{(y-\mu_\theta(x))^2}{2\sigma^2}\right\}.
\]

\subsubsection*{Proof of Theorem~\ref{th:consistency-schwartz-linear}}

We need to verify the conditions of Theorem~\ref{th:consistency-hellinger}. As was previously announced, we will be using the sieve of parameters \(\Theta_n = \bigcup_{p=1}^{p_n} \Theta_p\), and the corresponding sieve of densities \(\mathcal F_n=\{f_\theta\in \mathcal F: \theta \in \Theta_n\}\). The idea is to select a suitable growth order for the upper limit \(p_n\) in the sieve so that all conditions hold.

\noindent\textit{Condition (i): bounding the metric entropy}

\noindent First, we claim that there exists \(C^{(p)}>0\) so that \(d_H(f_\theta, f_{\theta'})\leq C^{(p)}\|\theta - \theta'\|\) for all \(\theta, \theta'\) in a fixed (finite-dimensional) \(\Theta_p\). Starting with the function \(\mu_\theta\) and a given trajectory \(x\) of \(X\), rewrite the absolute difference in means as
\[
  |\mu_\theta(x) - \mu_{\theta'}(x)| = \left| (\alpha - \alpha') + \sum_{j=1}^p \left(\beta_j (x(t_j) - x(t'_{j}))+ (\beta_j - \beta'_{j})x(t'_{j})\right) \right|.
\]
Now from the triangle inequality, the Cauchy-Schwarz inequality and the fact that \((a+b)^2 \leq 2(a^2 + b^2)\), the following holds \(Q_X\)-a.s.:
\[
  \frac{1}{2}(\mu_\theta(x) - \mu_{\theta'}(x))^2 \leq (\alpha - \alpha')^2 + 2p\left(M_B^2\sum_{j=1}^p |x(t_j) - x(t_j')|^2 + \sum_{j=1}^p x(t'_j)^2|\beta_j - \beta'_j|^2\right).
\]
Note that \(\|\theta-\theta'\|^2\) is always bigger than the sum of just some of the squared components of \(\theta - \theta'\). Using the Lipschitz property in Condition~\ref{cond:condition-schwartz-X} and integrating on both sides with respect to \(Q_X\), we get
\[
  \frac{1}{2}\mathbb E_{Q_X}\left[(\mu_\theta - \mu_{\theta'})^2\right] \leq \|\theta - \theta'\|^2 + 2p\left(M_B^2 \|\theta - \theta'\|^2 \mathbb{E}_{Q_X}[L^2] + \|\theta - \theta'\|^2 \sup_{t\in[0,1]} \mathbb{E}_{Q_X}[x(t)^2] \right).
\]
Recall that the covariance function \(K\) of \(X\) is assumed to be continuous, so the quantity \(\sup_{t\in[0,1]} \mathbb{E}_{Q_X}[x(t)^2]=\sup_{t\in[0,1]}K(t,t)\) is finite, and so is the integral \(\mathbb{E}_{Q_X}[L^2]\) owing to Condition~\ref{cond:condition-schwartz-X}. Then, we can bring all constants together (which are positive and depend on \(M_B\), \(p\) and \(X\)) into \(C_1>0\) to write
\begin{equation}\label{eq:th9-bound1}
  \mathbb{E}_{Q_X}\left[(\mu_\theta - \mu_{\theta'})^2\right] \leq C_1 \|\theta - \theta'\|^2.
\end{equation}

Now we compute \(d^2_H(f_\theta, f_{\theta'})\). Denote by \(\sigma^2_1\) and \(\sigma^2_2\) the variance parameters corresponding to \(\theta\) and \(\theta'\), respectively. Using the inequality \(e^{-a}\geq 1-a\) for \(a\geq 0\) and the well-known expression for the Hellinger distance between two univariate normal distributions \citep[e.g.][p.~51]{pardo2018statistical}, we have
\begin{align}\label{eq:th9-bound2}
  d^2_H(f_\theta, f_{\theta'}) & = 1 - \int \left(\int\sqrt{f_\theta(y|x) f_{\theta'}(y|x)}\, d\lambda(y)\right) dQ_X(x) \notag                                                                                                                                \\
                               & = 1 - \sqrt{\frac{2\sigma_1\sigma_2}{\sigma_1^2 + \sigma_2^2}}\bigintssss \exp\left\{ -\frac{(\mu_\theta(x) - \mu_{\theta'}(x))^2}{4(\sigma_1^2 + \sigma_2^2)} \right\} dQ_X(x) \notag                                        \\
                               & \leq 1 - \sqrt{\frac{2\sigma_1\sigma_2}{\sigma_1^2+\sigma_2^2}} + \sqrt{\frac{2\sigma_1\sigma_2}{\sigma_1^2 + \sigma_2^2}} \bigintssss \frac{(\mu_\theta(x) - \mu_{\theta'}(x))^2}{4(\sigma_1^2+\sigma_2^2)}\, dQ_X(x) \notag \\
                               & \leq C_2\|\theta - \theta'\|^2 + C_3 \mathbb E_{Q_X}\left[(\mu_\theta - \mu_{\theta'})^2\right],
\end{align}
where \(C_2\) and \(C_3\) are positive constants that depend on \(\sigma^2_{\text{min}}\) and \(\sigma^2_{\text{max}}\). Obtaining \(C_3\) is straightforward; to get \(C_2\), let \(u=\sigma_1^2\), \(v=\sigma_2^2\) and \(S=\sqrt{\frac{2\sqrt{uv}}{u+v}}\), and note that
\[
  1 - \sqrt{\frac{2\sigma_1\sigma_2}{\sigma_1^2 + \sigma_2^2}} = 1-S = \frac{1-S^2}{1+S} \leq 1 - S^2.
\]
Now, a bit of algebraic manipulation yields
\begin{equation}\label{eq:th9-computation-S}
  1 - S^2 = 1 - \frac{2\sqrt{uv}}{u+v} = \frac{(\sqrt{u}-\sqrt{v})^2}{u+v} = \frac{(u-v)^2}{(u+v)(\sqrt{u} + \sqrt{v})^2},
\end{equation}
and after reverting the change of variables the denominator can be easily bounded in terms of \(\sigma^2_{\text{min}}\) and \(\sigma^2_{\text{max}}\). To conclude, combine~\eqref{eq:th9-bound2} with the previous bound~\eqref{eq:th9-bound1} to get the desired inequality
\begin{equation}\label{eq:th9-bound3}
  d^2_H(f_\theta, f_{\theta'}) \leq C^{(p)} \|\theta - \theta'\|^2,
\end{equation}
where \(C^{(p)}>0\) increases linearly with \(p\).

Next, consider the distance \(d_\Theta\) on \(\Theta\) given in~\eqref{eq:metric-doob}. From~\eqref{eq:th9-bound3} one can readily verify that the Hellinger distance satisfies \(d_H(f_\theta, f_{\theta'}) \leq C(n) d_\Theta(\theta, \theta')\) for \(\theta, \theta' \in \Theta_n\), where \(C(n)=O(p_n)\). Moreover, the relationship \(d_\Theta(\theta,\theta')\leq \|\theta - \theta'\|\) always holds on \(\Theta_p\). Denote by \(R_p\) the diameter of \((\Theta_p, \|\cdot\|)\), and note that each \(\Theta_p\) is contained in a hypercube of dimension \(2p+2\) and length independent of \(p\), say \(\ell\). Then, \(R_p\leq \ell\sqrt{2p+2}\) for all \(p\). Putting it all together and invoking Lemma~\ref{lemma:covering-number}, for sufficiently small \(\epsilon>0\) we have
\begin{align*}
  N(\epsilon, \mathcal F_n, d_H) & \leq N(\epsilon/C(n), \Theta_n, d_\Theta)                               \\
                                 & \leq \sum_{p=1}^{p_n} N(\epsilon/C(n), \Theta_p, \|\cdot\|)             \\
                                 & \leq p_n \left(\frac{3C(n)\ell\sqrt{2p_n+2}}{\epsilon}\right)^{2p_n+2}.
\end{align*}
Taking logarithms, we may write
\begin{equation}\label{eq:th9-bound-covering}
  \log N(\epsilon, \mathcal F_n, d_H) \leq \log p_n + (2p_n+2) \log\left(\frac{3C(n) \ell\sqrt{2p_n+2}}{\epsilon}\right).
\end{equation}
Since \(C(n)=O(p_n)\), it suffices to choose \(p_n\) so that \(p_n\log p_n = o(n)\), or equivalently, \(p_n=o\left(n/\log n\right)\). In this way the right-hand side is eventually less than \(n\epsilon^2\), satisfying the first condition of Theorem~\ref{th:consistency-hellinger}.\\

\noindent\textit{Condition (ii): bounding the prior tail mass}

\noindent We need to show that the tail mass of the prior decays at least exponentially with \(n\). An initial crude bound on this tail mass in the parameter space is
\[
  \Pi(\theta \in \Theta \setminus \Theta_n) = \Pi\left(\theta \in \bigcup_{p=p_n + 1}^\infty \Theta_p\right)  = \sum_{p=p_n+1}^\infty \Pi(\theta \in \Theta_p) \leq \sum_{p=p_n+1}^{\infty} \Pi(p).
\]
Since \(\Pi(p)\lesssim \exp\{-\delta p(\log p)^k\}\) for sufficiently large \(p\) and multiplicative constants do not affect the bound, we can work without loss of generality with \(\Pi(p)=\exp\{-\delta p(\log p)^k\}\). Then, we have
\begin{align*}
  \Pi(\Theta \setminus \Theta_n) & \leq \sum_{p=p_n+1}^\infty \exp\{-\delta p(\log p)^k\}                                \\
                                 & = \sum_{m=1}^\infty \exp\{-\delta (p_n+m)(\log (p_n+m))^k\}                           \\
                                 & \leq \sum_{m=1}^\infty \exp\{-\delta p_n (\log p_n)^k\}\exp\{-\delta m (\log p_n)^k\} \\
                                 & = \exp\{-\delta p_n (\log p_n)^k\} \sum_{m=1}^\infty r_n^m,
\end{align*}
where \(r_n=\exp\{-\delta (\log p_n)^k\}<1\). Summing the geometric series, we get
\begin{equation}\label{eq:proof-infinite-prior-tail-bound}
  \Pi(\Theta \setminus \Theta_n) \leq \exp\{-\delta p_n (\log p_n)^k\} \cdot \frac{\exp\{-\delta (\log p_n)^k\}}{1-\exp\{-\delta (\log p_n)^k\}}.
\end{equation}
Now choose \(p_n \geq \frac{c_1n}{(\log n)^k}\) with \(c_1>0\), and observe that for sufficiently large \(n\) we have
\[
  \log p_n \geq \log(c_1) + \log n - k\log \log n \geq \frac{1}{2}\log n.
\]
In particular, \(p_n\to\infty\). Hence, the fraction in~\eqref{eq:proof-infinite-prior-tail-bound} goes to zero as \(n\) grows, and there exists a constant \(c_2>0\) such that, for large \(n\),
\[
  \frac{\exp\{-\delta (\log p_n)^k\}}{1-\exp\{-\delta (\log p_n)^k\}}\leq c_2.
\]
Then, using the lower bounds on \(p_n\) and \(\log p_n\) it follows that eventually
\begin{align*}
  \Pi(\Theta \setminus \Theta_n) & \leq c_2 \exp\{-\delta p_n (\log p_n)^k\}                                                    \\
                                 & \leq c_2 \exp\left\{-\delta \frac{c_1 n}{(\log n)^k}\left(\frac{1}{2}\log n\right)^k\right\} \\
                                 & = c_2 \exp\left\{-\tilde{C}n\right\}.
\end{align*}
As we said before, multiplicative constants do not significantly affect the bound, so we can absorb \(c_2\) into the exponential. Indeed, observe that \(c_2\leq e^{\xi n}\) for any \(\xi>0\) and \(n\) sufficiently large. Then, we have \(c_2e^{-\tilde{C}n} \leq e^{-(\tilde{C}-\xi)n}\), so it suffices to choose \(\xi\) small enough so that \(C=\tilde{C}-\xi>0\). We still need to translate the bound to the prior on \(\mathcal F \setminus \mathcal F_n\), but this is immediate, since
\[
  \Pi_{\mathcal F}(\mathcal F \setminus \mathcal F_n) = \Pi(\theta\in\Theta: f_\theta \in \mathcal F \setminus \mathcal F_n) \leq \Pi(\Theta \setminus \Theta_n).
\]

Note that the lower bound \(p_n \gtrsim \frac{n}{(\log n)^k}\) does not contradict the upper bound \(p_n=o(n/\log n)\) needed for condition (i) above, so we can choose an appropriate growth rate for \(p_n\) in our sieve to make both conditions hold simultaneously.\\

\noindent\textit{Condition on \(f_0\): verifying the KL support}

\noindent First, we proceed by directly computing \(D_{\mathrm{KL}}(f_0 \,\|\, f_\theta)\) for \(\theta\in\Theta_0\), which again has a familiar formula in the normal case \citep[e.g.][p.~47]{pardo2018statistical}, and bounding the result by a similar bound as in condition (i) above:
\begin{align}\label{eq:th9-bound4}
  D_{\mathrm{KL}}(f_0 \,\|\, f_\theta) & =  \int \left(f_0(y|x)\log\left(\frac{f_0(y|x)}{f_\theta(y|x)}\right)  d\lambda(y)\right)dQ_X(x)  \notag                                                              \\
                                       & = \frac{1}{2}\left(\frac{\sigma_0^2}{\sigma^2} - 1 - \log \frac{\sigma_0^2}{\sigma^2} \right) + \frac{1}{2\sigma^2}\int (\mu_0(x) - \mu_\theta(x))^2\, dQ_X(x) \notag \\
                                       & \leq C_1\|\theta_0 - \theta\|^2 + C_2\|\theta_0-\theta\|^2 \notag                                                                                                           \\
                                       & = \tilde C\|\theta_0 - \theta\|^2.
\end{align}
To obtain \(C_1\), let \(u=\sigma_0^2/\sigma^2\) and \(I=[\sigma^2_{\text{min}}/\sigma^2_\text{max},\, \sigma^2_{\text{max}}/\sigma^2_\text{min}]\), and consider the function \(g(u)=u - 1 - \log u\) for \(u\in I\). Define the auxiliary function \(h(u) = g(u)/(u-1)^2\) for \(u\neq 1\) and \(h(1)=1/2\). It is immediate to show that \(h\) is continuous on \(I\), and thus it attains its maximum value
\[H :=\max_{u \in I} h(u)\geq 1/2 > 0.
\]
It follows that \(g(u) \leq H (u-1)^2\) for \(u\in I\), which gives us the desired bound after undoing the change of variables. Then, from~\eqref{eq:th9-bound4} and the behavior of \(\Pi\) on \(\Theta_0\) we can conclude that \(f_0\in\operatorname{KL}(\Pi_{\mathcal F})\), since for every \(\epsilon>0\) we have
\begin{align*}
  \Pi_{\mathcal F}(f_\theta\in \mathcal F: D_{\mathrm{KL}}(f_0 \,\|\, f_\theta) < \epsilon) & =\Pi(\theta \in \Theta: D_{\mathrm{KL}}(f_0 \,\|\, f_\theta) < \epsilon)      \\
                                                                                            & \geq \Pi(\theta \in \Theta_0: D_{\mathrm{KL}}(f_0 \,\|\, f_\theta) <\epsilon) \\
                                                                                            & \geq \Pi(\theta\in\Theta_0: \|\theta_0 - \theta\|^2 < \epsilon/\tilde C) > 0.
\end{align*}

\subsubsection*{Proof of Theorem~\ref{th:contraction-rate-schwartz-linear}}

We will show that the sequence \(\epsilon_n=Dn^{-\gamma}\) with \(0<\gamma<1/2\) and a certain \(D>0\) satisfies the conditions of Theorem~\ref{th:contraction-rate}, and thus is a posterior contraction rate for our model. Note that the specific value of \(D\) is of no relevance, since it is absorbed into the diverging sequence \(M_n\) in the definition of a contraction rate.

This time we will consider the same type of sieve as before, namely \(\Theta_n = \bigcup_{p=1}^{p_n}\Theta_p\), but with an upper limit given by
\[
  p_n=\left\lfloor\frac{n^{1-2\gamma}}{\log n}\right\rfloor,
\]
where \(\lfloor \cdot \rfloor\) is the floor function. As we will see, the conditions imposed in the statement of Theorem~\ref{th:contraction-rate-schwartz-linear} are precisely the ones needed to ensure that the corresponding sieve of densities \(\mathcal F_n=\{f_\theta\in\mathcal F: \theta \in \Theta_n\}\) satisfies the requirements of Theorem~\ref{th:contraction-rate}.

\noindent\textit{Condition (i): bounding the metric entropy}

\noindent From the reasoning in the proof of Theorem~\ref{th:consistency-schwartz-linear} above (see~\eqref{eq:th9-bound-covering}), we have the bound
\[
  \log N(\epsilon_n/2, \mathcal F_n, d_H)  \leq \log p_n + (2p_n+2)\log\left(\frac{6C(n) \ell\sqrt{2p_n+2}}{\epsilon_n}\right).
\]
Since \(\epsilon_n=Dn^{-\gamma}\), \(C(n)\lesssim  p_n\) eventually, and \(\sqrt{2p_n+2}\leq \sqrt{4p_n}\) for \(p_n\geq 1\), we can absorb all constants inside the second logarithm into \(c_1>0\). Indeed, for sufficiently large \(n\) we have
\begin{equation}\label{eq:th11-bound1}
  \log N(\epsilon_n/2, \mathcal F_n, d_H)  \leq \log p_n + (2p_n+2)\log\left(c_1 p_n^{3/2}n^\gamma\right).
\end{equation}
Then, given that \(p_n\leq \frac{n^{1-2\gamma}}{\log n}\), substituting into~\eqref{eq:th11-bound1} yields
\begin{align*}
  \log N(\epsilon_n/2, \mathcal F_n, d_H) & \leq \log \left(\frac{n^{1-2\gamma}}{\log n}\right)+ \frac{2n^{1-2\gamma}}{\log n}\left[\log c_1 + \log\left(\frac{n^{3/2-2\gamma}}{(\log n)^{3/2}}\right)\right] \\
                                          & \quad + 2\left[\log c_1 + \log\left(\frac{n^{3/2-2\gamma}}{(\log n)^{3/2}}\right)\right].
\end{align*}
Simplifying the expression in the right-hand side, the dominant term for large \(n\) is
\[
  2\left(\frac{3}{2}-2\gamma\right)n^{1-2\gamma} = (3-4\gamma)n^{1-2\gamma},
\]
and the rest of the terms are of order strictly smaller. This means that there exists a constant \(c_2>0\) as small as we want such that every term apart from the dominant term is eventually smaller than \(c_2 n^{1-2\gamma}\). Thus, for sufficiently large \(n\) we can write
\[
  \log N(\epsilon_n/2, \mathcal F_n, d_H)  \leq  (c_2 + 3-4\gamma)n^{1-2\gamma}.
\]
Since \(n\epsilon_n^2 =D^2n^{1-2\gamma}\), the conclusion follows for \(D^2>3-4\gamma\).\\

\noindent\textit{Condition (ii): bounding the prior tail mass}

\noindent We follow a similar reasoning as in the proof of Theorem~\ref{th:consistency-schwartz-linear} above. Consider without loss of generality that \(\Pi(p)=c_1\exp\{-\delta p \log p\}\). Then, we have
\begin{align*}
  \Pi(\Theta \setminus \Theta_n) & \leq \sum_{p>p_n} c_1\exp\{-\delta p \log p\}                                                        \\
                                 & = c_1\sum_{m=1}^\infty \exp\{-\delta (p_n+m) \log(p_n+m)\}                                           \\
                                 & \leq c_1\sum_{m=1}^\infty  \exp\{-\delta(p_n+m)\log p_n\}                                            \\
                                 & \leq c_1\exp\{-\delta p_n \log p_n\} \sum_{m=1}^\infty \left(\exp\{-\delta \log p_n\}\right)^m       \\
                                 & = c_1\exp\{-\delta p_n \log p_n\} \cdot \frac{\exp\{-\delta \log p_n\}}{1-\exp\{-\delta \log p_n\}}.
\end{align*}
Now, since \(p_n\geq \frac{n^{1-2\gamma}}{2\log n}\), for large \(n\) we have
\begin{align*}
  \log p_n & \geq (1-2\gamma)\log n - \log (2\log n) \geq \frac{1-2\gamma}{2}\log n.
\end{align*}
In particular, \(p_n\to\infty\). Hence, there exists a constant \(c_2>0\) such that
\[
  \frac{\exp\{-\delta \log p_n\}}{1-\exp\{-\delta \log p_n\}}\leq c_2
\]
for large \(n\). Then, using the lower bounds on \(p_n\) and \(\log p_n\) it follows that
\[
  \Pi(\Theta \setminus \Theta_n) \leq c_1c_2 \exp\left\{\frac{-\delta (1-2\gamma)}{4}n^{1-2\gamma}\right\}.
\]
At this point we integrate \(c_1c_2\) into the exponential. Let \(c_3=\delta(1-2\gamma)/4>0\) and choose \(0<\xi<c_3\) such that \(c_1c_2\leq \exp\{\xi n^{1-2\gamma}\}\) for sufficiently large \(n\). Then, if \(c_4=c_3-\xi\), we have \(\Pi(\Theta \setminus \Theta_n) \leq \exp\{-c_4n^{1-2\gamma}\}\). Now, observe that \((C+4)n\epsilon_n^2=(C+4)D^2n^{1-2\gamma}\), where \(C\) is the constant appearing in condition (ii) of Theorem~\ref{th:contraction-rate}, so it suffices to choose \(C\leq c_4/D^2 -4\) and we are done. Given the lower bound on \(D\) imposed in condition (i) above, it can be easily verified that this upper bound on \(C\) is positive as long as \(\xi\) is chosen small enough and \(\delta > 16(3-4\gamma)/(1-2\gamma)\), which is true by assumption.\\

\noindent\textit{Condition (iii): sufficient prior mass on \(B_2(f_0, \epsilon_n)\)}

\noindent Let \(\theta\in\Theta_0\). We know from the proof of Theorem~\ref{th:consistency-schwartz-linear} (see~\eqref{eq:th9-bound4}) that there exists a constant \(C_1>0\) such that \(D_{\mathrm{KL}}(f_0 \,\|\, f_\theta)\leq C_1^2 \|\theta_0 - \theta\|^2\). On the other hand, direct calculation \citep[see e.g.][]{choi2007posterior} shows that, under Gaussianity,
\[
  V_2(f_0, f_\theta) = \frac{(\sigma_0^2 - \sigma^2)^2}{2\sigma^4} + \frac{\sigma_0^2}{\sigma^4}\int (\mu_0(x) - \mu_\theta(x))^2\, dQ_X(x).
\]
Following the same reasoning as we did multiple times before, we can see that \(V_2(f_0, f_\theta)\) is bounded above by \(C_2^2\|\theta_0 - \theta\|^2\) with \(C_2>0\). Therefore, if \(\|\theta_0 - \theta\| < \epsilon_n/\tilde C\) where \(\tilde C = \max\{C_1, C_2\}\), it follows that \(D_{\mathrm{KL}}(f_0 \,\|\, f_\theta)< \epsilon_n^2\) and \(V_2(f_0, f_\theta) < \epsilon_n^2\). Thus, if we denote by \(B(\theta_0, \epsilon)\) the open ball of radius \(\epsilon\) centered at \(\theta_0\) in \(\Theta_0\), we have just shown that \(\theta \in B(\theta_0,\epsilon_n/\tilde{C})\) implies \(f_\theta \in B_2(f_0, \epsilon_n)\), and hence
\begin{equation}\label{eq:th11-bound2}
  \Pi_{\mathcal F}(B_2(f_0, \epsilon_n)) \geq \Pi(B(\theta_0,\epsilon_n/\tilde{C})).
\end{equation}

Now, to get the final lower bound required for condition (iii) in Theorem~\ref{th:contraction-rate}, we will use the following auxiliary result.

\begin{lemma}\label{lemma:th11-lemma-lebesgue}
  If the prior \(\Pi\) has a density on \(\Theta_0\) with respect to Lebesgue measure that is bounded away from zero in a neighborhood of \(\theta_0\), then there is a constant \(c>0\) such that \(\Pi(B(\theta_0, r))\geq cr^{2p_0+2}\) for all sufficiently small \(r>0\), where \(p_0=p(\theta_0)\).
\end{lemma}
\begin{proof}
  Let \(g\) be the density of \(\Pi\) restricted to \(\Theta_0\), and let \(r_0>0\) be a radius such that \(g(\theta)\geq g_0>0\) for all \(\theta\in B(\theta_0, r_0)\). Then, for every \(0<r\leq r_0\) we have
  \[
    \Pi(B(\theta_0, r)) =\int_{B(\theta_0, r)} g\, d\lambda_0 \geq g_0 \lambda_0(B(\theta_0, r)) = g_0 c_d r^d,
  \]
  where \(d=2p_0+2\), \(\lambda_0\) is the Lebesgue measure on \(\Theta_0\) and \(c_d>0\) is a constant that depends only on the dimension of \(\Theta_0\).
  \end{proof}
  
  Applying Lemma~\ref{lemma:th11-lemma-lebesgue} with \(r=\epsilon_n/\tilde C=Dn^{-\gamma}/\tilde C\) and taking logarithms in~\eqref{eq:th11-bound2}, we have, for large \(n\),
  \[
  \log \Pi_{\mathcal F}(B_2(f_0, \epsilon_n)) \geq \log c + (2p_0+2)\log\left(\frac{Dn^{-\gamma}}{\tilde C}\right) \geq \tilde c -2\gamma p_0\log n \geq -4\gamma p_0\log n,
  \]
where \(\tilde c > 0\) is an irrelevant constant in the asymptotic regime. To finish the proof we want to find a constant \(C>0\) such that \(\log \Pi_{\mathcal F}(B_2(f_0, \epsilon_n)) \geq - Cn\epsilon_n^2=-CD^2n^{1-2\gamma}\), but \(\log n\) grows much slower than \(n^{1-2\gamma}\), so the inequality \(-4\gamma p_0\log n \geq -CD^2n^{1-2\gamma}\) holds for any choice of \(C>0\) and sufficiently large \(n\).

\subsubsection*{Proof of Corollary~\ref{th:corollary-contraction-rate}}

We simply use the fact that, if \(d\) is another distance on \(\mathcal F\) with \(d\leq \tilde C d_H\), then
\[
  \Pi_n(\theta: d(f_0, f_\theta)>\epsilon\mid \text{data}) \leq \Pi_n(\theta: d_H(f_0, f_\theta)>\epsilon/\tilde C\mid \text{data}).
\]

For the \(L^1\)-distance, by the Cauchy-Schwarz inequality we have
\begin{align*}
  d_1^2(f_0, f_\theta) & = \left(\int |f_0 - f_\theta|\, d\rho\right)^2                                                                          \\
                       & = \left(\int \left|\sqrt{f_0} - \sqrt{f_\theta}\right|\left(\sqrt{f_0} + \sqrt{f_\theta}\right) d\rho\right)^2          \\
                       & \leq \int\left(\sqrt{f_0} - \sqrt{f_\theta}\right)^2 d\rho \cdot \int \left(\sqrt{f_0} + \sqrt{f_\theta}\right)^2 d\rho \\
                       & \leq 8 d^2_H(f_0, f_\theta).
\end{align*}
In the last step we have used the alternative definition of the Hellinger distance, which can be expressed as \(2d_H^2(f_0, f_\theta) = \int\left(\sqrt{f_0} - \sqrt{f_\theta}\right)^2 d\rho \), the basic inequality \((a + b)^2 \leq 2(a^2 + b^2)\), and the fact that both \(f_0\) and \(f_\theta\) integrate to \(1\) over the whole space.

For the mean-variance discrepancy distance, recall the expression for the squared Hellinger distance between two normal densities, which can be formulated in this case as \(d^2_H(f_0, f_\theta)= \int g_{0,\theta}(x)\, dQ_X(x)\), with
\[
  g_{0,\theta}(x) = 1 - \sqrt{\frac{2\sigma_0\sigma}{\sigma_0^2 + \sigma^2}}\exp\left\{ -\frac{(\mu_0(x) - \mu_{\theta}(x))^2}{4(\sigma_0^2 + \sigma^2)} \right\}.
\]
Define \(\Delta_\mu = \mu_0-\mu_\theta\), \(\Delta_{\mu_M} = [\mu_0]_M - [\mu_\theta]_M\) and \(S=\sqrt{\frac{2\sigma_0\sigma}{\sigma_0^2+\sigma^2}}\). On the one hand, using the trivial bounds \(e^{-a}\leq 1\) and \(S\leq 1\), we have
\[
  g_{0,\theta}(x) \geq 1 - S = \frac{1-S^2}{1+S} \geq \frac{1-S^2}{2} \geq C_1(\sigma_0^2 - \sigma^2)^2,
\]
where the last inequality comes from the computations in~\eqref{eq:th9-computation-S}. On the other hand, observe that \(\Delta^2_\mu \geq \Delta^2_{\mu_M}\) and \(\Delta^2_{\mu_M}\leq 4M^2\). Use these facts together with the bound \(1-e^{-a}\geq \frac{a}{1+a}\) for \(a\geq 0\) to get
\[
  g_{0,\theta}(x)  \geq  1 - \exp\left\{ -\frac{\Delta^2_{\mu_M}(x)}{4(\sigma_0^2+\sigma^2)} \right\} \geq C_2\Delta^2_{\mu_M}(x).
\]
Bringing together both pieces of information, we can take \(\tilde C = \min\{C_1, C_2\}\) and write
\[
  g_{0,\theta}(x) \geq \tilde C\left[\Delta^2_{\mu_M}(x) + (\sigma_0^2 - \sigma^2)^2\right],
\]
and integrating on both sides with respect to \(Q_X\) yields
\[
  d_H^2(f_0, f_\theta)\geq \tilde C\left[\|\Delta_{\mu_M}\|^2_{2,Q_X} + (\sigma_0^2 - \sigma^2)^2\right].
\]
To conclude, note that \(\sqrt{a^2+b^2}\geq (1/\sqrt{2})(a+b)\) for \(a,b\geq 0\), so we can take square roots on both sides and get the desired bound.

\subsubsection*{Examples of processes that satisfy Condition~\ref{cond:condition-schwartz-X}}

Condition~\ref{cond:condition-schwartz-X} implies that the process \(X\) is ``mean-square Lipschitz continuous'', i.e., there exists \(L>0\) such that \(\mathbb E[(X(t) - X(s))^2] \leq L (t-s)^2\) for all \(t, s \in [0,1]\). Looking at the derivations we did, we can substitute the almost sure Lipschitzness of the trajectories by this new condition and arrive at the same conclusions. Moreover, note that this guarantees, via Kolmogorov's continuity theorem, that the sample paths of \(X\) (after an eventual modification) are continuous functions. Apart from somewhat trivial examples, such as a deterministic process or a linear random process, we can show that processes with a sufficiently smooth kernel satisfy our requirements.

\begin{lemma}\label{lemma:mean-square-lipschitz}
  If the covariance function \(K\) of the process \(X\) is twice-continuously differentiable and the mean function \(\mathbb E[X(t)]\) vanishes for all \(t\in[0,1]\), then \(X\) is ``mean-square Lipschitz continuous'', that is, there exists \(L>0\) such that \(\mathbb E[(X(t) - X(s))^2] \leq L (t-s)^2\) for all \(t, s \in [0,1]\).
\end{lemma}
\begin{proof}
We know that \(\mathbb E[(X(t) - X(s))^2] = K(t,t)-2K(t,s) + K(s,s)\), and a Taylor expansion around the point \((s,s)\) shows that
\[
  K(t,t)-2K(t,s) + K(s,s) = \partial_{ts}K(s,s)(t-s)^2 + o((t-s)^2).
  \]
Since \(\partial_{ts}K\) is continuous on the compact set \([0,1]^2\), there exists a constant \(L>0\) such that
\[
  K(t,t)-2K(t,s) + K(s,s) \leq L(t-s)^2, \quad \text{for } t\to s.
\]
Lastly, this bound extends to all \(t,s\in[0,1]\) by continuity of \(K\).
\end{proof}

Note that the above condition is by no means a necessary one. For example, given a zero-mean second-order stochastic process \(\{Z(t): t \in [0,1]\}\), we consider the integrated process
\[
  X(t)=\int_0^t Z(u)\, du, \quad t\in [0,1].
\]
If \(K_Z(t,s)=\mathbb E[Z(t)Z(s)]\) is uniformly bounded above by a constant \(L>0\) on \([0,1]\), we have
\[
  \mathbb E\left[(X(t) - X(s))^2\right] = \int_s^t \int_s^t \mathbb E[Z(u)Z(v)]\, du\,dv \leq L(t-s)^2.
\]
For example, we might take \(Z(t)\) to be a standard Brownian motion, an Ornstein-Uhlenbeck process, or, in general, any second-order process with a continuous covariance function. In any case, by virtue of Lemma~\ref{lemma:mean-square-lipschitz} we can always pre-process the trajectories so that they are smooth enough for the conditions to hold, considering for instance the convolution with a smooth kernel.

\newpage
\section{Experimentation}\label{app:experiments}

\subsection{Overview of data sets and comparison algorithms}\label{app:data-sets}

To generate the simulated data sets for the comparison experiments in Section~\ref{sec:results}, we used four types of Gaussian process regressors commonly employed in the literature, each with a different covariance function:
\begin{description}
  \item [BM.] A Brownian motion, with kernel \(K_1(t,s)=\min\{t,s\}\).
  \item [fBM.] A fractional Brownian motion, with kernel \(K_2(t,s)=1/2(s^{2H} + t^{2H} - |t-s|^{2H})\) and Hurst parameter \(H=0.8\).
  \item [O-U.] An Ornstein-Uhlenbeck process, with kernel \(K_3(t,s)=e^{-|t-s|}\).
  \item [Gaussian.] A Gaussian process with a squared exponential kernel (also known as Gaussian kernel), namely \(K_4(t,s)=e^{-(t-s)^2/2\xi^2}\), where \(\xi=0.2\).
\end{description}

For the comparison algorithms themselves, we considered several frequentist methods which were selected among popular ones in FDA and machine learning in general. As specified earlier, variable selection and dimensionality reduction methods are part of a pipeline followed by a standard multiple regression technique. In the linear regression case, we chose the following algorithms:

\begin{description}
  \item [PLS1.] Partial least squares regression \citep[e.g.][]{abdi2010partial}.
  \item[Lasso.] Linear least squares with \(l^1\) regularization.
  \item [FPLS and FPLS1.] Functional PLS regression through basis expansion, implemented as in \citet{aguilera2010using}.
  \item [FLin.] Standard \(L^2\) functional linear regression model with fixed basis expansion and regularization.
  \item [APLS.] Functional partial least squares regression proposed by \citet{delaigle2012methodology}.
  \item [PLS.] Partial least squares for dimension reduction.
  \item [PCA.] Principal component analysis for dimension reduction.
  \item[Manual.] Dummy variable selection method with a pre-specified number of components (equispaced on \([0, 1]\)).
  \item [FPCA.] Functional principal component analysis.
  \item [Ridge.] Linear least squares with \(l^2\) regularization. This is used as the multiple regression that follows variable selection or dimensionality reduction methods.

\end{description}

In the logistic regression case, all the variable selection and dimension reduction techniques from above were also considered, with the addition of the following classification methods:

\begin{description}
  \item [QDA.] Quadratic discriminant analysis.
  \item [MDC.] Maximum depth classifier \citep[e.g.][]{ghosh2005maximum}.
  \item [Log.] Standard multiple logistic regression with \(l^2\) regularization. This is used as a one-stage method and also as the multiple regression that follows variable selection or dimensionality reduction methods.
  \item [LDA.] Linear discriminant analysis.
  \item [FNC.] Functional nearest centroid classifier with the \(L^2\)-distance.
  \item [FLog.] Functional RKHS-based logistic regression algorithm proposed in \citet{berrendero2023functional}.
  \item [FLDA.] Implementation of the functional version of linear discriminant analysis proposed in \citet{preda2007pls}.
  \item [FKNN.] Functional K-nearest neighbors classifier with the \(L^2\)-distance.
  \item [RKVS.] RKHS-based variable selection and classification method proposed in \citet{berrendero2018use}.
  \item [APLS+NC.] Functional PLS used as a dimension reduction method, as proposed in \citet{delaigle2012achieving} in combination with the nearest centroid (NC) algorithm.
\end{description}

The main hyperparameters of all these algorithms were selected by 10-fold cross-validation, and for those that have a number of components to select, we set 10 as the maximum value so that comparison with our own methods are fair. In particular, regularization parameters are searched among 20 values in the logarithmic space \([10^{-4}, 10^4]\), the number of basis elements for cubic spline bases is in \(\{4,5,\dots,10\}\), the number of basis elements for Fourier bases is one of \(\{1,3,5,7,9\}\), and the number of neighbors in the KNN classifier is in \(\{3,5,7,9,11,13\}\). Most algorithms have been taken from the libraries \textit{scikit-learn} \citep{pedregosa2011scikit} and \textit{scikit-fda} \citep{ramos2024scikit}, the first oriented to machine learning in general and the second to FDA in particular. However, some methods were not found in these packages and had to be implemented from scratch. This is the case of the FLDA, FPLS and APLS methods, which we coded following the corresponding articles.

\subsection{Simulations with non-Gaussian regressors}\label{app:non-gp}

We performed an additional set of experiments in linear and logistic regression in which the regressors are not Gaussian processes (GPs), to see if our methods would hold up in this case. These experiments where run in the same conditions as those reported in Section~\ref{sec:results}.

\subsubsection*{Functional linear regression}

We use a geometric Brownian motion (GBM) as the regressor variable, defined as \(X(t)=\exp\{{\operatorname{BM}(t)}\}\), where \(\operatorname{BM}(t)\) is a standard Brownian motion. In this case we consider two data sets, one with a RKHS response and one with an \(L^2\) response, both with the same parameters as in the corresponding data sets in Section~\ref{sec:results}. The comparison results can be seen in Figure~\ref{fig:reg_non_gp}: in this case our methods still get better results under the RKHS model, while the results under the \(L^2\)-model are essentially the same, which is a positive outcome.

\begin{figure}[ht!]
  \centering
  \includegraphics[width=.6\textwidth]{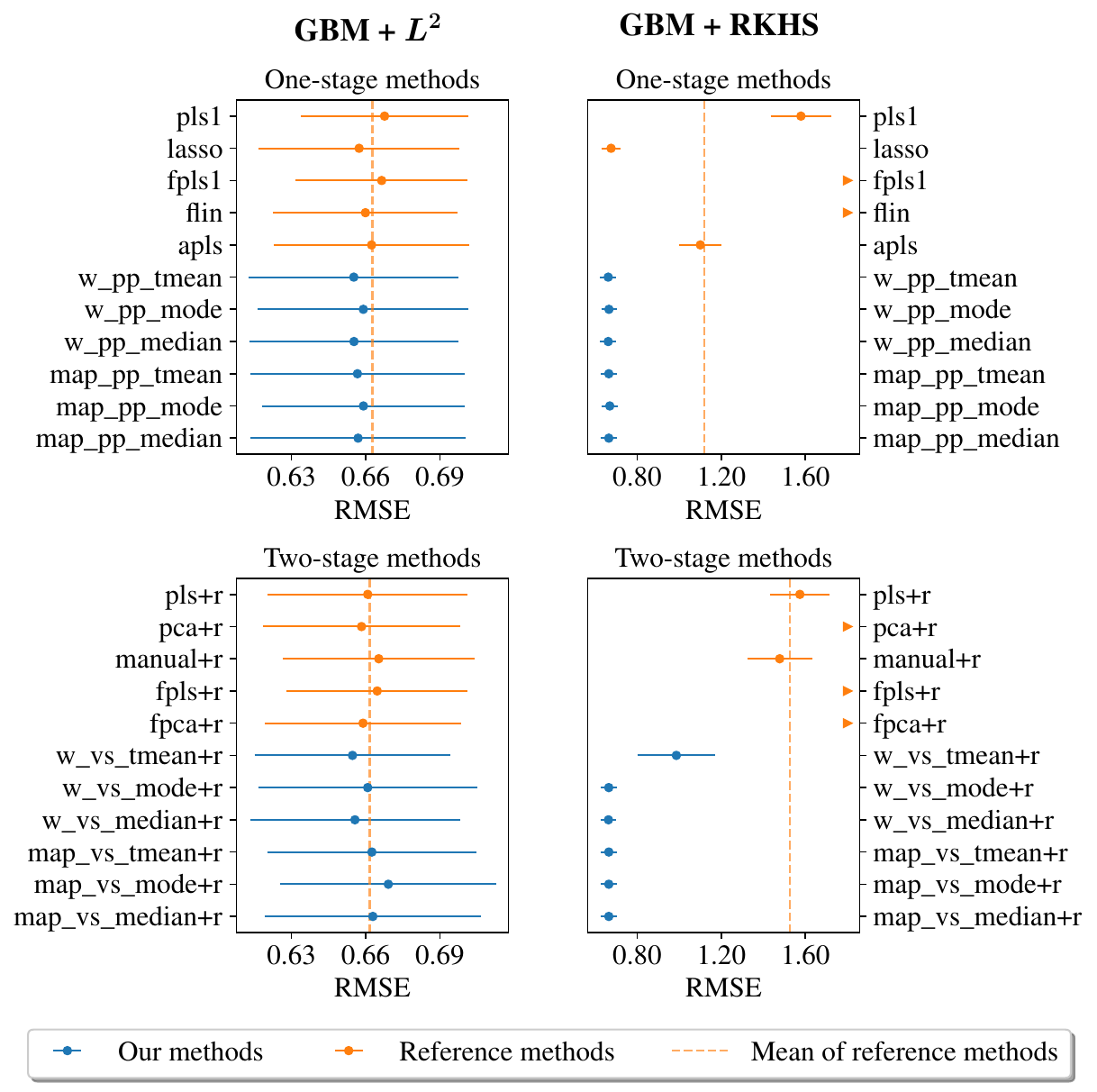}
  \caption{Mean and standard error of RMSE of predictors (lower is better) for 10 runs with GBM regressors. In the first column the response obeys a linear \(L^2\)-model, while in the second columns it follows a linear RKHS model.}\label{fig:reg_non_gp}
\end{figure}

\subsubsection*{Functional logistic regression}

We consider a ``mixture'' situation in which we combine regressors from two different GPs with equal probability and label them according to their origin. Firstly, we consider a homoscedastic case to distinguish between a standard Brownian motion and a Brownian motion with a mean function that is zero until \(t=0.5\), and then becomes \(m(t)=0.75t\). Secondly, we consider a heteroscedastic case to distinguish between a standard Brownian motion and a Brownian motion with variance 2, that is, with kernel \(K(t,s)=2\min\{t,s\}\).

Figure~\ref{fig:clf_non_gp} shows that our classifiers perform better than most comparison algorithms in both cases. The differences are most notable in the homoscedastic case, and in the heteroscedastic case the overall accuracy is low. Incidentally, this heteroscedastic case of two zero-mean Brownian motions has a special interest, since it can be shown that the Bayes error is zero in the limit of dense monitoring (i.e.\ with an arbitrarily fine measurement grid), a manifestation of the ``near-perfect'' classification phenomenon analyzed for example in \citet{torrecilla2020optimal}. Our results are in line with the empirical findings of this article, where the authors conclude that even though the asymptotic theoretical error is zero, most classification methods are suboptimal in practice (possibly due to the high collinearity of the data), with the notable exception of PCA+QDA.

\begin{figure}[ht!]
  \centering
  \includegraphics[width=.6\textwidth]{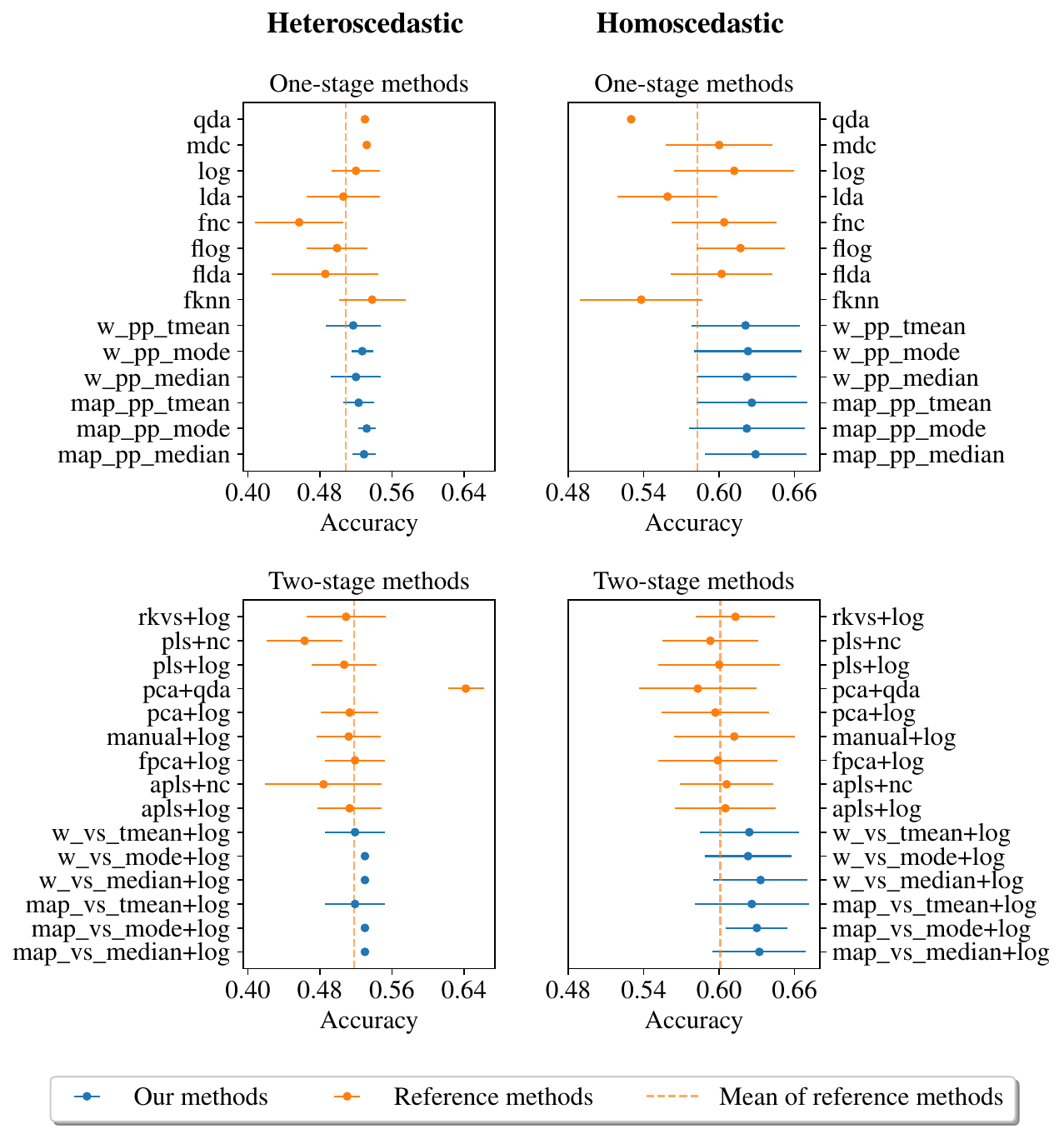}
  \caption{Mean and standard error of accuracy of classifiers (higher is better) for 10 runs with a mix of regressors coming from two different GPs and labeled according to their origin. In the first column we try to separate two Brownian motions with the same mean but different variance, while in the second column we discriminate between two Brownian motions with different mean functions but the same variance.}\label{fig:clf_non_gp}
\end{figure}

\subsection{Analysis and validation of a model}\label{app:validation}

We give an example through a series of visual representations of how one would analyze the outcome of our Bayesian methods. This is a preliminary step that comes before prediction; the idea is to validate the model and make sure that the resulting samples from the posterior are coherent and useful. For this illustration we consider a data set used in the experiments, for example the one with squared exponential GP regressors and an underlying linear RKHS response given by
\[
Y=5 - 5X(0.1) + 5X(0.6) + 10X(0.8) + \varepsilon, 
\] 
with \(\varepsilon \sim \mathcal N(0, 0.5)\) (see Figure~\ref{fig:dataset-linear}). We run the sampler for \(3000\) iterations and discard the first \(2000\), with a \(\text{Poisson}(3)\) prior truncated to \(\{1,\dots,5\}\) for \(p\).

\begin{figure}[ht!]
  \centering
  \includegraphics[width=.7\textwidth]{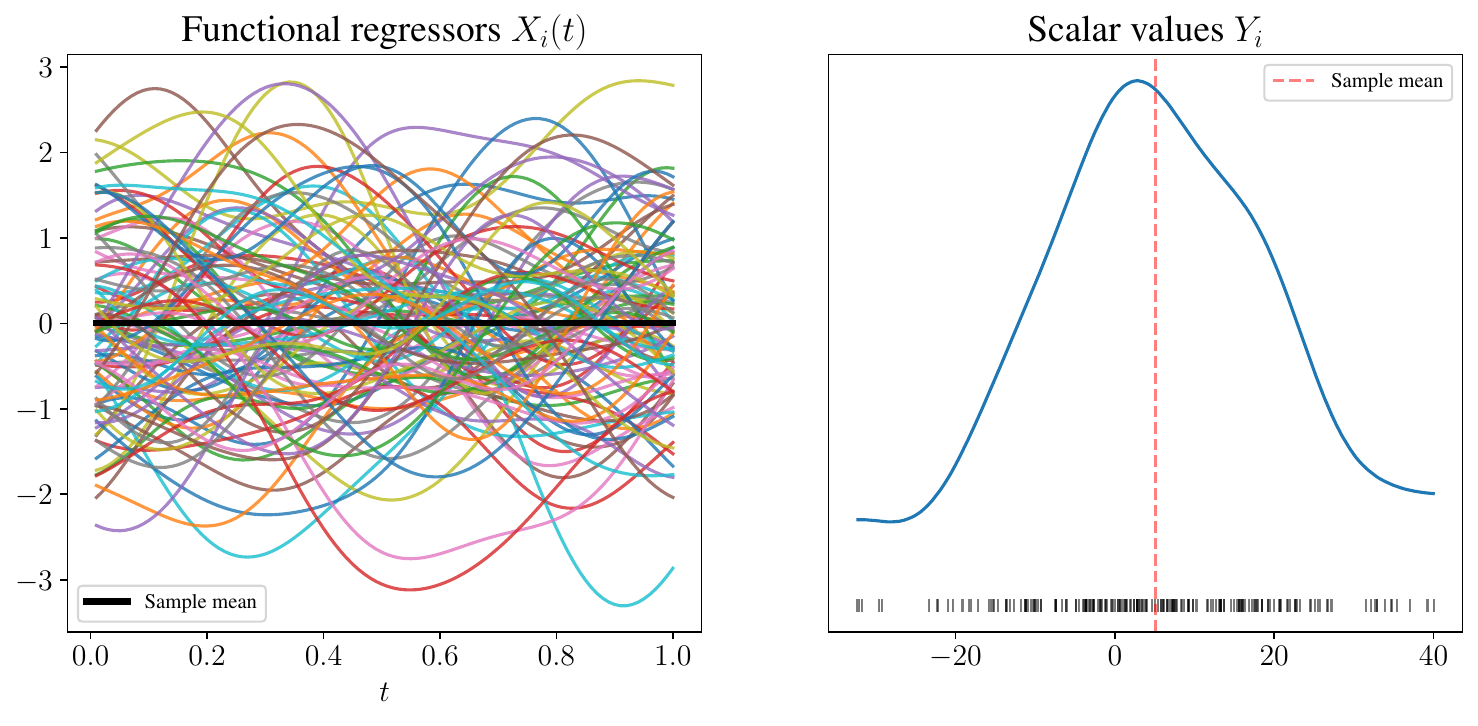}
  \caption{Data set with squared exponential GP regressors and linear RKHS response.}\label{fig:dataset-linear}
\end{figure}

The first thing we do is look at arbitrary samples in the last iteration of a few chains, to check that we get reasonable values. Then, we examine the acceptance rate of all the moves to see that they are not either very low or very high. Lastly, we compute the so-called Gelman-Rubin statistic \citep{gelman1992inference}, which is a quantitative measurement of the convergence of the chains (it should be near \(1\)).

Next we proceed with the visual checks. We can look at the flat posterior distribution of all parameters for all values of \(p\) and all the chains aggregated together (Figure~\ref{fig:flat-posterior-linear}), or visualize the posterior of the multidimensional parameters for each \(p\) in a sort of triangular configuration (Figure~\ref{fig:triangular-posterior-linear}). In addition, we can also look at the traces of individual parameters for all values of \(p\) (Figure~\ref{fig:trace-plot}).

\begin{figure}[ht!]
  \centering
  \includegraphics[width=.7\textwidth]{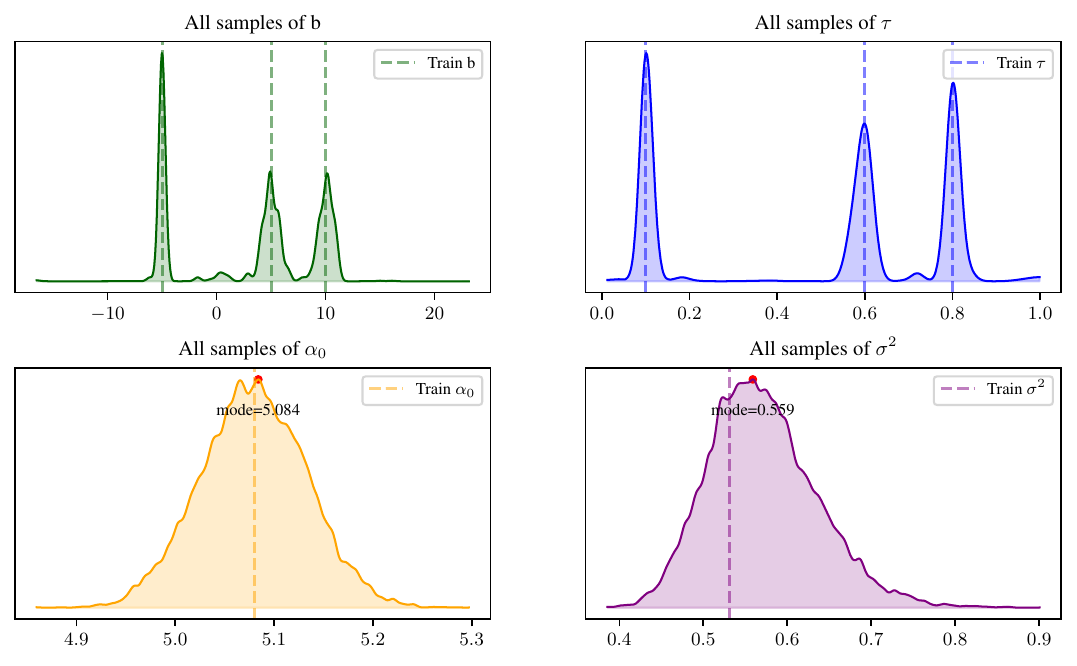}
  \caption{Aggregated posterior distribution of all the samples \(\theta^*_{p_m}\) for all \(m\). Note that the true values of the parameters are essentially recovered.}\label{fig:flat-posterior-linear}
\end{figure}

\begin{figure}[ht!]
  \centering
  \includegraphics[width=.7\textwidth]{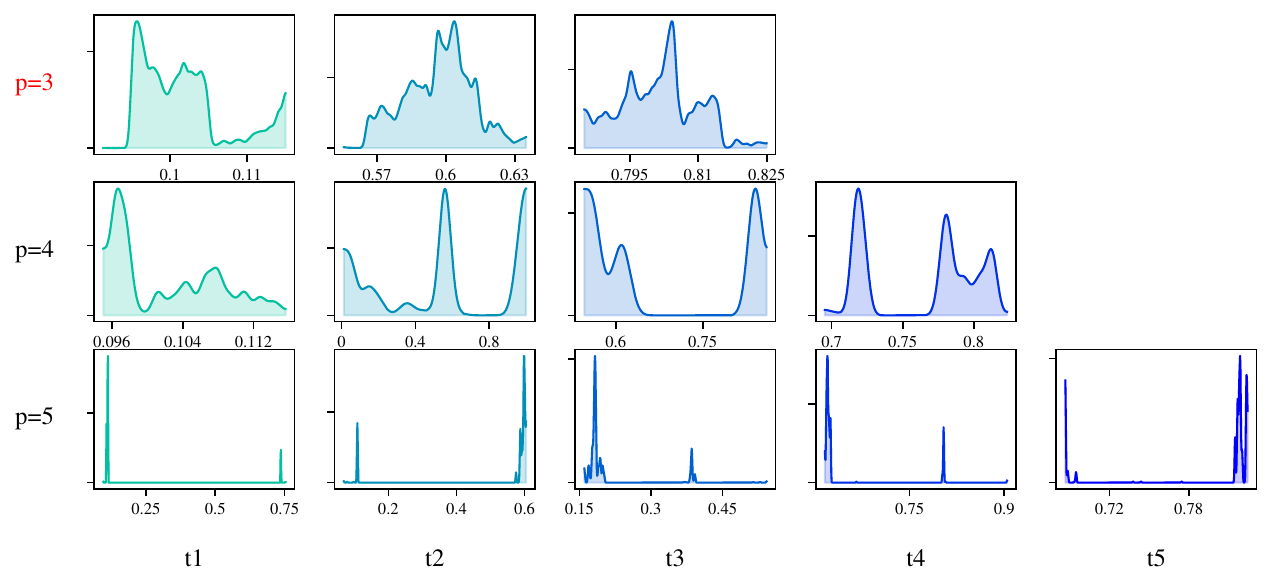}
  \caption{Posterior distribution of \(\tau_p^*\) for each \(p\). The most frequent value is \(p=3\), highlighted in red. In this run there were no samples with \(p=1\) or \(p=2\) after burn-in.}\label{fig:triangular-posterior-linear}
  \end{figure}
  
  \begin{figure}[ht!]
  \centering
  \begin{subfigure}[b]{.75\textwidth}
    \centering
    \includegraphics[width=\textwidth]{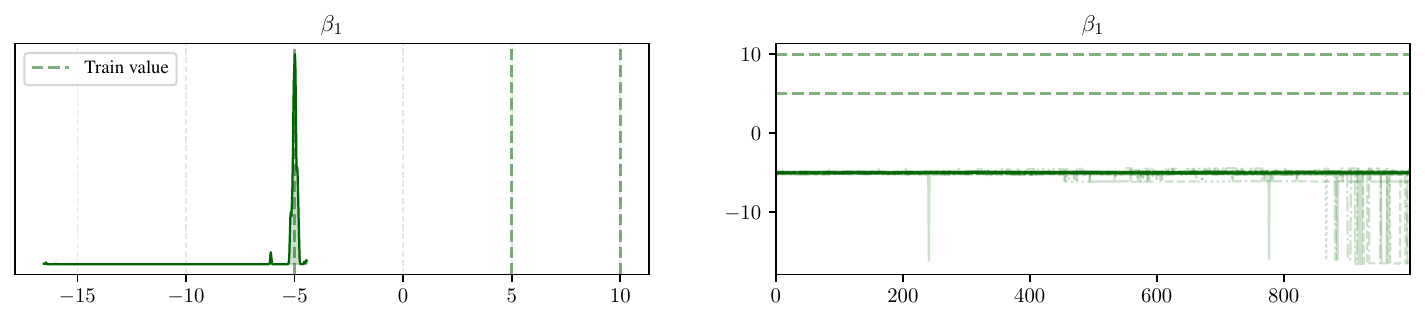}
  \end{subfigure}
  \begin{subfigure}[b]{.75\textwidth}
    \centering
    \includegraphics[width=\textwidth]{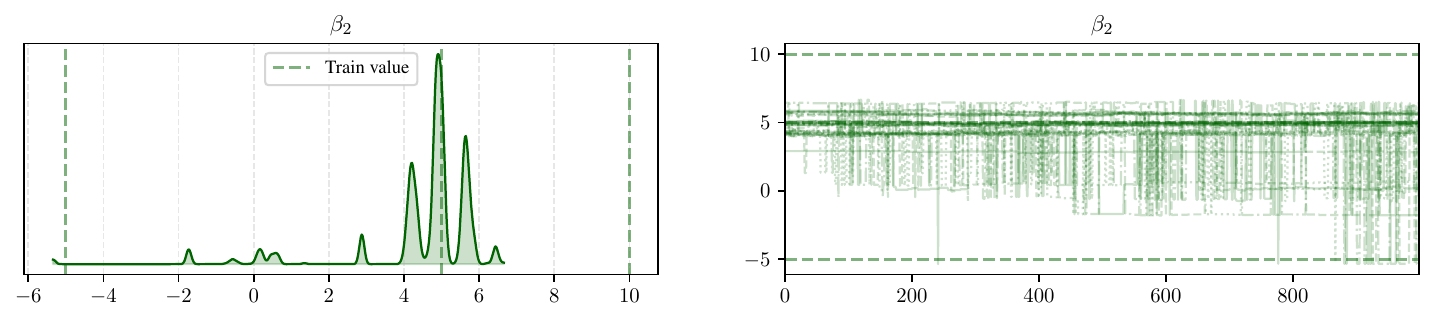}
  \end{subfigure}
  \begin{subfigure}[b]{.75\textwidth}
      \centering
      \includegraphics[width=\textwidth]{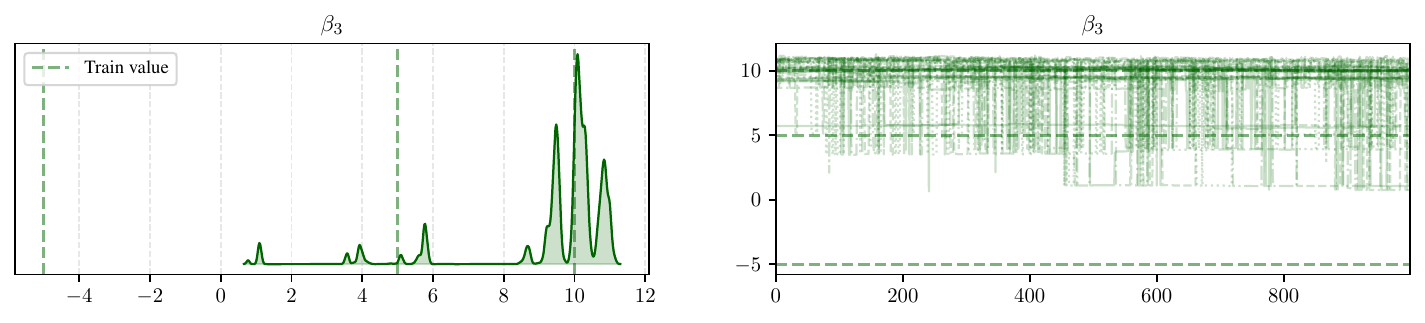}
  \end{subfigure}
  \begin{subfigure}[b]{.75\textwidth}
      \centering
      \includegraphics[width=\textwidth]{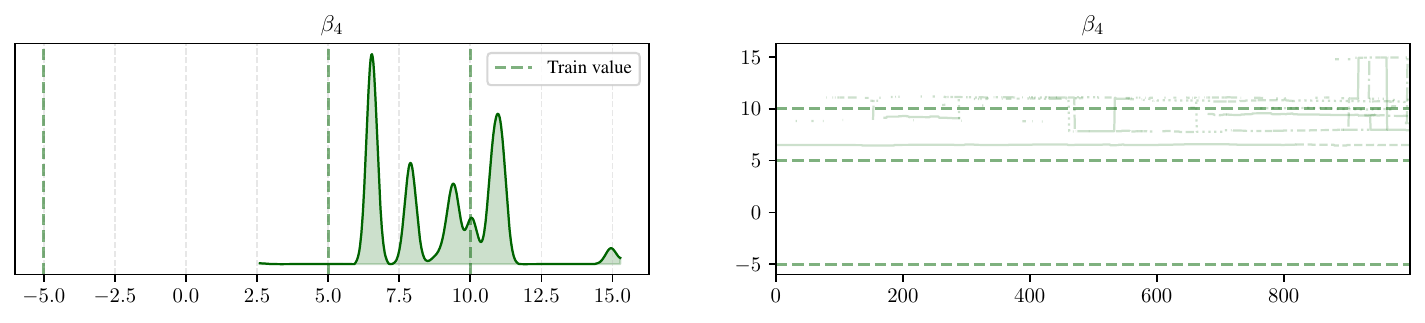}
  \end{subfigure}
  \begin{subfigure}[b]{.75\textwidth}
      \centering
      \includegraphics[width=\textwidth]{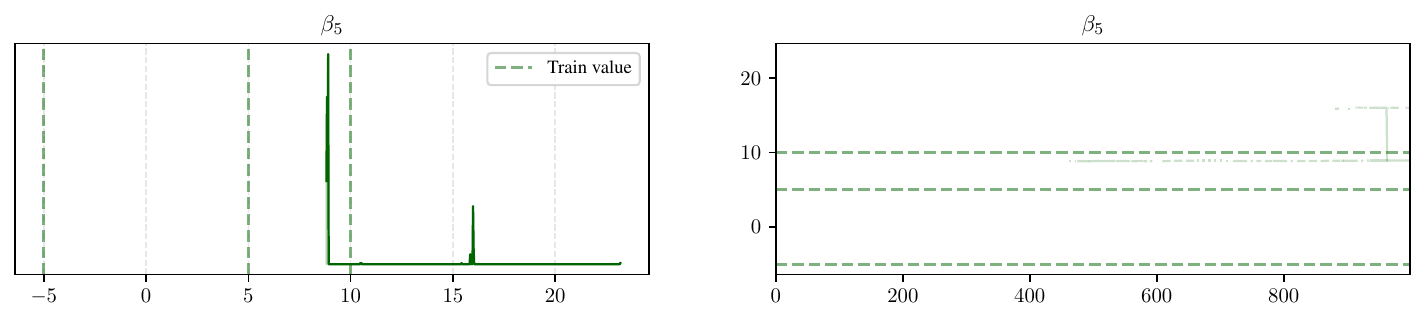}
  \end{subfigure}
  \caption{Posterior distribution (left) and trace (right) of all the \(\beta_j^*\), combined for all \(p\).}\label{fig:trace-plot}
\end{figure}

\FloatBarrier{}

Looking at the traces is useful to check that the chains are mixing well and that they are correctly exploring the parameter space. We can do the same thing with the values of \(p\) (Figure~\ref{fig:trace-p}). Moreover, we can visualize the tempered posterior distribution of \(p\), that is, the posterior distribution of \(p\) for each temperature (Figure~\ref{fig:tempered-posterior-p}).

\begin{figure}[ht!]
  \centering
  \includegraphics[width=.85\textwidth]{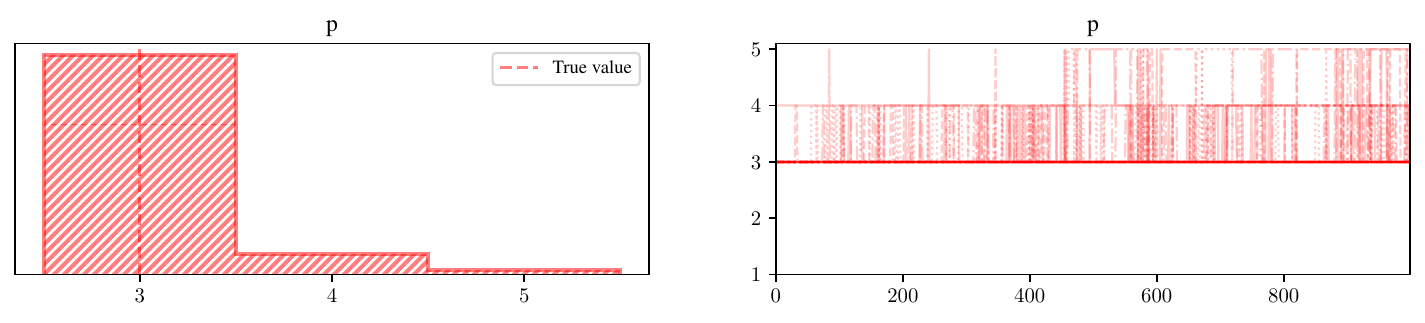}
  \caption{Posterior distribution (left) and trace (right) of \(p\). We see that the true number of components is recovered.}\label{fig:trace-p}
\end{figure}

\begin{figure}[ht!]
  \centering
  \includegraphics[width=.85\textwidth]{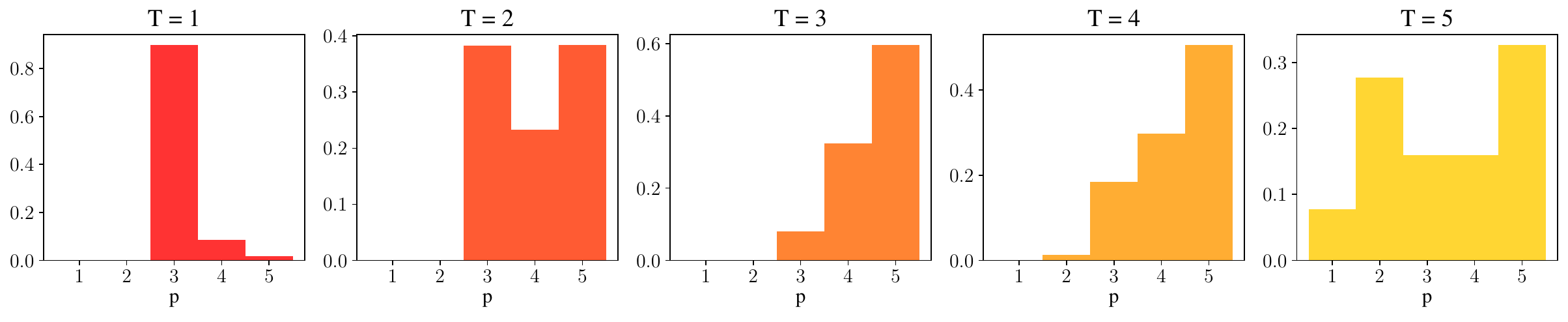}
  \caption{Tempered posterior distribution of \(p\). We are only interested in the cold chain (\(T=1\)), but by allowing different temperatures we increase the exploration of the parameter space, periodically transferring some of this information to the cold chain.}\label{fig:tempered-posterior-p}
\end{figure}

Lastly, we can perform a posterior predictive check (Figure~\ref{fig:pp-check}). This is arguably the most useful test for prediction purposes, since we represent the posterior predictive distribution that will be used for inference and prediction. We can do it on the training data or directly on previously unseen regressors. If the sampling has been successful, the posterior predictive should look like a tubular region around the observed data.

\enlargethispage{4\baselineskip}
\begin{figure}[ht!]
  \centering
  \includegraphics[width=.7\textwidth]{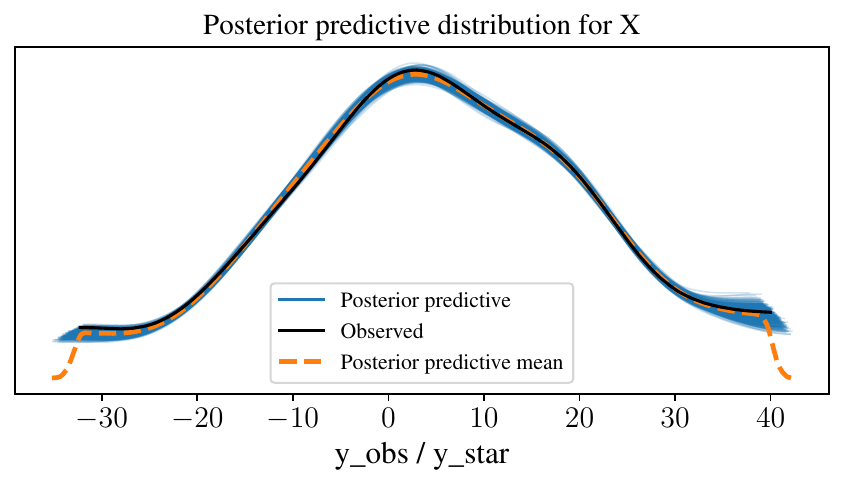}
  \caption{Posterior predictive distribution \(Y|X, \theta_{p_m}^*\) for each individual chain \(m\), along with the mean of all chains and the actual observed data \(Y\).}\label{fig:pp-check}
\end{figure}

\FloatBarrier{}

\subsection{Execution times}\label{app:execution-times}

In Figure~\ref{fig:reg-execution-times} and Figure~\ref{fig:clf-execution-times} we show the execution times of all the experiments in Appendix~\ref{app:non-gp} and Section~\ref{sec:results}. 

\begin{figure}[ht!]
  \centering
  \includegraphics[width=.75\textwidth]{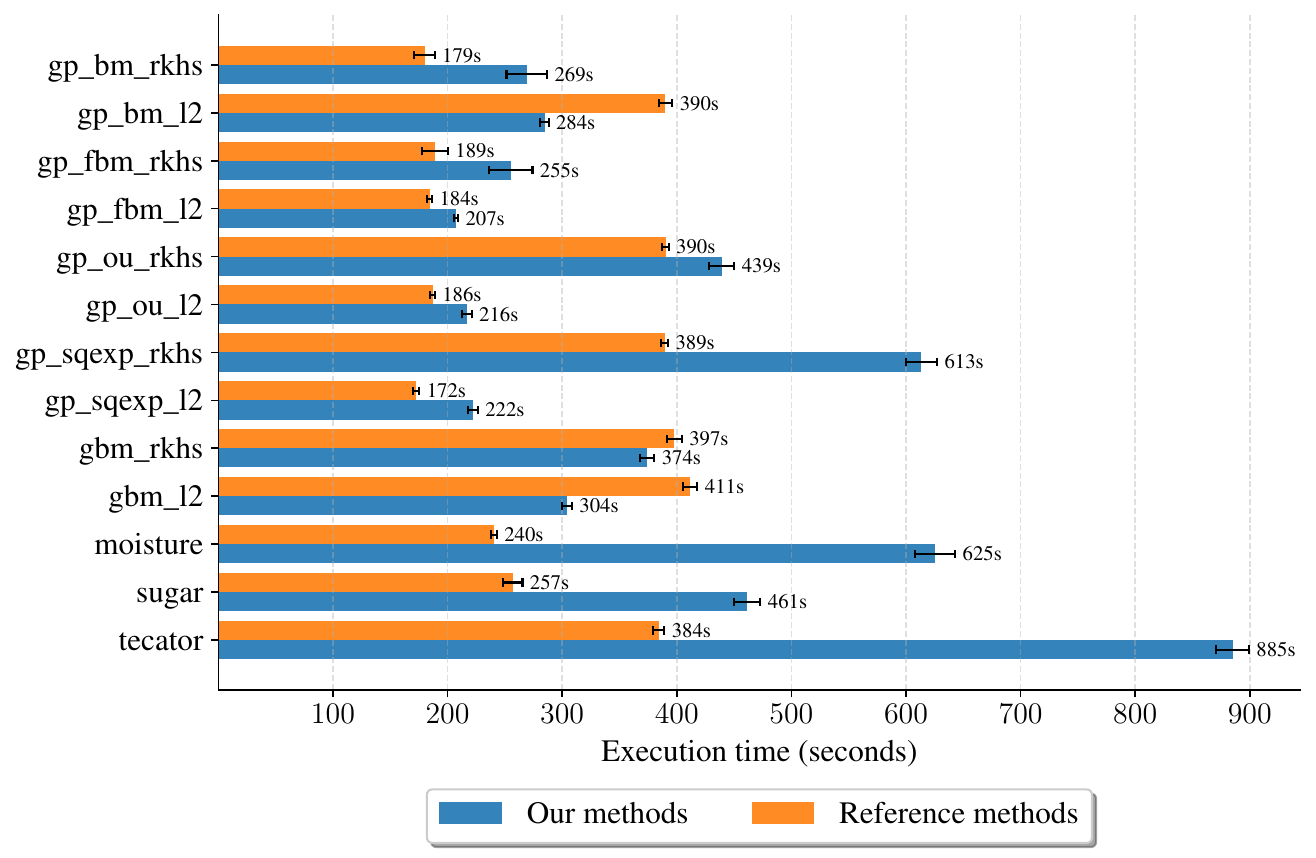}
  \caption{Mean and standard error of execution times for all splits in the experiments with the functional linear model.}\label{fig:reg-execution-times}
\end{figure}

\begin{figure}[ht!]
  \centering
  \includegraphics[width=.75\textwidth]{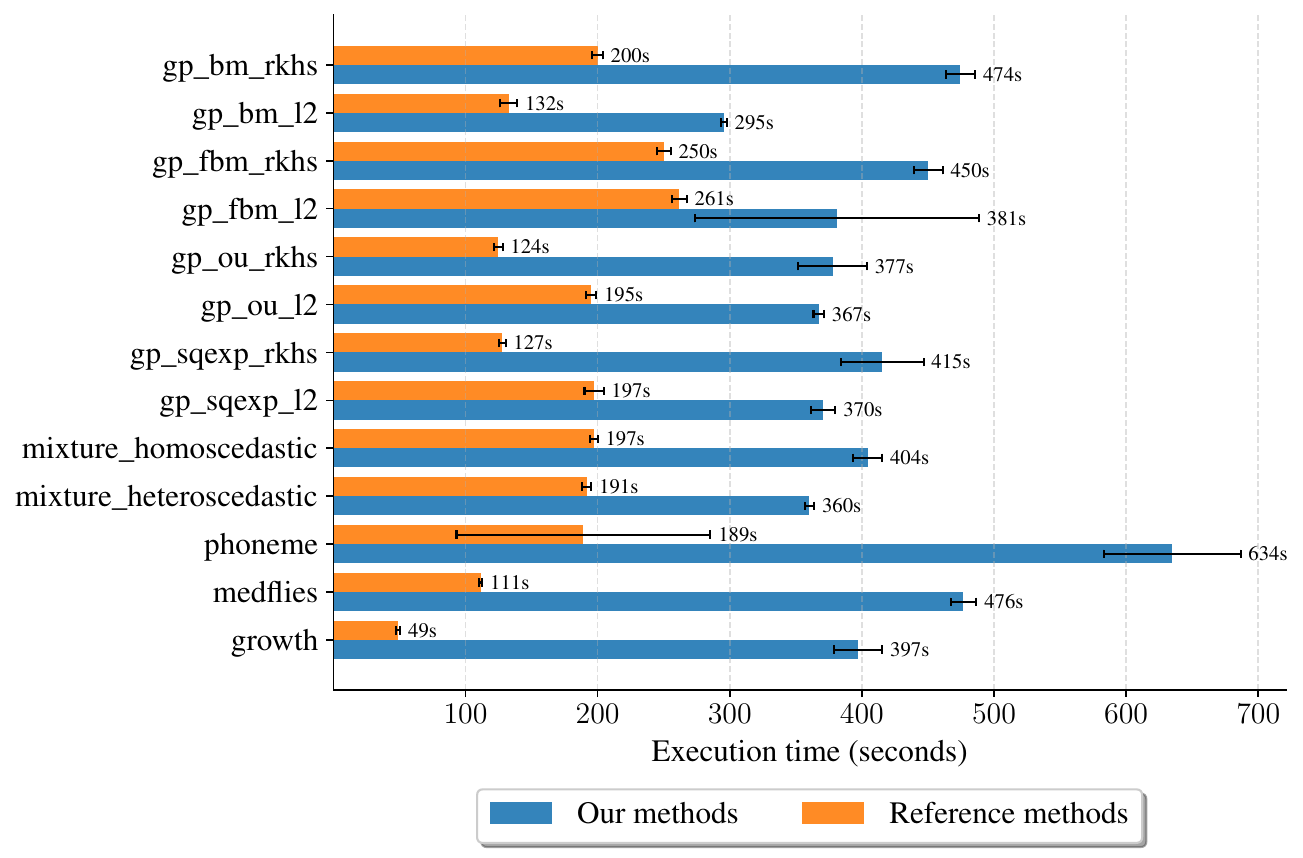}
  \caption{Mean and standard error of execution times for all splits in the experiments with the functional logistic model.}\label{fig:clf-execution-times}
\end{figure}

The execution times of the reference methods includes the duration of the cross-validation phase to select the best hyperparameters, a phase that our Bayesian methods lack by design. It is widely known that MCMC sampling tends to be slow, but as we can see the differences are manageable and our methods have reasonable run times for practical use, especially when taking into account the predictive improvement obtained in some cases. 

In addition, it is possible that we ran some MCMC chains for more steps than necessary, because we wanted to perform all experiments in the same generic configuration. In real-world scenarios one would pay closer attention to convergence metrics and try to stop the sampling earlier. Moreover, some splits may be artificially long because of the queue system in the cluster used to run the experiments in parallel.

\subsection{Tables of experimental results}\label{app:tables}

Here we present the tables corresponding to the empirical comparison studies in Appendix~\ref{app:non-gp} and Section~\ref{sec:results}, which show the numerical values that were depicted there graphically. In each case the best and second-best results are shown in \firstcolor{bold} and \secondcolor{italicized blue}, respectively.

\newpage
\subsubsection*{Functional linear regression}

\begin{table}[htbp!]
  \footnotesize
  \centering
  \rowcolors{2}{}{teal!8}
  \begin{tabular}{lcccc}
    \toprule
    \textbf{Prediction method} & \textbf{BM}                 & \textbf{fBM}                & \textbf{O-U}                & \textbf{Gaussian}           \\
    \midrule
    pls1 & 0.997 (0.051) & 0.730 (0.042) & 1.065 (0.055) & 0.673 (0.045) \\
    lasso & 0.680 (0.044) & 0.688 (0.031) & 0.684 (0.031) & 0.671 (0.042) \\
    fpls1 & 1.607 (0.078) & 0.751 (0.040) & 2.088 (0.099) & 0.670 (0.046) \\
    flin & 1.829 (0.073) & 0.869 (0.037) & 2.388 (0.078) & 0.957 (0.044) \\
    apls & 0.875 (0.053) & 0.704 (0.044) & 1.005 (0.050) & 0.672 (0.045) \\
    w\_pp\_tmean & \firstcolor{0.668 (0.043)} & \firstcolor{0.676 (0.032)} & 0.676 (0.042) & \firstcolor{0.662 (0.040)} \\
    w\_pp\_mode & 0.670 (0.041) & \firstcolor{0.676 (0.030)} & 0.675 (0.041) & \secondcolor{0.664 (0.040)} \\
    w\_pp\_median & \firstcolor{0.668 (0.043)} & \firstcolor{0.676 (0.032)} & 0.676 (0.042) & \firstcolor{0.662 (0.040)} \\
    map\_pp\_tmean & \secondcolor{0.669 (0.042)} & \secondcolor{0.677 (0.031)} & \secondcolor{0.671 (0.045)} & 0.666 (0.039) \\
    map\_pp\_mode & 0.672 (0.041) & 0.680 (0.029) & 0.672 (0.045) & 0.670 (0.042) \\
    map\_pp\_median & 0.670 (0.043) & \secondcolor{0.677 (0.031)} & \firstcolor{0.670 (0.045)} & 0.667 (0.040) \\
    \bottomrule
    \toprule
    pls+r & 0.996 (0.056) & 0.720 (0.036) & 1.065 (0.055) & 0.674 (0.044) \\
    pca+r & 1.521 (0.070) & 0.720 (0.034) & 2.249 (0.095) & 0.673 (0.045) \\
    manual+r & 1.342 (0.130) & 0.717 (0.040) & 1.719 (0.101) & 0.674 (0.042) \\
    fpls+r & 1.607 (0.078) & 0.752 (0.039) & 2.089 (0.098) & \secondcolor{0.669 (0.046)} \\
    fpca+r & 1.512 (0.071) & 0.721 (0.034) & 2.237 (0.096) & 0.673 (0.045) \\
    w\_vs\_tmean+r & \secondcolor{0.795 (0.069)} & 0.697 (0.029) & 0.987 (0.131) & 0.672 (0.037) \\
    w\_vs\_mode+r & \firstcolor{0.668 (0.040)} & \firstcolor{0.678 (0.034)} & 0.667 (0.041) & 0.680 (0.050) \\
    w\_vs\_median+r & \firstcolor{0.668 (0.039)} & 0.681 (0.031) & \secondcolor{0.666 (0.041)} & \firstcolor{0.666 (0.041)} \\
    map\_vs\_tmean+r & \firstcolor{0.668 (0.040)} & 0.747 (0.048) & 0.971 (0.398) & 0.737 (0.093) \\
    map\_vs\_mode+r & \firstcolor{0.668 (0.040)} & \secondcolor{0.679 (0.034)} & 0.668 (0.042) & 0.707 (0.037) \\
    map\_vs\_median+r & \firstcolor{0.668 (0.040)} & 0.695 (0.029) & \firstcolor{0.664 (0.041)} & 0.700 (0.063) \\
    \bottomrule
  \end{tabular}
  \caption{Mean RMSE of predictors (lower is better) for 10 runs with GP regressors, one on each column, that obey an underlying linear RKHS model. The corresponding standard errors are shown between brackets.}
\end{table}
\FloatBarrier{}
\newpage

\begin{table}[htbp!]
  \vspace{1em}
  \footnotesize
  \centering
  \rowcolors{2}{}{teal!8}
  \begin{tabular}{lcccc}
    \toprule
    \textbf{Prediction method} & \textbf{BM}                 & \textbf{fBM}                & \textbf{O-U}                & \textbf{Gaussian}           \\
    \midrule
    pls1 & 0.678 (0.039) & 0.674 (0.030) & 0.682 (0.043) & 0.669 (0.047) \\
    lasso & 0.667 (0.037) & 0.664 (0.035) & 0.659 (0.039) & 0.664 (0.041) \\
    fpls1 & 0.671 (0.041) & 0.671 (0.044) & 0.677 (0.037) & 0.671 (0.048) \\
    flin & 0.666 (0.037) & 0.664 (0.038) & 0.662 (0.038) & \firstcolor{0.661 (0.040)} \\
    apls & 0.673 (0.045) & 0.684 (0.043) & 0.681 (0.043) & 0.674 (0.041) \\
    w\_pp\_tmean & \firstcolor{0.652 (0.037)} & \firstcolor{0.659 (0.039)} & \firstcolor{0.651 (0.037)} & \firstcolor{0.661 (0.039)} \\
    w\_pp\_mode & 0.655 (0.037) & \secondcolor{0.661 (0.038)} & \secondcolor{0.652 (0.037)} & \firstcolor{0.661 (0.039)} \\
    w\_pp\_median & \firstcolor{0.652 (0.037)} & \firstcolor{0.659 (0.039)} & \firstcolor{0.651 (0.037)} & \firstcolor{0.661 (0.039)} \\
    map\_pp\_tmean & \secondcolor{0.654 (0.037)} & \secondcolor{0.661 (0.038)} & 0.655 (0.038) & \firstcolor{0.661 (0.039)} \\
    map\_pp\_mode & 0.655 (0.035) & 0.664 (0.039) & 0.656 (0.037) & \secondcolor{0.662 (0.037)} \\
    map\_pp\_median & \secondcolor{0.654 (0.037)} & \secondcolor{0.661 (0.038)} & 0.655 (0.038) & \firstcolor{0.661 (0.039)} \\
    \bottomrule
    \toprule
    pls+r & 0.667 (0.037) & 0.669 (0.033) & 0.669 (0.037) & \firstcolor{0.665 (0.044)} \\
    pca+r & 0.665 (0.041) & 0.664 (0.041) & 0.665 (0.041) & \secondcolor{0.666 (0.044)} \\
    manual+r & 0.665 (0.035) & \firstcolor{0.660 (0.037)} & 0.664 (0.040) & \secondcolor{0.666 (0.046)} \\
    fpls+r & 0.666 (0.040) & 0.664 (0.042) & 0.672 (0.036) & 0.668 (0.047) \\
    fpca+r & 0.665 (0.042) & 0.665 (0.042) & 0.663 (0.040) & \secondcolor{0.666 (0.045)} \\
    w\_vs\_tmean+r & \firstcolor{0.651 (0.038)} & \firstcolor{0.660 (0.039)} & \firstcolor{0.649 (0.036)} & 0.675 (0.039) \\
    w\_vs\_mode+r & 0.660 (0.037) & 0.662 (0.039) & 0.663 (0.032) & 0.673 (0.046) \\
    w\_vs\_median+r & \secondcolor{0.652 (0.036)} & \firstcolor{0.660 (0.038)} & \secondcolor{0.653 (0.039)} & 0.672 (0.039) \\
    map\_vs\_tmean+r & 0.654 (0.037) & \secondcolor{0.661 (0.037)} & 0.659 (0.039) & 0.674 (0.041) \\
    map\_vs\_mode+r & 0.661 (0.036) & 0.667 (0.040) & 0.670 (0.036) & 0.704 (0.058) \\
    map\_vs\_median+r & 0.655 (0.035) & 0.663 (0.035) & 0.662 (0.043) & 0.678 (0.042) \\
    \bottomrule
  \end{tabular}
  \caption{Mean RMSE of predictors (lower is better) for 10 runs with GP regressors, one on each column, that obey an underlying linear \(L^2\)-model. The corresponding standard errors are shown between brackets.}
\end{table}
\newpage
\FloatBarrier{}

\begin{table}[htbp!]
  \vspace{1em}
  \footnotesize
  \centering
  \rowcolors{2}{}{teal!8}
  \begin{tabular}{lcc}
    \toprule
    \textbf{Prediction method} & \textbf{GBM + \(\bf{L^2}\)} & \textbf{GBM + RKHS}         \\
    \midrule
    pls1 & 0.668 (0.034) & 1.579 (0.145) \\
    lasso & \secondcolor{0.657 (0.041)} & 0.676 (0.046) \\
    fpls1 & 0.666 (0.035) & 2.654 (0.182) \\
    flin & 0.660 (0.037) & 3.434 (0.384) \\
    apls & 0.662 (0.040) & 1.100 (0.100) \\
    w\_pp\_tmean & \firstcolor{0.655 (0.042)} & \firstcolor{0.662 (0.039)} \\
    w\_pp\_mode & 0.659 (0.043) & 0.666 (0.036) \\
    w\_pp\_median & \firstcolor{0.655 (0.042)} & \firstcolor{0.662 (0.039)} \\
    map\_pp\_tmean & \secondcolor{0.657 (0.043)} & \secondcolor{0.665 (0.038)} \\
    map\_pp\_mode & 0.659 (0.041) & 0.670 (0.037) \\
    map\_pp\_median & \secondcolor{0.657 (0.043)} & \secondcolor{0.665 (0.038)} \\
    \bottomrule
    \toprule
    pls+r & 0.661 (0.040) & 1.574 (0.141) \\
    pca+r & 0.658 (0.040) & 2.315 (0.195) \\
    manual+r & 0.665 (0.039) & 1.478 (0.154) \\
    fpls+r & 0.665 (0.037) & 2.669 (0.189) \\
    fpca+r & 0.659 (0.040) & 2.311 (0.194) \\
    w\_vs\_tmean+r & \firstcolor{0.655 (0.040)} & 0.986 (0.186) \\
    w\_vs\_mode+r & 0.661 (0.044) & \secondcolor{0.665 (0.038)} \\
    w\_vs\_median+r & \secondcolor{0.656 (0.042)} & \firstcolor{0.663 (0.037)} \\
    map\_vs\_tmean+r & 0.662 (0.042) & \secondcolor{0.665 (0.038)} \\
    map\_vs\_mode+r & 0.669 (0.044) & \secondcolor{0.665 (0.038)} \\
    map\_vs\_median+r & 0.663 (0.044) & \secondcolor{0.665 (0.038)} \\
    \bottomrule
  \end{tabular}
  \caption{Mean RMSE of predictors (lower is better) for 10 runs with GBM regressors. In the first column the response obeys a linear \(L^2\)-model, while in the second column it follows a linear RKHS model. The corresponding standard errors are shown between brackets.}
\end{table}
\newpage
\FloatBarrier{}

\begin{table}[htbp!]
  \vspace{1em}
  \footnotesize
  \centering
  \rowcolors{2}{}{teal!8}
  \begin{tabular}{lccc}
    \toprule
    \textbf{Prediction method} & \textbf{Moisture}           & \textbf{Sugar}              & \textbf{Tecator}            \\
    \midrule
    pls1 & 0.232 (0.024) & 2.037 (0.219) & \secondcolor{2.606 (0.283)} \\
    lasso & 0.242 (0.026) & 1.985 (0.226) & 2.842 (0.352) \\
    fpls1 & 0.248 (0.022) & \secondcolor{1.972 (0.201)} & \firstcolor{2.605 (0.262)} \\
    flin & 1.235 (0.138) & \firstcolor{1.966 (0.198)} & 7.486 (0.648) \\
    apls & 0.237 (0.028) & 2.020 (0.226) & 2.641 (0.165) \\
    w\_pp\_tmean & \secondcolor{0.223 (0.018)} & 1.988 (0.216) & 2.721 (0.260) \\
    w\_pp\_mode & \firstcolor{0.222 (0.020)} & 1.996 (0.219) & 2.718 (0.266) \\
    w\_pp\_median & \secondcolor{0.223 (0.018)} & 1.989 (0.217) & 2.721 (0.262) \\
    map\_pp\_tmean & 0.228 (0.021) & 1.997 (0.212) & 2.724 (0.267) \\
    map\_pp\_mode & 0.236 (0.023) & 2.010 (0.210) & 2.741 (0.271) \\
    map\_pp\_median & 0.228 (0.021) & 1.996 (0.212) & 2.720 (0.267) \\
    \bottomrule
    \toprule
    pls+r & \firstcolor{0.225 (0.021)} & \secondcolor{1.998 (0.208)} & \firstcolor{2.536 (0.236)} \\
    pca+r & \secondcolor{0.233 (0.024)} & 2.034 (0.219) & 2.712 (0.183) \\
    manual+r & 0.268 (0.021) & 2.041 (0.214) & \secondcolor{2.586 (0.270)} \\
    fpls+r & 0.241 (0.018) & \firstcolor{1.962 (0.202)} & 2.595 (0.249) \\
    fpca+r & 0.307 (0.055) & 2.054 (0.226) & 2.657 (0.189) \\
    w\_vs\_tmean+r & 0.308 (0.075) & 2.003 (0.221) & 2.822 (0.323) \\
    w\_vs\_mode+r & 0.236 (0.018) & 2.050 (0.221) & 2.713 (0.255) \\
    w\_vs\_median+r & 0.236 (0.025) & 2.000 (0.217) & 2.801 (0.261) \\
    map\_vs\_tmean+r & 0.455 (0.284) & 2.059 (0.233) & 2.900 (0.403) \\
    map\_vs\_mode+r & 0.261 (0.029) & 2.188 (0.321) & 2.833 (0.291) \\
    map\_vs\_median+r & 0.266 (0.030) & 2.082 (0.219) & 2.826 (0.269) \\
    \bottomrule
  \end{tabular}
  \caption{Mean RMSE of predictors (lower is better) for 10 runs with real data sets, one on each column. The corresponding standard errors are shown between brackets.}
\end{table}
\newpage
\FloatBarrier{}

\subsubsection*{Functional logistic regression}

\begin{table}[htbp!]
  \footnotesize
  \centering
  \rowcolors{2}{}{teal!8}
  \begin{tabular}{lcccc}
    \toprule
    \textbf{Classification method} & \textbf{BM}                 & \textbf{fBM}                & \textbf{O-U}                & \textbf{Gaussian}           \\
    \midrule
    qda & 0.510 (0.000) & 0.510 (0.000) & 0.510 (0.000) & 0.500 (0.000) \\
    mdc & 0.804 (0.034) & 0.822 (0.022) & 0.735 (0.037) & 0.839 (0.045) \\
    log & 0.849 (0.031) & \firstcolor{0.848 (0.015)} & 0.824 (0.022) & 0.868 (0.036) \\
    lda & 0.694 (0.032) & 0.621 (0.066) & 0.624 (0.042) & 0.823 (0.028) \\
    fnc & 0.814 (0.034) & \firstcolor{0.848 (0.014)} & 0.736 (0.035) & 0.864 (0.046) \\
    flog & 0.845 (0.036) & 0.837 (0.024) & 0.809 (0.028) & 0.871 (0.033) \\
    flda & 0.846 (0.031) & 0.830 (0.029) & 0.813 (0.029) & 0.854 (0.039) \\
    fknn & 0.851 (0.033) & 0.834 (0.027) & 0.799 (0.024) & 0.847 (0.041) \\
    w\_pp\_tmean & \firstcolor{0.856 (0.030)} & 0.846 (0.011) & 0.828 (0.022) & 0.873 (0.035) \\
    w\_pp\_mode & 0.853 (0.031) & \secondcolor{0.847 (0.012)} & 0.825 (0.025) & \firstcolor{0.878 (0.036)} \\
    w\_pp\_median & \firstcolor{0.856 (0.029)} & \secondcolor{0.847 (0.011)} & 0.827 (0.022) & 0.873 (0.035) \\
    map\_pp\_tmean & 0.854 (0.034) & 0.845 (0.013) & \firstcolor{0.830 (0.025)} & 0.874 (0.035) \\
    map\_pp\_mode & 0.852 (0.032) & 0.846 (0.014) & \secondcolor{0.829 (0.025)} & \secondcolor{0.877 (0.036)} \\
    map\_pp\_median & \secondcolor{0.855 (0.030)} & 0.844 (0.013) & \firstcolor{0.830 (0.025)} & 0.876 (0.036) \\
    \bottomrule
    \toprule
    rkvs+log & \firstcolor{0.848 (0.024)} & 0.838 (0.026) & 0.790 (0.032) & 0.872 (0.041) \\
    pls+nc & 0.816 (0.038) & 0.828 (0.022) & 0.793 (0.029) & 0.867 (0.039) \\
    pls+log & \secondcolor{0.847 (0.034)} & 0.844 (0.021) & 0.817 (0.022) & 0.864 (0.037) \\
    pca+qda & 0.839 (0.034) & 0.840 (0.018) & 0.818 (0.026) & 0.854 (0.033) \\
    pca+log & 0.842 (0.032) & \secondcolor{0.847 (0.016)} & 0.824 (0.019) & 0.868 (0.033) \\
    manual+log & 0.846 (0.032) & \firstcolor{0.850 (0.012)} & 0.821 (0.018) & 0.869 (0.032) \\
    fpca+log & \secondcolor{0.847 (0.030)} & \secondcolor{0.847 (0.014)} & \secondcolor{0.830 (0.024)} & 0.866 (0.036) \\
    apls+nc & 0.819 (0.036) & \secondcolor{0.847 (0.014)} & 0.816 (0.027) & 0.854 (0.036) \\
    apls+log & 0.829 (0.041) & 0.844 (0.012) & 0.816 (0.027) & 0.857 (0.039) \\
    w\_vs\_tmean+log & 0.831 (0.038) & 0.840 (0.017) & 0.823 (0.033) & 0.873 (0.038) \\
    w\_vs\_mode+log & 0.846 (0.021) & 0.828 (0.019) & 0.806 (0.020) & 0.875 (0.040) \\
    w\_vs\_median+log & 0.838 (0.029) & 0.844 (0.017) & \firstcolor{0.834 (0.030)} & 0.875 (0.039) \\
    map\_vs\_tmean+log & 0.816 (0.028) & 0.838 (0.023) & 0.801 (0.027) & \secondcolor{0.876 (0.040)} \\
    map\_vs\_mode+log & 0.839 (0.022) & 0.831 (0.022) & 0.807 (0.014) & \firstcolor{0.877 (0.043)} \\
    map\_vs\_median+log & 0.829 (0.024) & 0.839 (0.019) & 0.816 (0.034) & 0.871 (0.048) \\
    \bottomrule
  \end{tabular}
  \caption{Mean accuracy of classifiers (higher is better) for 10 runs with GP regressors, one on each column, that obey an underlying logistic RKHS model. The corresponding standard errors are shown between brackets.}
\end{table}
\newpage
\FloatBarrier{}

\begin{table}[htbp!]
  \vspace{1em}
  \footnotesize
  \centering
  \rowcolors{2}{}{teal!8}
  \begin{tabular}{lcccc}
    \toprule
    \textbf{Classification method} & \textbf{BM}                 & \textbf{fBM}                & \textbf{O-U}                & \textbf{Gaussian}           \\
    \midrule
    qda & \firstcolor{0.610 (0.000)} & 0.610 (0.000) & \firstcolor{0.620 (0.000)} & \secondcolor{0.610 (0.000)} \\
    mdc & 0.602 (0.033) & \firstcolor{0.619 (0.048)} & \secondcolor{0.615 (0.029)} & 0.603 (0.042) \\
    log & 0.594 (0.017) & 0.577 (0.039) & 0.591 (0.024) & 0.609 (0.037) \\
    lda & 0.507 (0.030) & 0.518 (0.029) & 0.541 (0.039) & 0.591 (0.033) \\
    fnc & \secondcolor{0.607 (0.038)} & \secondcolor{0.614 (0.045)} & 0.609 (0.029) & \firstcolor{0.625 (0.040)} \\
    flog & 0.580 (0.020) & 0.602 (0.037) & 0.600 (0.031) & 0.609 (0.028) \\
    flda & 0.601 (0.027) & 0.609 (0.049) & 0.593 (0.032) & 0.595 (0.048) \\
    fknn & 0.587 (0.056) & 0.576 (0.033) & 0.564 (0.042) & 0.578 (0.041) \\
    w\_pp\_tmean & 0.597 (0.027) & 0.600 (0.026) & 0.592 (0.023) & 0.608 (0.035) \\
    w\_pp\_mode & 0.599 (0.028) & 0.606 (0.027) & 0.595 (0.030) & 0.605 (0.034) \\
    w\_pp\_median & 0.597 (0.023) & 0.599 (0.030) & 0.591 (0.020) & 0.607 (0.037) \\
    map\_pp\_tmean & 0.595 (0.022) & 0.599 (0.027) & 0.594 (0.020) & 0.602 (0.037) \\
    map\_pp\_mode & 0.605 (0.027) & 0.604 (0.030) & 0.602 (0.030) & 0.606 (0.041) \\
    map\_pp\_median & 0.593 (0.026) & 0.599 (0.028) & 0.600 (0.023) & 0.602 (0.034) \\
    \bottomrule
    \toprule
    rkvs+log & 0.569 (0.039) & 0.586 (0.026) & 0.593 (0.035) & 0.611 (0.034) \\
    pls+nc & \firstcolor{0.610 (0.033)} & \firstcolor{0.623 (0.042)} & \firstcolor{0.607 (0.035)} & \firstcolor{0.629 (0.036)} \\
    pls+log & 0.590 (0.029) & 0.589 (0.036) & 0.593 (0.020) & \secondcolor{0.621 (0.043)} \\
    pca+qda & 0.577 (0.036) & \secondcolor{0.615 (0.032)} & 0.599 (0.043) & 0.618 (0.045) \\
    pca+log & 0.590 (0.027) & 0.593 (0.027) & 0.598 (0.037) & 0.616 (0.036) \\
    manual+log & 0.580 (0.026) & 0.585 (0.030) & 0.593 (0.019) & \secondcolor{0.621 (0.044)} \\
    fpca+log & \secondcolor{0.599 (0.021)} & 0.592 (0.034) & 0.600 (0.039) & 0.614 (0.042) \\
    apls+nc & 0.593 (0.024) & 0.569 (0.047) & 0.591 (0.039) & 0.615 (0.023) \\
    apls+log & 0.571 (0.030) & 0.585 (0.033) & 0.594 (0.025) & 0.614 (0.038) \\
    w\_vs\_tmean+log & 0.588 (0.034) & 0.594 (0.022) & 0.596 (0.017) & 0.605 (0.029) \\
    w\_vs\_mode+log & 0.597 (0.024) & 0.597 (0.028) & \secondcolor{0.601 (0.026)} & 0.617 (0.026) \\
    w\_vs\_median+log & 0.591 (0.030) & 0.589 (0.023) & 0.599 (0.035) & 0.600 (0.031) \\
    map\_vs\_tmean+log & 0.590 (0.031) & 0.594 (0.026) & 0.589 (0.030) & 0.609 (0.031) \\
    map\_vs\_mode+log & 0.597 (0.024) & 0.593 (0.022) & 0.599 (0.027) & 0.617 (0.021) \\
    map\_vs\_median+log & 0.587 (0.036) & 0.592 (0.021) & 0.598 (0.036) & 0.605 (0.034) \\
    \bottomrule
  \end{tabular}
  \caption{Mean accuracy of classifiers (higher is better) for 10 runs with GP regressors, one on each column, that obey an underlying logistic \(L^2\)-model. The corresponding standard errors are shown between brackets.}
\end{table}
\newpage
\FloatBarrier{}

\begin{table}[htbp!]
  \vspace{1em}
  \footnotesize
  \centering
  \rowcolors{2}{}{teal!8}
  \begin{tabular}{lcccc}
    \toprule
    \textbf{Classification method} & \textbf{Heteroscedastic}    & \textbf{Homoscedastic}      \\
    \midrule
    qda & 0.530 (0.000) & 0.530 (0.000) \\
    mdc & \secondcolor{0.532 (0.004)} & 0.600 (0.042) \\
    log & 0.520 (0.027) & 0.612 (0.048) \\
    lda & 0.506 (0.041) & 0.559 (0.040) \\
    fnc & 0.457 (0.049) & 0.604 (0.042) \\
    flog & 0.499 (0.034) & 0.617 (0.035) \\
    flda & 0.486 (0.059) & 0.602 (0.040) \\
    fknn & \firstcolor{0.538 (0.037)} & 0.538 (0.049) \\
    w\_pp\_tmean & 0.517 (0.030) & 0.621 (0.043) \\
    w\_pp\_mode & 0.527 (0.012) & 0.623 (0.043) \\
    w\_pp\_median & 0.520 (0.028) & 0.622 (0.040) \\
    map\_pp\_tmean & 0.523 (0.017) & \secondcolor{0.626 (0.044)} \\
    map\_pp\_mode & \secondcolor{0.532 (0.010)} & 0.622 (0.046) \\
    map\_pp\_median & 0.529 (0.013) & \firstcolor{0.629 (0.040)} \\
    \bottomrule
    \toprule
    rkvs+log & 0.509 (0.044) & 0.613 (0.031) \\
    pls+nc & 0.463 (0.042) & 0.593 (0.038) \\
    pls+log & 0.507 (0.036) & 0.600 (0.048) \\
    pca+qda & \firstcolor{0.642 (0.020)} & 0.583 (0.047) \\
    pca+log & 0.513 (0.032) & 0.597 (0.043) \\
    manual+log & 0.512 (0.036) & 0.612 (0.048) \\
    fpca+log & 0.519 (0.033) & 0.599 (0.047) \\
    apls+nc & 0.484 (0.065) & 0.606 (0.037) \\
    apls+log & 0.513 (0.036) & 0.605 (0.040) \\
    w\_vs\_tmean+log & 0.519 (0.033) & 0.624 (0.039) \\
    w\_vs\_mode+log & \secondcolor{0.530 (0.000)} & 0.623 (0.035) \\
    w\_vs\_median+log & \secondcolor{0.530 (0.000)} & \firstcolor{0.633 (0.037)} \\
    map\_vs\_tmean+log & 0.519 (0.033) & 0.626 (0.045) \\
    map\_vs\_mode+log & \secondcolor{0.530 (0.000)} & 0.630 (0.024) \\
    map\_vs\_median+log & \secondcolor{0.530 (0.000)} & \secondcolor{0.632 (0.037)} \\
    \bottomrule
  \end{tabular}
  \caption{Mean accuracy of classifiers (higher is better) for 10 runs with a mix of regressors coming from two different GPs and labeled according to their origin. In the first column we try to separate two heteroscedastic Brownian motions, while in the second column we discriminate between two homoscedastic Brownian motions. The corresponding standard errors are shown between brackets.}
\end{table}
\newpage
\FloatBarrier{}

\begin{table}[htbp!]
  \vspace{1em}
  \footnotesize
  \centering
  \rowcolors{2}{}{teal!8}
  \begin{tabular}{lcccc}
    \toprule
    \textbf{Classification method} & \textbf{Growth}             & \textbf{Medflies}           & \textbf{Phoneme}            \\
    \midrule
    qda & 0.581 (0.000) & 0.579 (0.028) & 0.578 (0.034) \\
    mdc & 0.694 (0.103) & 0.526 (0.024) & 0.704 (0.042) \\
    log & \firstcolor{0.961 (0.028)} & 0.575 (0.022) & \firstcolor{0.809 (0.052)} \\
    lda & 0.894 (0.054) & 0.576 (0.016) & 0.599 (0.049) \\
    fnc & 0.735 (0.117) & 0.550 (0.040) & 0.755 (0.065) \\
    flog & 0.926 (0.043) & 0.596 (0.026) & 0.785 (0.053) \\
    flda & 0.939 (0.051) & 0.550 (0.023) & 0.782 (0.046) \\
    fknn & 0.948 (0.036) & 0.539 (0.028) & \secondcolor{0.796 (0.035)} \\
    w\_pp\_tmean & 0.948 (0.041) & \firstcolor{0.611 (0.029)} & 0.790 (0.037) \\
    w\_pp\_mode & 0.935 (0.046) & 0.603 (0.033) & 0.790 (0.041) \\
    w\_pp\_median & \secondcolor{0.952 (0.036)} & \secondcolor{0.610 (0.029)} & 0.794 (0.041) \\
    map\_pp\_tmean & 0.945 (0.046) & 0.606 (0.035) & 0.791 (0.031) \\
    map\_pp\_mode & 0.935 (0.046) & 0.601 (0.040) & 0.779 (0.039) \\
    map\_pp\_median & \secondcolor{0.952 (0.036)} & 0.606 (0.038) & 0.791 (0.031) \\
    \bottomrule
    \toprule
    rkvs+log & 0.929 (0.050) & 0.589 (0.032) & \secondcolor{0.804 (0.046)} \\
    pls+nc & 0.858 (0.090) & 0.558 (0.032) & 0.776 (0.058) \\
    pls+log & 0.945 (0.032) & 0.574 (0.019) & \firstcolor{0.810 (0.043)} \\
    pca+qda & 0.955 (0.030) & 0.576 (0.025) & 0.754 (0.043) \\
    pca+log & \secondcolor{0.958 (0.035)} & 0.562 (0.028) & 0.793 (0.049) \\
    manual+log & 0.932 (0.055) & \firstcolor{0.615 (0.012)} & 0.730 (0.046) \\
    fpca+log & 0.955 (0.033) & 0.561 (0.024) & 0.769 (0.050) \\
    apls+nc & \firstcolor{0.961 (0.028)} & 0.551 (0.030) & 0.781 (0.047) \\
    apls+log & 0.952 (0.036) & 0.562 (0.015) & 0.776 (0.048) \\
    w\_vs\_tmean+log & \firstcolor{0.961 (0.028)} & \secondcolor{0.597 (0.036)} & 0.749 (0.071) \\
    w\_vs\_mode+log & 0.948 (0.046) & \secondcolor{0.597 (0.025)} & \secondcolor{0.804 (0.037)} \\
    w\_vs\_median+log & 0.952 (0.033) & \secondcolor{0.597 (0.021)} & 0.779 (0.049) \\
    map\_vs\_tmean+log & 0.945 (0.046) & 0.592 (0.047) & 0.746 (0.066) \\
    map\_vs\_mode+log & 0.948 (0.046) & 0.592 (0.036) & 0.779 (0.035) \\
    map\_vs\_median+log & 0.939 (0.047) & 0.592 (0.031) & 0.782 (0.050) \\
    \bottomrule
  \end{tabular}
  \caption{Mean accuracy of classifiers (higher is better) for 10 runs with real data sets, one on each column. The corresponding standard errors are shown between brackets.}
\end{table}
\newpage
\FloatBarrier{}

\section{Source code overview}\label{app:source-code}

The Python code developed for this work is available under a GPLv3 license at the GitHub repository \url{https://github.com/antcc/rk-bfr-jump}. The code is adequately documented and is structured in several directories as follows:

\begin{itemize}
  \item In the \texttt{rkbfr\_jump} folder we find the files responsible for the implementation of our Bayesian models, separated according to the functionality they provide. There is also a \texttt{utils} folder inside with some utility files for simulation, experimentation and visualization.
  \item The \texttt{reference\_methods} folder contains our implementation of the functional comparison algorithms that were not available through a standard Python library.
  \item The \texttt{results} folder contains plain text files with the execution times and the numerical results shown in Appendix~\ref{app:execution-times} and Appendix~\ref{app:tables}, as well as \texttt{.csv} files that facilitate working with them.
  \item At the root folder we have a Python script \texttt{experiments.py} for executing our experiments, which accepts several user-specified parameters (such as the number of iterations or the type of data set). There is also a \texttt{setup.py} file to install our method as a Python package.
\end{itemize}

When possible, the code was implemented in a generic way that would allow for easy extensions or derivations. It was also developed with efficiency in mind, so many functions and methods exploit the vectorization capabilities of the \textit{numpy} and \textit{scipy} libraries, and are sometimes parallelized using \textit{numba}. Moreover, since we followed closely the style of the \textit{scikit-learn} and \textit{scikit-fda} libraries, our methods are compatible and could be integrated (after some minor tweaking) with both of them. 

The code for the experiments was executed with a random seed set to the value 2024 for reproducibility. We provide a script file \texttt{launch.sh} that illustrates a typical execution. Lastly, there are \textit{Jupyter} notebooks that demonstrate the use of our methods in a more visual way. Inside these notebooks there is a step-by-step guide on how one might execute our algorithms, accompanied by many graphical representations, and offering the possibility of changing multiple parameters to experiment with the code. In addition, there is also a notebook that can be used to generate all the tables and figures of this document pertaining to the experimental results.


\end{document}